\pgfplotsset{compat=1.8}
\newtheorem{theorem}{Theorem}
\newtheorem{lemma}[theorem]{Lemma}
\newtheorem{proposition}[theorem]{Proposition}
\newtheorem{remark}[theorem]{Remark}
\theoremstyle{definition}
\newtheorem{definition}[theorem]{Definition}
\newtheorem{example}[theorem]{Example}
\newtheorem*{theorem*}{Theorem}
\newtheorem*{corollary*}{Corollary}
\newcommand{\R}{\mathbbm{R}}
\newcommand{\N}{\mathbbm{N}}
\newcommand{\IC}{\mathbbm{C}}
\newcommand{\IN}{\mathbbm{N}}
\newcommand{\IR}{\mathbbm{R}}
\newcommand{\FF}{\mathfrak{F}}
\newcommand{\sfa}{\mathsf{a}}
\newcommand{\cO}{\mathcal{O}}
\renewcommand{\i}{{\mathsf{i}}}
\renewcommand{\j}{{\mathbf{j}}}
\renewcommand{\k}{{\mathbf{k}}}
\newcommand{\chr}{\mathbf 1}
\newcommand{\Id}{\mathbbm{1}}
\newcommand{\Jf}{\mathfrak{J}}
\newcommand{\supp}{\operatorname{supp}}
\newcommand{\dist}{\operatorname{dist}}
\newcommand{\Aj}[2]{A^{#2}_{#1}}
\newcommand{\chiRx}[2]{\chr_{\{\abs{\cdot - #2}\ge#1\}}}
\newcommand{\Cobt}[1]{\tilde C^{#1}_{\mathrm{ob}}}
\newcommand{\cs}[1]{\braket{#1}}
\newcommand{\IP}{\mathbb{P}}
\newcommand{\IE}{\mathbb{E}}
\renewcommand{\d}{\,\mathrm{d}}
\renewcommand{\j}{{\mathbf{j}}}
\renewcommand{\k}{{\mathbf{k}}}
\newcommand{\q}{\mathbf{q}}
\newcommand{\chip}{\chi^\perp}
\newcommand{\chit}{\widetilde \chi}
\newcommand{\chitp}{ \chit^\perp}
\newcommand{\gd}{\gamma_d}
\renewcommand{\:}{\colon}
\newcommand{\dG}{\mathsf{d}\Gamma}
\newcommand{\kb}{\mathbf{k}}
\newcommand{\Dd}{p}
\newcommand{\dDd}[3]{D_{#1} \Dd_{#2}^{#3}}
\newcommand{\nn}[1]{\norm{#1}}
\renewcommand{\d}{\mathsf d}
\DeclareMathOperator*{\argmax}{arg\,max}
\DeclareMathOperator*{\argmin}{arg\,min}
\crefname{equation}{Eq.}{Eqs.}
\Crefname{equation}{Eq.}{Eqs.}
\begin{document}
\title{Learning Coulomb Potentials and Beyond with Free Fermions in Continuous Space}

\author{\begingroup
	\hypersetup{urlcolor=black}
	\href{https://orcid.org/0000-0003-4796-7633}{Andreas Bluhm
		\endgroup}
}
\email{andreas.bluhm@univ-grenoble-alpes.fr}
\affiliation{Univ.\ Grenoble Alpes, CNRS, Grenoble INP, LIG, 38000 Grenoble, France}

\author{\begingroup
	\hypersetup{urlcolor=black}
	\href{https://orcid.org/0000-0001-6459-8046}{Marius Lemm
		\endgroup}
}
\email{marius.lemm@uni-tuebingen.de}
\affiliation{Department of Mathematics, University of T\"{u}bingen, 72076 T\"{u}bingen, Germany}

\author{\begingroup
	\hypersetup{urlcolor=black}
	\href{https://orcid.org/0000-0003-0471-2745}{Tim Möbus
		\endgroup}
}
\email{moebustim@gmail.com}
\affiliation{Department of Mathematics, University of T\"{u}bingen, 72076 T\"{u}bingen, Germany}
\affiliation{Department of Applied Mathematics and Theoretical Physics, University of Cambridge, Cambridge CB3 0WA, United Kingdom}

\author{\begingroup
	\hypersetup{urlcolor=black}
	\href{https://orcid.org/0000-0002-0800-2436}{Oliver Siebert
		\endgroup}
}
\email{osiebert@ucdavis.edu}
\affiliation{Department of Mathematics, University of California, Davis, CA 95616-8633, USA}

\begin{abstract}
	\setlength{\parindent}{0pt}
	The first-principles formulation of quantum mechanics relevant for quantum chemistry and trapped quantum gases involves particles in the continuous space $\mathbb R^d$. We present a unified framework and modular algorithm for learning external potentials $V$ with free-fermion models in the continuum. Compared to the lattice-based approaches, the continuum presents new mathematical challenges: the state space is infinite-dimensional and the Hamiltonian contains the Laplacian, which is unbounded in the continuum and produces an unbounded speed of information propagation. We address these through novel optimization methods and information-propagation bounds in combination with a priori regularity assumptions on the external potential. The resulting algorithm provides a unified and robust approach to learn parametric interactions (e.g., Coulomb potentials or periodic potentials) and general smooth functions. Our results lay the foundation for a scalable and generalizable toolkit to learn   Hamiltonians in continuous space. 
\end{abstract}

\maketitle

\section{Introduction}\label{sec:intro}
Hamiltonian learning starts from an experimentally observable quantum time evolution and aims to characterize the underlying Hamiltonian operator. This is a central  \textit{inverse problem} of quantum mechanics and involves inferring the parameters defining the Hamiltonian in a chosen representation, from experimental data.
Hamiltonian learning plays a central role, as it is relevant both for improved theoretical understanding and for the practical calibration of quantum devices.  

Hamiltonian learning has been extensively studied in quantum spin systems \cite{Bakshi2024,StilckFranca.2024,Huang.2023,Zubida.2021,Bairey2019,Caro2024} and lattice fermions \cite{hayden}, where the local Hilbert space dimension is finite and the Hamiltonian is a bounded operator. For example \cite{Huang.2023,Bakshi2024} achieved the information-theoretic optimal Heisenberg scaling for lattice qubit systems and \cite{StilckFranca.2024},  achieved the standard quantum limit for lattice qubit systems under weak assumptions on the experimental setup. Extensions to bosonic lattice models were performed in \cite{mobus2023dissipation,TongBosons2024,moebus2025heisenberg}.

The first-principle setting of quantum theory is that of particles  living in \textit{continuous space} $\mathbb R^d$. In particular, quantum chemistry usually deals with particles unconfined to a lattice. Simulating such systems has recently gained attention \cite{Metz.2024}.
However, a theory of Hamiltonian learning for particles living in  $\mathbb R^d$ is lacking. The reason is that working on $\mathbb R^d$ creates new fundamental  challenges: First, in contrast to quantum spin systems, our systems are  described by infinite-dimensional state spaces and unbounded interactions. Second, in contrast to bosonic lattice gases (also known as ``CV'' systems), even the kinetic energy of a single particle, which is represented by the Laplacian $-\Delta$, is now unbounded.

In this work, we develop the first mathematical framework, as well as a concrete algorithm, for Hamiltonian learning of external potentials in continuous space. Working in the continuum allows for a natural representation of quantum states and dynamics, free from the discretization artifacts sometimes introduced by ad-hoc lattice models. For example, continuous space preserves rotational and full translational symmetries and supports all plane wave solutions, providing a universal framework to interpret experimental data. Furthermore, learning in continuous space avoids artificial complications such as  fermion doubling and lattice-induced boundary effects, making it particularly well-suited for modeling systems where spatial resolution and smooth potential landscapes are essential. 
Concretely, we consider a system of non-interacting fermions each described by the Hamiltonian
\begin{align}\label{eq:onebody-hamiltonian}
	h = -\Delta + V \, ,
\end{align}
with unknown external potential $V:\mathbb R^d\to \mathbb R$ to be determined. Our learning algorithm can handle both parameter learning when $V$ is taken from a specific class of functions (e.g.,\ Coulomb potentials) or learning general functions $V$ of sufficient regularity.

For example, we may consider a system of $K$ ions of unknown electric charges $\lambda_1,\ldots,\lambda_K\geq 0$ and unknown positions $y_1,\ldots,y_K\in \mathbb R^3$. This creates the unknown external potential
\begin{equation*}
	V(x) = \sum_{k=1}^K \frac{\lambda_k}{\|x - y_k\|}.
\end{equation*}
We then study the following learning task: Assume we can prepare an initial state of non-interacting fermions subjected to $-\Delta+V$, perform time evolution and measure. Using this information, we aim to estimate  the parameters $\lambda_1,\ldots,\lambda_K\geq 0$ and positions $y_1,\ldots,y_K\in \mathbb R^3$. The motion of the ions can be neglected by the time-dependent Born-Oppenheimer approximation; see, e.g., \cite{hagedorn1980time}.
The assumption that the fermions are non-interacting is only approximate when they are electrons, as these exhibit Coulomb repulsion amongst each other. The corrections are small, e.g., for the case of strongly charged ions.   Alternatively, our algorithm can be modified to only involve a single pair of distant fermions each time, so that their interaction is heavily suppressed, but this requires repeating the experiment more often.

 The central analytical challenge of our work is the regularity constraints imposed by the Laplacian.
Learning the ion positions in the Coulomb potential introduces another novel challenge we need to address:  nonlinear parameter dependencies. Addressing these difficulties requires nonlinear numerical algorithms as well as a careful perturbation analysis of the Coulomb potential.


Beyond the Coulomb case, we also consider learning broader classes of potentials, including smooth potentials and Fourier potentials, again by studying the time evolution of suitably prepared non-interacting fermions. All of these models can be addressed within a unified data acquisition framework using our modularly designed algorithm. Here, parallelization exponentially reduces the sample complexity of the algorithm and allows for an efficient approximation of the considered potential. Parallelization is enabled by locality estimates, which are more challenging in the continuum and which we obtain through the IMS localization formula \cite[Theorem 3.2]{cycon1987schrodinger} and information-propagation bounds for continuum systems \cite{hinrichs2023lieb}. Also a suitable choice of physical states allows to avoid introducing any reference system imposed by the superselection rule as is common for fermionic learning protocols \cite{hayden}. Finally, the  main reason why we choose to work with free (non-interacting) fermions is precisely that we aim for parallelization, which requires strong control on information propagation (i.e., Lieb-Robinson bounds). In the continuous space $\mathbb R^d$, Lieb-Robinson bounds are currently underdeveloped. Using \cite{gebert2020lieb,hinrichs2023lieb}, it  would be possible to treat smeared-out two-body interactions, but this would still lead to bounds that grow with system size, because the bounds  treat the time evolution of individual fermionic creation and annihilation operators, but need to be applied to products of those. We emphasize that possible future improvements on information propagation bounds in $\mathbb R^d$ can be incorporated into our modular algorithm to cover also interacting fermions uniformly in the system size.

In detail, we first discuss the preliminaries and the protocol for acquiring the necessary data in \Cref{sec:prelim}. Based on this data, we present a numerical algorithm to learn single and few-Coulomb potentials in \Cref{sec:Coulomb}. Afterwards, the same data is used to learn any Lipschitz function or potentials that can be well-approximated by a set of linearly independent functions, like trigonometric functions, as shown in \Cref{sec:general-learning}. Finally, detailed proofs and possible technical extensions are outlined in \Cref{appx-sec:Coulomb}–\ref{appx-sec:technical-details}.

\section{Preliminaries and protocol}\label{sec:prelim}
As mentioned in the previous section, our Hamiltonian learning algorithm addresses the Schr\"{o}dinger operator acting on the infinite-dimensional space $L^2(\mathbb{R}^d)$,
\begin{align*}
	h = -\Delta + V
\end{align*}
for various classes of external potentials $V:\mathbb R^d\to\mathbb R$. Our objective is to approximate the unknown external potential $V$ with precision $\varepsilon$ and success probability $(1-\delta)$, by means of the time evolution of the fermionic many-body Hamiltonian in second quantization (see \Cref{appx-sec:technical-details} for the mathematical details), i.e.,
\begin{align}
	\label{eq:Hamiltonian physically}
	H = \int_{\mathbb{R}^d} \left(\nabla a_x^\dagger \nabla a_x + V(x) a_x^\dagger a_x\right)\, \mathrm{d}x\,.
\end{align}
To achieve this, we assume that it is possible to prepare locally defined initial states $\psi_0$ and to measure the number of fermions in predefined regions of the evolved state $e^{-it H} \psi_0$ after short time intervals. By this procedure, we obtain estimators for local averages of $V$ in small boxes (see Eq.~(\ref{eq:data-def})). From this data, we are able to learn various potentials, such as the Coulomb potential in three dimensions. For that, we translate the local averages into positions and charges via Newton's shell theorem.

With respect to the Coulomb potential, we start with a single Coulomb center and then provide an algorithm to extend to multi-Coulomb potentials. The basic idea is to first estimate the position of the particle, and then consider four appropriately chosen points in its neighborhood to approximate the charge number and position up to a given precision $\varepsilon$ with probability $1 - \delta$. Here, the lack of any information-propagation bounds due to the singular behavior of the Coulomb potential always induces a quadratic dependence of the sample complexity on the number of Coulomb centers. Owing to the free-fermion assumption of the model, we therefore generate the data sequentially to avoid errors in the free-fermion approximation. For the multi-Coulomb potential, the charge numbers can be estimated by a diagonally dominant linear system after a rough estimation of the positions. Then, the mutual influence on the output data is estimated and iteratively improved. In this way, the single-Coulomb case can be applied to the corrected and normalized data to find the positions and charge numbers up to a precision $\varepsilon$ and with probability $1 - \delta$. Here, we would like to emphasize that our direct treatment of the nonlinearity allows us to achieve the precision $\varepsilon$ independently of the discretization of the continuous space.

If we restrict our attention to the charge numbers, the reconstruction problem reduces to inverting the overlap matrix between the Coulomb potentials and the prepared states. This motivates the study of a broad class of linear systems. In particular, our modular algorithm is capable of learning not only Coulomb interactions, but also any smooth function or Fourier function. To achieve a good approximation, many states in different regions of space are required. To obtain an efficient algorithm, we parallelize our learning algorithm using novel information-propagation bounds \cite{hinrichs2023lieb} in combination with regularity assumptions on the input states. These regularity assumptions are also related to an interesting observation --- namely, Heisenberg’s uncertainty relation affecting the sample complexity.

Next, we provide more details on the state preparation, the measurement procedure, and the finite-difference scheme required for our protocol.

In more detail, we aim to obtain a good estimator of the potential on the box $[0,L]^d$, typically for $d=3$, but also in higher dimensions, with respect to the $L^\infty$-norm, where $L > 0$ is fixed. The measurement is carried out on a partition of the box into $m^d$ smaller boxes, each of side length $\ell = L/m > 0$ (see \Cref{fig:initial-states} for an illustration).

Following \Cref{protocol}, we allow the preparation of an even number of fermions, each described by a density function supported on one of the boxes (see \Cref{fig:initial-states})
\begin{equation*}
	B_{\j} \coloneqq \ell([0,1]^d + \j) \,,
\end{equation*}
with indices $\j \in \{0, \ldots, m-1\}^d\subset \mathbb{N}_0^d$. The requirement to prepare an even number of fermions ensures consistency with the fermionic superselection rule (see \cite{Vidal2021} for an overview). In particular, we group the set of boxes into triples, within which we prepare two fermions --- each in one of the three boxes. This avoids the use of a register system of size $m^d$. The desired expectation values are then estimated over three rounds (see \Cref{fig:initial-states}) by measuring the number of particles in each box after short time intervals.

\textbf{State preparation:} For the initial state we create fermions in the boxes $B_{\j}$, whose wave functions will be chosen as the rescaling of a fixed profile function $f$. We assume that $f$ is continuous, supported inside $[0,1]^d$ and $L^2$-normalized. Then
\begin{align}
	\label{eq:f_j defn}
	f_\j(x) \coloneqq \ell^{-d/2} f(\ell^{-1}x-\j),
\end{align}
is the corresponding rescaled and $L^2$-normalized profile inside the box $B_\j$. This choice is crucial for the local averages mentioned before. By taking local measurements, the derivative of the evolved probability distribution at $t=0$ is given by the local average
\begin{equation}\label{eq:data-def}
	\omega_{\j} := \int_{\R^d} \abs{ f_\j(x)}^2 V(x) \d x
\end{equation}
up to expectations of the Laplacian in the state $f_j$ (see \Cref{appx-subsec:finite-diff} for details). Estimators for these data points are the starting point for characterizing various models (see \Cref{sec:Coulomb} and \ref{sec:general-learning}). In order to construct suitable initial states, we group the boxes into triples (see~\Cref{fig:initial-states}). To this end, we construct a collection of disjoint subsets of $\{0:m-1\}^d\coloneqq\{0,...,m-1\}^d$ of cardinality three --- disjoint triples --- denoted by $\Jf$, where we assume that the three boxes form a connected set. For each element of a triple, we fix a label $\alpha \in \{0,1,2\}$. For brevity, when we write $j \in \Jf$, we refer to the triple whose zeroth box is $B_j$ --- labeled by $j$. This allows us to identify $\Jf \subset \{0:m-1\}^d$. Using this labeling of the triples, we denote the zeroth, first, and second box, as well as the corresponding initial functions, by
\begin{align*}
	B^\alpha_\j\qquad\text{and}\qquad f^\alpha_{\j}\,.
\end{align*}
We then prepare a set of three initial states
\begin{align}\label{eq:psi0_defn}
	\psi^{\alpha\beta}_0 = 2^{-\abs{\Jf} / 2}  \prod_{\j \in \Jf} (\Id + a^*(f^\alpha_{\j})  a^*(f^\beta_{\j})) \Omega,
\end{align}
where $(\alpha, \beta) \in\{(0,1), (0,2), (1,2)\}$.
\begin{figure}[ht!]
	\includegraphics[scale=0.8]{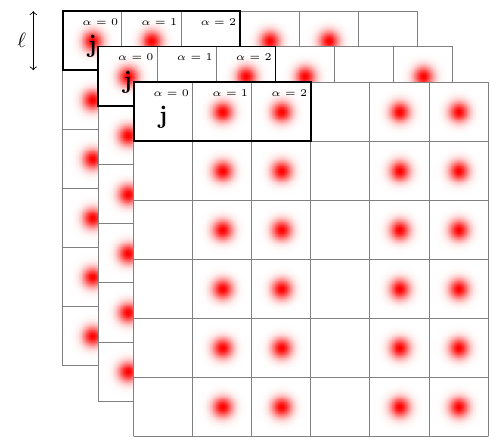}
	\caption{Depiction of the initial states for the three different pairs $(\alpha,\beta) = (0,1),(0,2),(1,2)$ on a grid of 6 $\times$ 6 boxes. The boxes are divided into triples (see first bold box, indexed by the first box at $\j =(0,0)$), and for each triple we consider the superposition of the vacuum with the state containing particles with rescaled profile $f$ in the boxes $\alpha$ and $\beta$ (shown in red).}
	\label{fig:initial-states}
\end{figure}
Before the measurement process, we apply an additional controlled displacement (q.v.~\cite{hayden}),
\begin{align*}
	W^{\alpha\beta}_\j = e^{\i \frac{\pi}{4} \bigl(a^*(f_\j^\alpha) a^*(f_\j^\beta) + a(f_\j^\beta) a(f_\j^\alpha)\bigr)},
\end{align*}
whose transpose acts on the vacuum as follows (cf. \Cref {th:controlled-displacement} for the short detailed computation):
\begin{equation}\label{eq:controlled-displacement}
	(W^{\alpha\beta}_\j)^* \ket{\Omega} = 2^{-1/2}\bigl(\Id  - \i  a^*(f^\alpha_{\j}) a^*(f^\beta_{\j})\bigr) \ket{\Omega}\,.
\end{equation}

\textbf{Measurement:} Depending on a priori knowledge of the potential --- Coulomb potential or smooth potentials --- we perform the measurements either sequentially or simultaneously for all triples $\mathbf{j}$ with fixed $\alpha, \beta$. Specifically, we test whether there are no fermions inside the two boxes $\alpha$ and $\beta$ of each triplet. This corresponds to measuring with respect to the projectors
\begin{equation}\label{eq:pvm}
	P^{\alpha\beta}_\mathbf{j} = \ketbra{\Omega}{\Omega}_{\FF(L^2(\widetilde B^{\alpha\beta}_\j))} \otimes \Id_{\mathrm{Rest}} \,,
\end{equation}
where we choose $\widetilde B^{\alpha\beta}_\j$ in the Coulomb case to be the union of the boxes $B^\alpha_\mathbf{j}$, $B^\beta_\mathbf{j}$, and all neighboring boxes. In all other cases we set $\widetilde B^{\alpha\beta} := B^\alpha_\mathbf{j} \cup B^\beta_\mathbf{j}$.
Further details on the rigorous construction of these projectors as well as other mathematical details about the second quantization formalism in $\IR^d$ are provided in \Cref{appx-sec:technical-details}.

With all the mentioned steps, we construct the following protocol:\\
\begin{algorithm}[H]
	\caption{Learning Fermions}\label{protocol}
	\KwIn{$T\in\N$, $B_{\j}$ for all $\j\in\{0:m-1\}^d$, $\Jf\subset\{0:m-1\}^d$}
	\KwOut{$\widehat{\omega_{\j}}$ for all $\j\in\{0:m-1\}^d$}
	\BlankLine
	\For{$(\alpha,\beta)\in\{(0,1), (1,2), (0,2)\}$}{
		\For{$k\in\{1:T\}$}{
			Prepare states $\psi_0^{\alpha\beta}$;\\
			Run $e^{-itH}$ on $\psi_0^{\alpha\beta}$ up to time $t=\cO(\varepsilon)$;\\
			\For{$\j\in\Jf$}{
				Perform measurements $P_{\j}^{\alpha\beta}$\\
				$\rightarrow$ get outcomes $Y_\j^{\alpha\beta,(k)}\in\{0,1\}$;
			}
		}
	}
	\Return{$\widehat{\omega}_{\j}^{\alpha}$ defined by $$\widehat{\omega}_{\j}^{\alpha}=\sum_{(\beta,\gamma)}\sigma_{\beta\gamma}^\alpha\sum_{k=1}^T\frac{Y_\j^{\beta\gamma,(k)}-1}{2tT}-  \cs{f_\j^{\alpha},(-\Delta) f_\j^{\alpha}}$$
		with $0\leq \beta < \gamma \leq 2$ and $\sigma_{\beta\gamma}^\alpha=1-2\cdot1_{\alpha\in\{\beta,\gamma\}}$ }
\end{algorithm}
which give a provably good estimator as shown in the following result:
\begin{theorem}\label{thm:sampling}
	Let $V \: \R^d \longrightarrow \R$ be  $4(d+1)$ times differentiable with bounded derivatives, and consider a grid consisting of boxes of side length $\ell=L/m$. Given a set of triples $\Jf$, \Cref{protocol} constructs estimators $\widehat{\omega}_{\j}^{\alpha}$ for the local averages
	\begin{align*}
		\omega_{\j}^{\alpha}=\cs{f_\j^\alpha, V f_\j^\alpha}, \quad \j \in \Jf,\,\alpha\in\{0,1,2\}
	\end{align*}
	with precision $\varepsilon$ and probability of success $(1-\delta)\in(0,1)$ after short time steps $t=\cO(\mathrm{poly}(\ell)\varepsilon)$. 
	The overall total evolution time of the algorithm is
	\begin{equation*}
		T_{\operatorname{c}}=\cO\left(\mathrm{poly}(\ell^{-1})  \varepsilon^{-3} \ln(\frac{3\abs{\Jf}}{\delta}) \right) \,.
	\end{equation*}
	If $\abs{\Jf}$ = 1, i.e., if only a single triple is learned, then the regularity assumption on V can be removed. It suffices to assume that V is relatively bounded with respect to $-\Delta$, which includes the Coulomb potential in $d = 3$.
\end{theorem}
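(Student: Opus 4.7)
My plan splits the proof into (i) a bias analysis showing that $\mathbb{E}[\omega^{(T)}_\j(t)]$ is an $\cO(t)$-perturbation of an explicitly known affine combination of the target local averages, (ii) a concentration-of-measure step via Hoeffding's inequality, and (iii) a budget optimization balancing $t$ and the repetition count $T$, subject to a union bound over the $3|\Jf|$ measurements.

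\textbf{Bias via Duhamel.} I would let $p^{\alpha\beta}_\j(t) := \mathbb{E}[Y^{\alpha\beta,(k)}_\j] = \langle \psi_t^{\alpha\beta}, P^{\alpha\beta}_\j \psi_t^{\alpha\beta} \rangle$, where $\psi_t^{\alpha\beta}$ is the evolved state after preparation, $e^{-itH}$, and the pre-measurement displacement $W^{\alpha\beta}$ of \eqref{eq:controlled-displacement}; the latter is engineered precisely so that $p^{\alpha\beta}_\j(0) = 1$, making $(Y-1)/t$ non-singular as $t\to 0$. A first-order Duhamel expansion then gives
\begin{equation*}
p^{\alpha\beta}_\j(t) = 1 + t\, \partial_t p^{\alpha\beta}_\j(0) + R^{\alpha\beta}_\j(t),
\end{equation*}
and a direct calculation (see \Cref{appx-subsec:finite-diff}) identifies $\partial_t p^{\alpha\beta}_\j(0)$ as an affine function of $\langle f_\j^\alpha, V f_\j^\alpha\rangle$ and $\langle f_\j^\beta, V f_\j^\beta\rangle$ with coefficients depending only on the fixed profile $f$ through Laplacian-type expressions. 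Running the protocol for the three pairs $(\alpha,\beta) \in \{(0,1),(0,2),(1,2)\}$ and inverting the resulting $3 \times 3$ linear system then isolates each individual $\langle f_\j^\alpha, V f_\j^\alpha\rangle$.

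\textbf{Remainder and concentration.} I need to show $|R^{\alpha\beta}_\j(t)| \le C\, t^2$ with $C = \mathrm{poly}(\ell^{-1})$, reflecting $\|H f_\j^\alpha\| = \cO(\ell^{-2})$ under the rescaling \eqref{eq:f_j defn}. For $|\Jf| = 1$, a second-order Duhamel expansion reduces the bound to finiteness of $\|H^k \psi_0^{\alpha\beta}\|$ for $k \le 2$, which holds whenever $V$ is relatively $(-\Delta)$-bounded, so no smoothness of $V$ is required. For general $|\Jf|$, the $H$-action couples many triples, and I would invoke the continuum fermionic information-propagation bounds of \cite{hinrichs2023lieb} --- which require $V$ to be $4(d+1)$-times differentiable with bounded derivatives --- to produce a quasi-local approximation of $e^{itH} P^{\alpha\beta}_\j e^{-itH}$ seeing $H$ only in a neighbourhood of $B^\alpha_\j \cup B^\beta_\j$. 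On the statistical side, the samples $(Y^{(k)}-1)/t \in [-1/t, 0]$ have bounded range $1/t$, so Hoeffding's inequality together with a union bound over the $3|\Jf|$ estimators show that $T = \cO(\varepsilon^{-2} t^{-2} \log(|\Jf|/\delta))$ repetitions suffice to control the sampling fluctuations by $\varepsilon/2$ with probability at least $1-\delta$.

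\textbf{Balancing and main obstacle.} Choosing $t = \cO(\mathrm{poly}(\ell)\, \varepsilon)$ balances the $\cO(\mathrm{poly}(\ell^{-1})\, t)$ bias against the target precision, and the total evolution time becomes $T_{\operatorname{c}} = 3 T t = \cO(\mathrm{poly}(\ell^{-1})\, \varepsilon^{-3} \log(3|\Jf|/\delta))$, as claimed. The main analytic obstacle is the $t^2$-remainder bound in the parallel regime: the unboundedness of $-\Delta$ on $L^2(\R^d)$ rules out any naive Lieb--Robinson argument, and the novel fermionic continuum propagation bounds of \cite{hinrichs2023lieb} together with the smoothness of $V$ are essential to decouple the measurement in one triple from the $H$-contributions arising in all the others uniformly in $|\Jf|$; in the single-triple relaxation this difficulty disappears, which is why the hypotheses on $V$ can be dropped to mere relative boundedness.
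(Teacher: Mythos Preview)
Your proposal is correct and follows essentially the same route as the paper: Taylor/Duhamel expansion of $\Dd^{\alpha\beta}_\j(t)$ with the first derivative at $t=0$ yielding the local averages (\Cref{th:first_order}), a second-derivative bound handled directly via relative boundedness when $|\Jf|=1$ (\Cref{th:second order one box}) and via the one-body continuum propagation bounds of \cite{hinrichs2023lieb} combined with IMS localization when $|\Jf|>1$ (\Cref{th:second order}), followed by Hoeffding plus a union bound over the $3|\Jf|$ estimators, and finally the balance $t=\cO(\mathrm{poly}(\ell)\varepsilon)$. One small inaccuracy: $\Dd^{\alpha\beta}_\j(0)=\tfrac{1}{2}$, not $1$, but this does not affect the argument since only the known constant value is needed.
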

\begin{remark}[Notation]
	The term $\widehat{\omega}_{\j}$ is used if there is a triple labeled by $\j'$ including $\j$ such that an estimator is constructed by the above algorithm and theorem.
\end{remark}
The proof and a discussion of the related methods are given in \Cref{sec:methods}. At this point, we would like to highlight the interplay between $\ell$ and $\varepsilon$ --- a phenomenon characteristic of the continuous-space learning case:
\begin{remark}[Precision vs.~pollution]
	If the potential is highly varying over the domain $[0, L]^d$, even classical interpolation requires a large number of points to achieve an accurate approximation. In our setting, this difficulty is further compounded by the uncertainty principle of quantum mechanics. Accurately probing such varying potentials necessitates constructing many particles within a confined region. Consequently, the fermionic wave packets increasingly resemble Dirac delta distributions, directly influencing the total evolution time as reflected in our bound.
\end{remark}

\section{Learning the Coulomb potential}\label{sec:Coulomb}
The presented learning algorithm is designed in a modular fashion building on the presented protocol in \Cref{protocol}, enabling the exploration of various quantum systems through a unified measurement procedure. As outlined in the introduction, our primary physical systems of interest are governed by the Coulomb potential. Prior to delving into the algorithms themselves, we highlight two foundational features essential for ensuring the stability of the learning process:
\begin{itemize}
	\item the \textit{harmonicity} of the Coulomb potential away from the singularities and
	\item \textit{Newton's shell theorem}.
\end{itemize}
From a physical point of view, and in order to make the above tools work, we assume that the profile function $f$, which defines our prepared states, is radially symmetric. This means there exists a real-valued function $g$ supported on $[0,1]$ such that
\begin{equation*}
	f(x)=g\!\left(\bigl\|x-\tfrac{1}{2}(1,\ldots,1)^T\bigr\|\right)
\end{equation*}
for all $x\in[0,1]^d$. A natural example is given by the so-called bump functions, i.e.,
\begin{equation*}
	g(y)=
	\begin{cases}
		e^{\tfrac{1}{y^2-\frac{1}{4}}} & \text{for } |y|\leq \frac{1}{2}, \\[6pt]
		0                              & \text{for } |y|>\frac{1}{2}
	\end{cases}\,.
\end{equation*}
With this, we define the states as before by
\begin{equation*}
	f_{\j}(x)=\ell^{-d/2}\,g\!\left(\bigl\|\ell^{-1}x-\tfrac{1}{2}(1,\ldots,1)^T-\j\bigr\|\right)
\end{equation*}
for all $x\in[0,L]^d$. Note that the midpoint of such a radially symmetric function is given by
\begin{equation}\label{eq:symmetry-points}
	p_{\j}\coloneqq \ell\Bigl(\j+\tfrac{1}{2}(1,\ldots,1)^T\Bigr)\,,
\end{equation}
with $(\cdot)^T$ denoting the transpose of the vectors. For the evaluation of Newton’s shell theorem, the symmetry is important.

As a first step, we introduce a robust estimator for both the charge number and the spatial location of a single Coulomb potential.

\subsection{Single Coulomb potential}
The single Coulomb potential is given by
\begin{equation*}
	V(x)=\frac{\lambda}{\nn{x-y}}
\end{equation*}
for a charge number $\lambda\in\R_{\geq0}$, location $y\in[0,L]^{3}$ and $x\in\R^3$. As stated in \Cref{thm:sampling}, we have access to estimators on the local averages $\omega_{\j}$, which satisfy
\begin{equation*}
	\omega_{\j}=\cs{f_{\j},V f_{\j}}=\int_{\R^3}|f_{\j}(x)|^2\frac{\lambda}{\nn{x-y}}dx=\frac{\lambda}{\nn{p_{\j}-y}}
\end{equation*}
if $y\notin B_{\j}$, due to the well-known \textit{Newton's shell theorem} (see \cite[Thm.~9.7]{LiebLoss.2001} for details). Note that $p_\j$ is the middle point of the box $B_{\j}$, i.e., $p_{\j}=\ell(\j+\tfrac{1}{2}(1,\ldots,1))$. This identity allows us to provide an estimator for the charge number and location:
\begin{theorem}\label{thm:single-Coulomb}
	Let $\varepsilon>0$ be the given precision and let $V$ be a single Coulomb potential
	\begin{equation*}
		V:\R^3\to\R, \quad V(x)=\frac{\lambda}{\|x-y\|}
	\end{equation*}
	for unknown $\lambda$ with $|\lambda|\in[\Lambda_*,\Lambda^*]\subset\R_+$ and $y\in[0,L]^3$. Then, using \Cref{protocol}, we obtain estimators $\widehat{\lambda}$ and $\widehat{y}$ so that, with success probability at least $(1-\delta)\in(0,1)$,
	\begin{equation*}
		|\widehat{\lambda}-\lambda| \leq \varepsilon,\qquad\|\widehat{y}-y\|\leq \varepsilon,
	\end{equation*}
	requiring a total evolution time of
	\begin{equation*}
		T_{\operatorname{c}}=\cO\Bigl(\varepsilon^{-3}\ln\biggl(\frac{1}{\delta}\biggr)\Bigr)\,.
	\end{equation*}
\end{theorem}

\begin{proof}[Proof-sketch]
	Define $\ell=L/m$ with $m=8$. Due to the identity of Newton's shell theorem (see \cite[Thm.~9.7]{LiebLoss.2001}), the proof reduces to solving a system of non-linear equations with four degrees of freedom:
	\begin{equation*}
		\omega_{\j}=\frac{\lambda}{\nn{p_{\j}-y}}\qquad\text{or}\qquad\frac{\lambda^2}{\omega_{\j}^2}=z_\j
	\end{equation*}
	for $\j\in\{0:7\}^3$ and $z_\j=\nn{p_{\j}-y}^2$. Next, we define
	\begin{equation}\label{eq:single-Coulomb-redefinition}
		\begin{aligned}
			p_{\mathbf{i}\j}    & =\frac{1}{2}(p_{\mathbf{i}}-p_{\j})                    \\
			c_{\mathbf{i}\j}    & =(\|p_\mathbf{i}\|^2-\|p_\j\|^2)                       \\
			\eta_{\mathbf{i}\j} & =(\frac{1}{\omega_\mathbf{i}^2}-\frac{1}{\omega_\j^2})
		\end{aligned}
	\end{equation}
	which results in
	\begin{equation*}
		c_{\mathbf{i}\j}=\eta_{\mathbf{i}\j}\lambda^2+\cs{p_{\mathbf{i}\j},y}\,.
	\end{equation*}
	To solve the above system of linear equations, we require maximally eight points with index $\j$ labeled by $j\in\{0:7\}$ for simplicity. Then, these points and the corresponding estimated local averages reduce to the linear system
	\begin{equation*}
		c\coloneqq
		\begin{pmatrix}
			c_{12} \\
			c_{34} \\
			c_{56} \\
			c_{78}
		\end{pmatrix}
		=
		\begin{pmatrix}
			\eta_{12} &  & p_{12}^T & \\
			\eta_{34} &  & p_{34}^T & \\
			\eta_{56} &  & p_{56}^T & \\
			\eta_{78} &  & p_{78}^T &
		\end{pmatrix}
		\begin{pmatrix}
			\lambda^2 \\
			y_1       \\
			y_2       \\
			y_3
		\end{pmatrix}
		\eqqcolon A\widetilde{y}\,.
	\end{equation*}
	Here it is important to mention that the matrix and the vector $c$ are fully characterized. As before, note that $(\cdot)^T$ denotes the transpose of the vectors. The algorithm then looks as follows: First the area $[0,L]^3$ is divided into $8^3=512$ boxes and partitioned (except for two boxes) in triples as explained in \Cref{fig:initial-states}. Then, \Cref{protocol} is executed for any triple in  $\Jf$ separately. Via this sequential scheme, we have learned all weighted averages $\omega_{\j}$ for the partition of $512$ boxes. From there, we calculate the variables defined in \Cref{eq:single-Coulomb-redefinition} and appropriately choose four vectors $p_{\mathbf{i}\j}$ so that the linear system is well-conditioned, i.e., errors propagate linearly (see \Cref{appx-sec:Coulomb} for details).
\end{proof}

\begin{figure}
	\centering
	\includegraphics[width=0.95\linewidth]{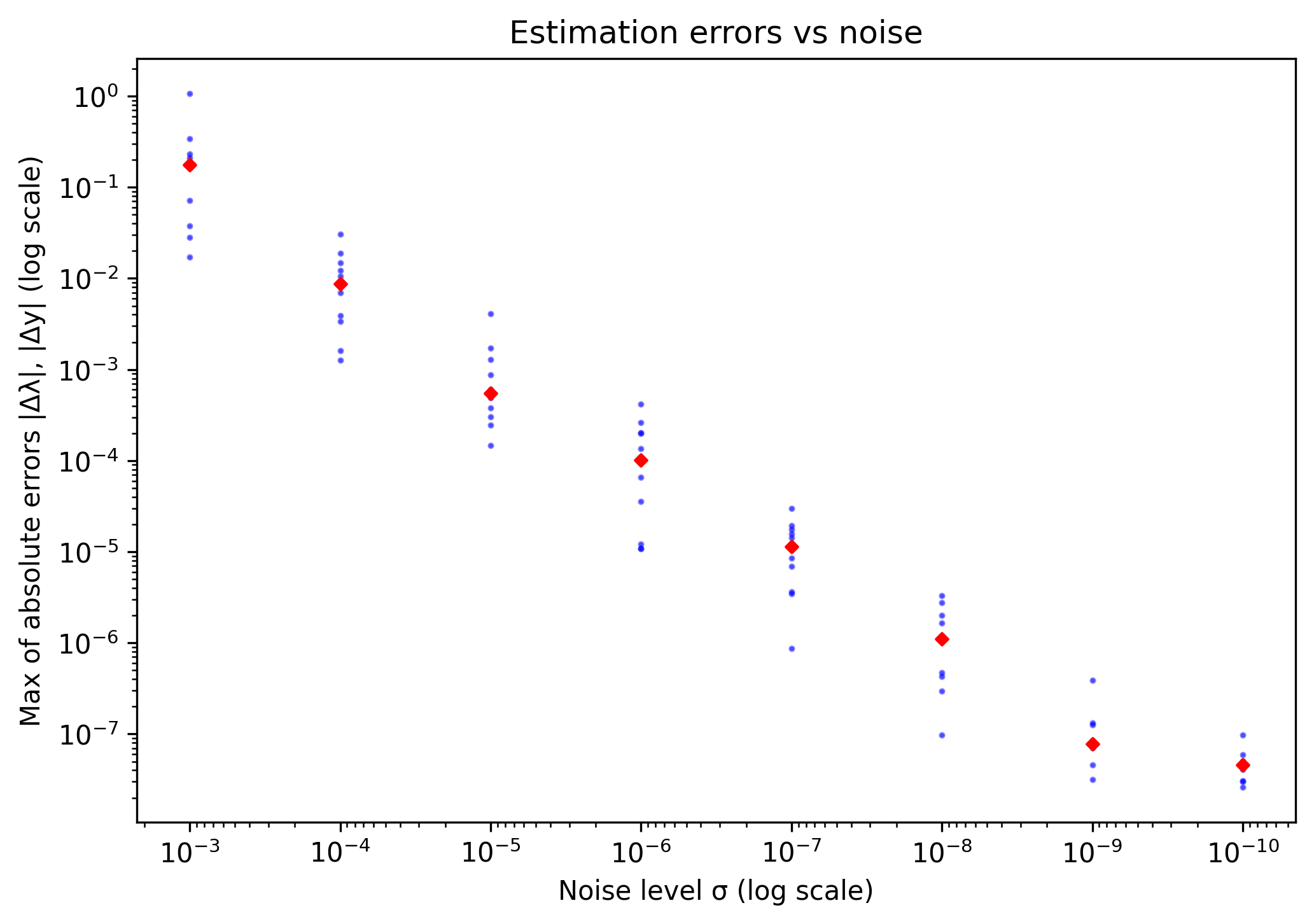}
	\caption{\justifying
		Absolute errors of the proposed single Coulomb post-processing algorithm for decreasing noise levels, for local averages generated by direct classical quadrature of a random Coulomb potential and perturbed by Gaussian noise with decreasing standard deviation $\sigma$ with grid $m=8$ (see \Cref{appx-sec:numerics}, \href{https://github.com/MitTimM/learning_coulomb_potentials_in_continuous_space}{GitRepo}, and \cite{DataZenodo.2026}).
	}
	\label{fig:single-coulomb-error-dependence}
\end{figure}

\begin{remark}
	Beyond the given Coulomb potential in $3$ dimensions, one can extend the above result to any radially symmetric potential. This is achieved by a simple optimization method and finding the approximate intersection of $d+1$ spheres for dimension $d$ (see \Cref{appx-rmk:extension-single-mode}).

	The partition of $[0,L]^3$ into boxes can be refined in various ways; here it was simply chosen to be $8^3$ in order to simplify the notation and the argument.
\end{remark}

\subsection{Multi-Coulomb potential}
In the next section, we extend the single Coulomb algorithm to a few Coulomb centers.
\begin{theorem}\label{thm:multi-Coulomb}
	Let $L$, $\Lambda_*$, $\Lambda^*, y_*>0$, $K\in\N$ be constants defining a multi-Coulomb potential $V$ by
	\begin{equation*}
		V:\R^3\to\R, \quad V(x)=\sum_{k=1}^K\frac{\lambda_k}{\|x-y_k\|}
	\end{equation*}
	for unknown $K\in\N$, $\lambda_k\in[\Lambda_*,\Lambda^*]$ and $y_k\in[0,L]^3$ satisfying $\|y_k-y_{k'}\|\geq y_*$ for all $k\neq k'\in\{1:K\}$. For a fixed grid size
	\begin{equation*}
		\ell \leq \frac{L}{m}\quad\text{with}\quad m=\Bigl\lceil L\mathrm{poly}(K, y_*^{-1},y_*,\Lambda_*^{-1},\Lambda^*)\Bigr\rceil\,,
	\end{equation*}
	we achieve estimators $\widehat{\lambda_i}$ and $\widehat{y_i}$ such that, with success probability at least $(1-\delta)\in(0,1)$,
	\begin{equation*}
		\max_{i\in\{1:K\}}\{|\widehat{\lambda}_i-\lambda_i|, |\widehat{y}_i-y_i|\}\leq \varepsilon,
	\end{equation*}
	requiring a total evolution time of
	\begin{equation*}
		T_{\operatorname{c}}=\cO\Bigl(\mathrm{poly}(K, y_*^{-1},y_*,\Lambda_*^{-1},\Lambda^*)\varepsilon^{-3}\ln\biggl(\frac{1}{\delta}\biggr)\Bigr)\,.
	\end{equation*}
\end{theorem}
\begin{proof}[Proof sketch]
	The detailed proof can be found in \Cref{appx-thm:multi-Coulomb}. In the first step, we apply \Cref{thm:sampling} sequentially to obtain estimators for the local averages $\widehat{\omega_{\j}}$ for all $\j \in \{0:m-1\}^3$, requiring a sample complexity of
	\begin{equation*}
		T_c = \mathcal{O}\!\left(\mathrm{poly}(\ell^{-1})\varepsilon_\omega^{-3}\ln\!\frac{1}{\delta}\right)\,.
	\end{equation*}
	Note that $\ell$ is constant in $\varepsilon_\omega$ and $\varepsilon_\omega$ depends linearly on $\varepsilon$ (see \Cref{appx-eq:varepsilon-multi-coulomb}). Next, we obtain coarse estimators ${p}_{\hat{\j}_k}$ for the locations of the Coulomb centers with precision
	\begin{equation}\label{eq:first-approx-multi-coulomb-location}
		\|{p}_{\hat{\j}_k} - y_k\| \leq \sqrt[3]{\ell}
	\end{equation}
	and probability $(1 - \delta)$ for all $k \in \{1:K\}$.

	Under the assumption on $\ell$, we construct a linear, diagonally dominant system for the charge numbers $\lambda_k$ (compare to \cite[Sec.~6]{Franca.2026}), which allows us to obtain estimators $\widehat{\lambda}_k$ satisfying
	\begin{equation}\label{eq:first-approx-multi-coulomb-charge}
		|\widehat{\lambda}_{k} - \lambda_k| \leq 2K\Lambda^*y_*^2\frac{\sqrt[3]{\ell}}{\ell^{3/12}}+y_*\varepsilon_\omega
	\end{equation}
	with probability $(1 - \delta)$ for all $k \in \{1:K\}$ (see \Cref{appx-eq:lambda-diag-dom}). Such systems possess the important properties of stability and efficient solvability (see \cite{fawzilecture.2022numerical}).

	Finally, we use the approximated charge numbers to compute estimators for scaled, isolated local averages
	\begin{equation*}
		\omega_\j^{(k')} \coloneqq\sqrt[3]{\ell}\int_{\mathbb{R}^3} |f_\j(x)|^2 \frac{1}{\|x - y_{k'}\|} \, \mathrm{d}x
	\end{equation*}
	in the neighborhood of $y_k$. This enables us to iteratively refine the upper bound given in \Cref{eq:first-approx-multi-coulomb-location} through repeated applications of \Cref{thm:single-Coulomb} and for isolated local averages and for \Cref{eq:first-approx-multi-coulomb-charge} by solving diagonally dominated linear systems stepwise improved by the better estimators on the Coulomb centers.
\end{proof}

\begin{figure}
	\centering
	\includegraphics[width=0.95\linewidth]{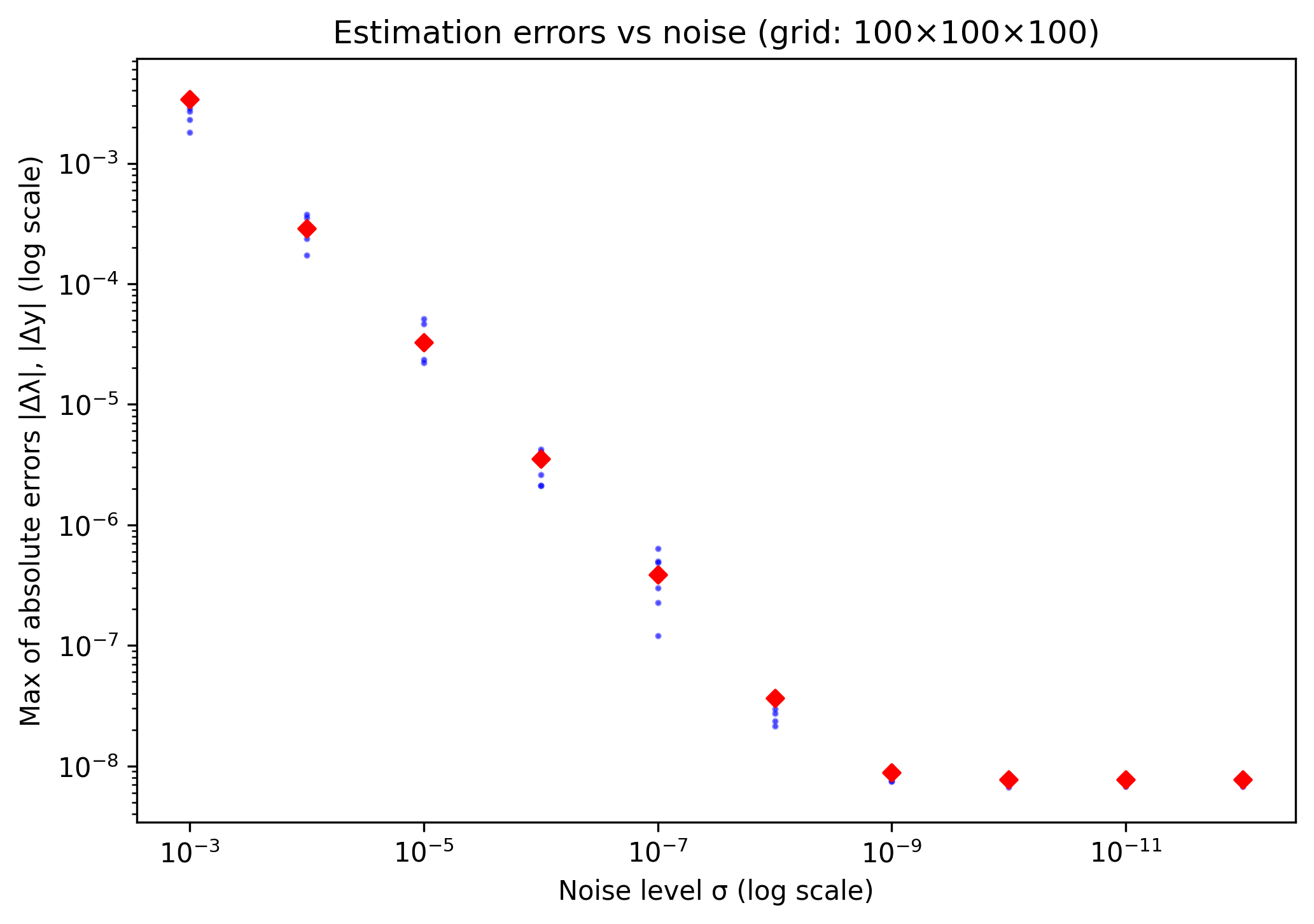}
	\caption{\justifying
		Absolute errors of the proposed $10$ multi-Coulomb post-processing algorithm for decreasing noise levels, for local averages generated by direct classical quadrature of a random Coulomb potential and perturbed by Gaussian noise with decreasing standard deviation $\sigma$ with grid $m=100$. The plateau is a consequence of the numerical error made by using quadrature instead of exact integration (see \Cref{appx-sec:numerics}, \href{https://github.com/MitTimM/learning_coulomb_potentials_in_continuous_space}{GitRepo}, and \cite{DataZenodo.2026}).
	}
	\label{fig:multi-coulomb-error-dependence}
\end{figure}

\begin{figure}
	\centering
	\includegraphics[width=0.95\linewidth]{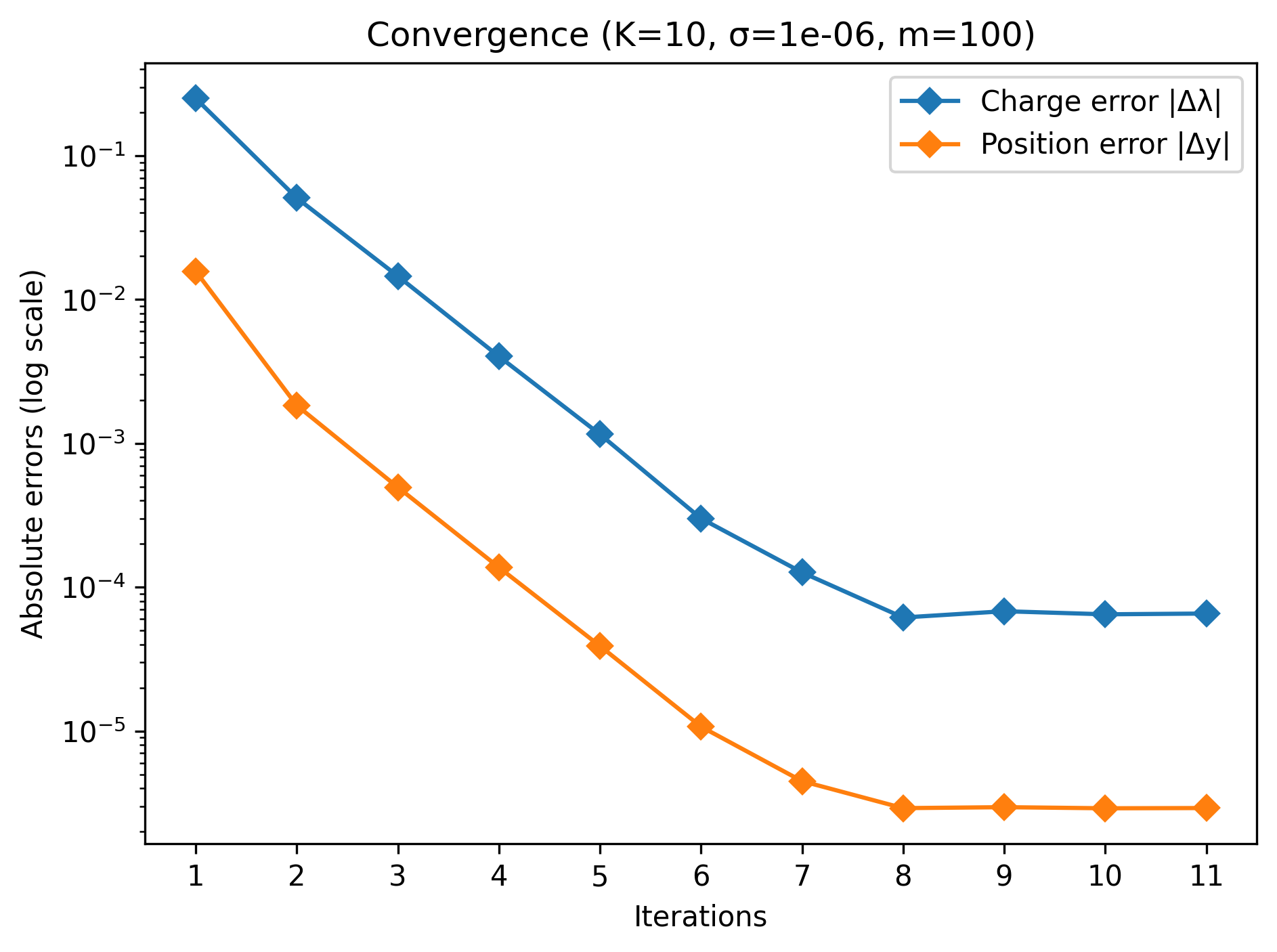}
	\caption{\justifying
		Absolute errors of the proposed multi Coulomb post-processing algorithm over the iteration presented in the proof. The algorithm is applied to local averages generated by direct classical quadrature of a random $10$ multi-Coulomb potential and perturbed by Gaussian noise with decreasing standard deviation $\sigma$ with grid $m=100$ (see \Cref{appx-sec:numerics}, \href{https://github.com/MitTimM/learning_coulomb_potentials_in_continuous_space}{GitRepo}, and \cite{DataZenodo.2026}).
	}
	\label{fig:multi-coulomb-convergence}
\end{figure}

\begin{remark}
	Note that the above method can be extended to rotationally invariant potential not satisfying Newton's shell theorem and only requires a certain decay rate of the individual potentials away from their centers, as well as stability of the maxima. If a more general potential in $d$ dimensions satisfies these properties, a similar method can be employed (see \Cref{appx-rmk:extension-single-mode} and (\ref{appx-rmk:extension-multi-mode})).
\end{remark}
As seen in the proof sketch above, determining the charge numbers reduces to solving a linear system. Next, we will investigate this case in greater detail.

\section{Learning general potentials}\label{sec:general-learning}
In this section, we extend our analysis beyond a specific model and aim for good approximations of certain function classes.

First, assume that the external potential $V$ is given by a Lipschitz continuous function with Lipschitz constant $C_V$. Using \Cref{thm:sampling} on the set of triples $\Jf$, \Cref{protocol} provides estimators $\widehat{\omega}_\j$ for the local averages $\langle f_\j, V f_\j \rangle$ with errors at most $\varepsilon$ and success probability at least $1-\delta$ for all $\j\in\{0:m-1\}^d$. Here, we implicitly assume that every box is included in a triple of $\Jf$.

In the next step, we consider an approximation of the potential given by step functions:
\begin{align*}
	\widetilde{V} := \sum_{\j} \widehat{\omega}_\j \, \chi_{B_\j}\,.
\end{align*}
For any $y \in B_\j$, we have
\begin{equation*}
	\| V(y) - V|_{B_\j} \|_{\infty,B_{\j}} \leq C_V \sqrt{d}\,\ell,
\end{equation*}
where $\|\cdot\|_{\infty,B_\j}$ denotes the supremum norm restricted to the set $B_\j$. By choosing $\widehat{\omega}_{\j}$ to be the estimated local average related to the box $B_{\j}$, we achieve
\begin{equation}\label{eq:V step function approx}
	\begin{aligned}
		\| V - \widetilde{V} \|_{\infty,[0,L]^d} & \leq C_V \sqrt{d}\,\ell+\sup_{\j}\sup_{y\in B_{\j}} \bigl| V(y) - \widehat{\omega}_\j \bigr| \\
		                                         & \leq C_V \sqrt{d}\,\ell + \varepsilon\,,
	\end{aligned}
\end{equation}
with probability $1-\delta$.
%

Determining the coefficients of the step functions can be viewed as solving a linear problem, which is naturally generalized to arbitrary bases. In fact, assume that the potential $V:\R^d \to \R$ admits the following sum representation up to an error of order $\varepsilon_V \geq 0$,
\begin{equation}\label{eq:V-rep}
	\Bigl\|V - \sum_{k=1}^K \lambda_k e_k\Bigr\|_{\infty,[0,L]^d} \leq \varepsilon_V,
\end{equation}
where $\{e_k\}_{k=1}^{K}$ are normalized, linearly independent vectors in $L^2(\R^d,\mathbb{C})$. We also allow for the case $\varepsilon_V=0$, which means that $V$ is represented exactly by this system. Notice that \Cref{eq:V-rep} implies
\begin{equation}\label{eq:omegaj sum estimate}
	\biggl|\omega_\j - \Bigl\langle f_{\mathbf{j}}, \sum_{k=1}^K \lambda_k e_k f_{\mathbf{j}} \Bigr\rangle \biggr| \leq \varepsilon_V \,.
\end{equation}
Assume we know the values $\omega_{\j}$ for $\j\in\{0:m-1\}^d$. For the sake of notation, we write $\omega_i$ for $i\in\{1:m^d\}$ for a given order $\{\j_i\}_{i=1}^{m^d}$ on the set $\{0:m-1\}^d$ such that $\omega_i=\omega_{\j_i}$. The same notation is applied to $f_\j$. Then we can write
\begin{equation*}
	\begin{aligned}
		\begin{pmatrix}
			\omega_{1} \\
			\vdots     \\
			\omega_{m^d}
		\end{pmatrix}
		=
		\begin{pmatrix}
			\ddots                & \vdots                                                       & \reflectbox{$\ddots$} \\
			                      & \left(\langle f_{k}, e_i f_{k}\rangle\right)_{i,k=1}^{m^d,K}                         \\
			\reflectbox{$\ddots$} & \vdots                                                       & \ddots
		\end{pmatrix}
		\begin{pmatrix}
			\lambda_1 \\
			\vdots    \\
			\lambda_K
		\end{pmatrix}
		+ \mathcal{O}(\varepsilon_V)\,,
	\end{aligned}
\end{equation*}
or in short-hand notation
\begin{align}\label{eq:matrix-rep}
	\boldsymbol{\omega} = \mathbb{M} \boldsymbol{\lambda} + \cO(\varepsilon_V)\,.
\end{align}
In the following we will use the bold notation to denote the vectors and $\mathbb{M}$ for the $m^d \times K$ matrix. \Cref{protocol} provides accurate estimators of the local averages
\begin{equation*}
	\widehat{\boldsymbol{\omega}} = (\widehat{\omega}_{1}, \ldots, \widehat{\omega}_{m^d})\,,
\end{equation*}
which can then be used to compute estimators for the coefficients via
\begin{equation*}
	\widehat{\boldsymbol{\lambda}} = \mathbb{M}^{+} \widehat{\boldsymbol{\omega}}\,,
\end{equation*}
where $\mathbb{M}^{+}$ denotes the pseudoinverse (or Moore–Penrose inverse) of $\mathbb{M}$. Note that if $\mathbb M \mathbf x = \mathbf y$, then $\mathbf x = \mathbb M^+ \mathbf y$ minimizes $\nn{\mathbb M \mathbf x - \mathbf y}$. In particular,
$\mathbb{M}^{+} = \mathbb{M}^{-1}$ in case of a square matrix ($m^d = K$) which is invertible.
From \Cref{eq:matrix-rep} it follows that
\begin{equation}\label{eq:linear-error-bound}
	\boldsymbol{\lambda}-\widehat{\boldsymbol{\lambda}} =  \mathbb{M}^{+} (\boldsymbol{\omega}-\widehat{\boldsymbol{\omega}}) + \mathbb{M}^{+} \, \mathcal{O}(\varepsilon_V)\,,
\end{equation}
which yields an error estimate for $\boldsymbol{\lambda}$ and, consequently, for $V$. This observation leads to the following result:
\begin{theorem}\label{thm:general-linear-learning}
	Given that the potential is $4(d+1)$ times differentiable with bounded derivatives and admits an approximation expansion in a system of linear independent, normalized vectors $e_1,\ldots,e_{K}\in L^2(\R^d,\R)$ up to an error $\varepsilon_V$ (Eq.~(\ref{eq:V-rep})).
	Moreover, assume that $m = \cO(\mathrm{poly}(K))$ and the norm of the pseudoinverse scales like $\|\mathbb{M}^{+}\|=\cO(1)$.
	Then, \Cref{protocol} constructs estimators $\widehat{\lambda}_{i}$ with precision $\varepsilon$, i.e.,
	\begin{equation}\label{eq:linear-estimation}
		\max_{i\in\{1:K\}}|\widehat{\lambda}_{i}-\lambda_{i}|<\|\mathbb{M}^{+}\|(\varepsilon+\varepsilon_V)
	\end{equation}
	and success probability $(1-\delta)\in(0,1)$. The overall total evolution time of the algorithm is
	\begin{equation*}
		T_{\operatorname{c}}=\cO(\varepsilon^{-3}\ln(\frac{K}{\delta}))\,.
	\end{equation*}
\end{theorem}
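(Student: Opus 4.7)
The plan is to reduce the claim to two steps: first, obtain simultaneous estimators $\widehat\omega_{\j}$ for all $m^d$ local averages via \Cref{thm:sampling}; second, propagate the componentwise errors through the linear inversion set up in \eqref{eq:matrix-rep}--\eqref{eq:linear-error-bound} using the hypothesis $\|\mathbb{M}^+\|=\cO(1)$.

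For the first step, I would fix a disjoint triple decomposition $\Jf$ of $\{0:m-1\}^d$ covering every box (so $|\Jf| \leq m^d/3$) and invoke \Cref{thm:sampling} with per-component precision $\varepsilon$ and total failure budget $\delta$. Since the cardinality enters only logarithmically, the contribution to the evolution time is $\ln(3|\Jf|/\delta) = \ln(\cO(\mathrm{poly}(K))/\delta) = \cO(\ln(K/\delta))$, which together with the $\varepsilon^{-3}$ factor yields the stated $T_{\operatorname{lin}}$; the $\mathrm{poly}(\ell^{-1}) = \mathrm{poly}(m) = \mathrm{poly}(K)$ prefactor appearing in \Cref{thm:sampling} is absorbed into the constants hidden in the $\cO$-notation. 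The regularity hypothesis needed by \Cref{thm:sampling} is inherited from the basis $\{e_k\}$ through the representation \eqref{eq:V-rep}.

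For the second step, set $\widehat{\boldsymbol\lambda} := \mathbb{M}^+ \widehat{\boldsymbol\omega}$. Combining \eqref{eq:omegaj sum estimate} with the componentwise bound $|\omega_{\j} - \widehat\omega_{\j}| \leq \varepsilon$ gives
\begin{equation*}
    \|\widehat{\boldsymbol\omega} - \mathbb{M}\boldsymbol\lambda\|_\infty \leq \varepsilon + \varepsilon_V .
\end{equation*}
Since the $e_k$ are linearly independent and $m^d \geq K$, the matrix $\mathbb{M}$ has full column rank, hence $\mathbb{M}^+ \mathbb{M} = \mathrm{Id}_K$ and
\begin{equation*}
    \widehat{\boldsymbol\lambda} - \boldsymbol\lambda = \mathbb{M}^+\bigl(\widehat{\boldsymbol\omega} - \mathbb{M}\boldsymbol\lambda\bigr),
\end{equation*}
from which \eqref{eq:linear-estimation} follows immediately using $\|\mathbb{M}^+\|=\cO(1)$.

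The main subtlety I anticipate is the norm bookkeeping: \Cref{thm:sampling} delivers componentwise $\ell^\infty$ estimates on $\widehat{\boldsymbol\omega}$, whereas $\|\mathbb{M}^+\|=\cO(1)$ is most naturally read off as a spectral or operator norm. Choosing compatible norms --- e.g.~the induced $\ell^\infty \to \ell^\infty$ norm throughout --- and verifying that the hypothesis transfers (possibly at the cost of a benign $\sqrt{m^d}$ factor that should be absorbed in the assumption) is really the only conceptual point to address; with that settled, the result is a direct concatenation of \Cref{thm:sampling} with the linear algebra already displayed in the text.
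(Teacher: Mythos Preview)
Your proposal is correct and follows essentially the same approach as the paper: the paper's proof is extremely terse (``follow the strategy outlined above'' and ``apply the triangle inequality to \eqref{eq:linear-error-bound}''), and your two steps --- invoke \Cref{thm:sampling} with $|\Jf|=\cO(\mathrm{poly}(K))$ to get the $\ln(K/\delta)$ scaling, then push the error through $\mathbb{M}^+$ --- are exactly that strategy spelled out. Your caveats about absorbing the $\mathrm{poly}(\ell^{-1})$ prefactor and about the norm in which $\|\mathbb{M}^+\|$ is read are legitimate and are points the paper itself leaves implicit.
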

\begin{proof}
	The proof follows the strategy outlined above. First, the overlap matrix is invertible by assumption. This directly implies the result by applying the triangle inequality to \Cref{eq:linear-error-bound}.
\end{proof}
\begin{remark}
	In the above proof, we used the assumption that $\|\mathbb{M}^{+}\|$ is constant. Under the additional relative error assumption that $1\leq\|\mathbb{M}\|$, the above result can also be formulated in terms of a constant condition number
	\begin{equation}\label{eq:condition-number}
		\kappa(\mathbb{M}) =\nn{\mathbb{M}}\,\|\mathbb{M}^{+}\|=\cO(1)\,.
	\end{equation}
	In detail, \Cref{eq:linear-estimation} would change to
	\begin{equation}\label{eq:linear-estimation-conditon}
		\max_{i\in\{1:K\}}|\widehat{\lambda}_{i}-\lambda_{i}|<\kappa(\mathbb{M})(\varepsilon+\varepsilon_V)\,.
	\end{equation}
	The condition number has the advantage that it also provides stability with respect to relatively small perturbations of the matrix itself.
\end{remark}

Next, we briefly recap the case of a step function approximation:
\begin{example}[Step Functions]
	As described at the beginning of \Cref{sec:general-learning}, we define
	\begin{equation*}
		e_{\j}=\chi_{B_\j}\,,
	\end{equation*}
	which directly implies
	\begin{equation*}
		\langle f_{\j_1}, e_{\j_2} f_{\j_1}\rangle=\delta_{\j_1\j_2}\,.
	\end{equation*}
	Obviously, the matrix is invertible with $\|\mathbb{M}^{-1}\|=1$ and condition number $\kappa(\mathbb{M})=1$.
\end{example}
Even if the matrix is not perfectly diagonal as in the example above, but most of the weight lies on the diagonal, the matrix remains well-conditioned under sufficiently small perturbations. For that, we introduce diagonally dominant matrices.
\begin{definition}\label{def:diagonally-dominante-matrix}
	A square matrix $\mathbb{M} = (a_{ij}) \in \R^{K\times K}$ is called diagonally dominant if for all rows $i\in\{1:K\}$, the following condition holds:
	\begin{equation*}
		|a_{ii}| \ge \sum_{j \neq i} |a_{ij}|\,.
	\end{equation*}
	The matrix is called strictly diagonally dominant if the inequality is strict for all rows.
\end{definition}
\begin{example}[Diagonally Dominant Matrices]
	Assume that the overlap matrix $\mathbb{M}$ and its transpose $\mathbb{M}^{T}$ are strictly diagonally dominant, satisfying
	\begin{equation*}
		|a_{ii}| - \max\Bigl\{\sum_{j \neq i} |a_{ij}|, \sum_{j \neq i} |a_{ji}|\Bigr\} > \varepsilon_{\mathbb{M}}^{-1} \quad \text{for all } i\,.
	\end{equation*}
	Then, by the Levy–Desplanques theorem \cite[Chap.~6, Cor.~5.6.16/17]{Horn1985}, the matrix is invertible and Varah's bound shows \cite[Cor.~2]{Varah.1975}
	\begin{equation}\label{eq:diag-dom-inverse-norm}
		\|\mathbb{M}^{-1}\|\!\leq\!\frac{1}{\min_{i}\!\left(\!|a_{ii}|\!-\!\min\{\sum_{j \neq i}|a_{ij}|, |a_{ji}|\}\!\right)} \leq \varepsilon_{\mathbb{M}}.
	\end{equation}
	A simple example would be a diagonal matrix subject to small off-diagonal perturbations. Such systems remain stable and satisfy \Cref{thm:general-linear-learning}.
\end{example}

\begin{example}
	In this example, we assume that our $V$ can be decomposed into a linear combination of trigonometric functions with a fixed set of frequencies. More precisely, the base elements are given by 
    $$e_{i}(x) =  e^{ 2\pi \i  \kb_i  \cdot  x},\qquad i \in\{1:K\}$$
	for some given $\mathbf k_1, \ldots, \mathbf k_K \in \IR^d$. We want to assume that the nodes $\kb_i$ are well separated for large $K$. To this end, let $q := \min_{i \not= i'} \min_{\mathbf r \in \mathbb Z^d} \nn{\mathbf k_{i} - \mathbf k_{i'} + \mathbf r}_\infty$ be the minimal separation distance of $\kb_i$ on the $d$-dimensional torus $\mathbb T^d = (\IR/\mathbb Z)^d$. Under the assumption that
	\begin{align}
		\label{eq:q assumption}
		q = \Omega(1 / \text{poly}(K)).
	\end{align}
	we can show that the norm of the pseudoinverse of the transition matrix, $\nn{M^+}$,  is bounded  and therefore \Cref{thm:general-linear-learning} is applicable.

	To show that, we compute the elements of the transition matrix as
	\begin{align*}
		\cs{f_\j, e_i f_\j} & = \ell^{-d} \int_{\R^d} \abs{f(\ell^{-1} x - \j)}^2 e^{\i  \kb_i \cdot  x} \d x \\  &= \int_{\R^d} \abs{f(x)}^2 e^{\i  \ell  \kb_i \cdot  (x + \j)} \d x .
	\end{align*}
	Therefore, $\mathbb M$ factorizes as
	\begin{align*}
		\mathbb{M} =  \mathbb D  \mathbb V,
	\end{align*}
	where $\mathbb D$ is a diagonal matrix with entries $\mathbb D_{ii} := \int \abs{f(x)}^2 e^{\i  \ell  \kb_i \cdot x} \d x$, $i = 1, \ldots, K$, and $\mathbb V$ is a multivariate Vandermonde matrix with complex values on the unit circle,  given by $\mathbb V_{i \j} =  e^{\i  \ell  \kb_i \cdot  \j}$. Thus,
	\begin{align*}
		\nn{\mathbb{M}^{+}} \leq \nn{\mathbb D^{-1}} \nn{\mathbb V^{+}}.
	\end{align*}
	In order to find an upper bound for $\nn{\mathbb D^{-1}}$, we can use $\abs{e^{\i  \ell  \kb_i \cdot x} -1} \leq \ell \abs{\kb_i} \abs{x}$, which yields $\abs{\mathbb D_{ii} - 1} \leq \ell \abs{\kb_i} \int \abs{f(x)}^2 \abs{x} \d x$, which is strictly smaller than $1$ for $\ell$ small enough. Then by means of a Neumann series we see that
	\begin{align*}
		\nn{\mathbb D^{-1}} \leq \sum_{n=0}^\infty \nn{1-\mathbb D}^n \leq \frac{1}{1-C\ell},
	\end{align*}
	where $C>0$ depends on the $\kb_i$ and $f$, but not on $\ell$.

	In \cite[Theorem 3.5]{kunis2022multivariate} it is shown that, if
	\begin{align}
		\label{eq:q m assumption}
		q (m-1) \geq \pi^{-1}(8 \log d + 14),
	\end{align}
	then we have an upper bound
	\begin{align*}
		\nn{\mathbb
			V^{+}} = 1 / \sigma_{\text{min}}(\mathbb V) \leq C_d (m-1)^{-d/2},
	\end{align*}
	where $\sigma_{\text{min}}(\mathbb V)$ denotes the smallest singular value of $\mathbb{V}$ and $C_d$ is an explicit constant, only depending on $d$.

	Together, this means that we achieve a uniform bound on $\nn{\mathbb M^+}$ independent of small $\ell$ (or equivalently large $m$). Note that Eq.~(\ref{eq:q m assumption}) is satisfied by the assumption \eqref{eq:q assumption} and if we choose $m = \Omega(\text{poly(K)})$.
\end{example}

\section{Methods for Error Estimates and Proof of Theorem \ref{thm:sampling}}
\label{sec:methods}

In this section, we give the details how to derive error bounds for the estimators of the local averages $\omega_\j$ obtained in \Cref{protocol}. In particular, we give the proof of \Cref{thm:sampling}, up to the more technical estimates which can be found in \Cref{appx-subsec:finite-diff}. The continuum-specific challenges, unbounded propagation speed and infinite local Hilbert-space dimension, have to be handled with novel methods, particularly continuum Lieb-Robinson bounds. Once these issues have been addressed, standard tools in Hamiltonian learning, such as finite differences and Hoeffding's inequality, can be employed.

To obtain a good estimator for the second term of the derivative at $t=0$, i.e., $\omega_{\mathbf{j}}$ (see Eq.~(\ref{eq:data-def})), we estimate the expectations of the evolved states $\psi^{\alpha\beta}_t=e^{itH}\psi^{\alpha\beta}_0$, i.e.
\begin{align}\label{measure}
	\Dd^{\alpha\beta}_\j(t) \coloneqq \langle W^{\alpha\beta}_\mathbf{j} \psi^{\alpha\beta}_t, \,P^{\alpha\beta}_\mathbf{j} W^{\alpha\beta}_\mathbf{j} \psi^{\alpha\beta}_t \rangle\,,
\end{align}
using the corresponding mean estimator (\ref{eq:mean estimator}). The required sample complexity $T$ (see Eq.~(\ref{eq:T_est})) follows directly from Hoeffding's inequality (\Cref{lem:hoeffy}). Note that our scheme is based on i.i.d.~Bernoulli random variables, which is why alternative estimators such as the median-of-means (see \cite{Devroye2016} for a detailed analysis of sub-gaussian mean estimators) achieve similar performance.

To characterize the Hamiltonian, we want to compute the local averages $\omega_\j = \langle f_\mathbf{j}, V f_\mathbf{j} \rangle$ (Eq.~(\ref{eq:data-def})) from measurements of our time-evolved initial states. In fact, the values $\omega_\j$ can be computed explicitly from the first derivative of $\Dd^{\alpha\beta}_\j(t)$ at $t=0$, see \Cref{th:first_order} for the precise statement. Notice that the first derivative is given by a commutator with $H$. As the application of the Schrödinger Operator $h = - \Delta + V$ to some $f_\j$ is well-defined and still localized in the box $\j$, this does not destroy the localization of the particles to each box and therefore entails a straightforward computation.

Via a Taylor expansion in $t$, the first derivatives can then be approximated by finite-difference quotients, with an error controlled by the second derivative:
\begin{align*}
	\langle & f^\alpha_\j, V f^\alpha_\j \rangle + 2\ell^{-2}\cs{ f, (-\Delta) f }                                                                                               \\
	        & = \sum_{(\beta,\gamma)}\sigma_\alpha(\beta, \gamma) \biggl( \frac{\Dd_\j^{\beta\gamma}(t) - \Dd_\j^{\beta\gamma}(0) }{t}                                           \\
	        & \qquad\qquad\qquad\qquad\quad+ t\cs{\psi^{\beta\gamma}_{t_{\beta\gamma}} , [H,[H,\widetilde{P}^{\beta\gamma}_\j]] \psi^{\beta\gamma}_{t_{\beta\gamma}}} \biggr)\,,
\end{align*}
where $f$ is the fixed profile function in \eqref{eq:f_j defn}, the sum is over $(\beta,\gamma)\in\{(0,1),(0,2),(1,2)\}$, $\widetilde{P}^{\beta\gamma}_\j\coloneqq(W^{\beta\gamma}_\j)^* P^{\beta\gamma}_\j W^{\beta\gamma}_\j$, $\sigma_\alpha(\beta, \gamma) \in \{\pm 1\}$ is defined in \Cref{th:final_fVf_D_estimate}, and in the Lagrange remainder term, we have $t_{\beta\gamma} \in (0,t)$.

We would like to argue that the expectation of the second commutator is bounded, as the prefactor $t$ makes it small. In contrast to the first order term, computing and bounding this term for multiple triples $\abs{\Jf} > 1$ is significantly more challenging, as it is not evaluated at $t=0$ but at some intermediate times $t_{\beta\gamma} > 0$. The Schrödinger time evolution is not strictly local, so that particles originally confined to boxes in the initial state $\psi_0^{\beta\gamma}$, immediately leak into all other boxes as soon as $t_{\beta\gamma} > 0$. However, this leakage is small, as long as not too much time has passed and the strength is decaying with growing distance to the respective box. This is exactly the statement of Lieb-Robinson bounds \cite{LiebRobinson.1972}, and in this paper, we will use a recent version of Lieb-Robinson bounds for the continuum \cite{hinrichs2023lieb} with an almost linear light cone and arbitrarily strong power-law decay in a slightly modified form, cf.~\Cref{th:modified lrb}. The exact bound of the second derivative can be found in \Cref{th:second order}.

Combining the first order result \Cref{th:first_order} with the second order estimate \Cref{th:second order} then gives
\begin{align}
	\nonumber
	\langle f^\alpha_\mathbf{j}, V f^\alpha_\mathbf{j} \rangle
	 & = \sum_{(\beta,\gamma)}\sigma_\alpha(\beta, \gamma) \frac{ \Dd_\j^{\beta\gamma}(t) - \Dd_\j^{\beta\gamma}(0) }{t} \\
	 & \quad - 2\ell^{-2}\langle f, (-\Delta) f \rangle +  \cO(\ell^{-\gd} \abs{t} ), \label{eq:direct V}
\end{align}
with $\gd := 4d + 10$,
see also \Cref{th:final_fVf_D_estimate} for the detailed theorem. Notice that the error scales in inverse powers of the scale of the state preparation and measurement $\ell$. This is inevitable as smaller scales of the system entail higher Lieb-Robinson velocities and therefore smaller decay of the error. In the case of one box, we do not need Lieb-Robinson bounds and the error only scales like $\cO(\ell^{-4} \abs{t} )$, cf.~\Cref{th:second order} (a).

\begin{proof}[Proof of \texorpdfstring{\Cref{thm:sampling}}{???}]
	Let the random variables $Y_\j^{\alpha\beta,(1)}, \dots, Y_\j^{\alpha\beta,(T)}$, denote the $T$ measurement outcomes of the two boxes $\alpha,\beta \in \{0,1,2\}$ inside the triple $\j \in \Jf$ after time $t$. We aim to estimate the expectations $\mathbb{E}[Y_\j^{\alpha\beta,(k)}] = \Dd^{\alpha\beta}_\j(t)$ via the arithmetic mean estimators
	\begin{align}
		\label{eq:mean estimator}
		\widehat Y^{\alpha\beta}_\j := \frac{1}{T} \sum_{k=1}^T Y_\j^{\alpha\beta,(k)} .
	\end{align}
	The probability that the estimators are not equal to its expectation values up to an error of order $\varepsilon > 0$ can be bounded with a union bound and Hoeffding's inequality \cite{hoeffding1963probability}. More precisely,
	\begin{align}
		\IP & \left( \exists \j \in \Jf, ~(\alpha,\beta) \,|\, \abs{\widehat Y^{\alpha\beta}_\j - \IE Y^{\alpha\beta,(k)}_\j } \geq \varepsilon \right)
		\nonumber                                                                                                                    \\ &\leq  3 \abs{\Jf} \max_{\j \in \Jf, ~(\alpha,\beta)}\IP\left( \abs{\widehat Y^{\alpha\beta}_\j - \IE Y^{\alpha\beta,(1)}_\j } \geq \varepsilon \right) \nonumber \\
		    & \leq 6 \abs{\Jf}  \exp \left( - 2 \varepsilon^2 T \right). \label{eq:failure prob}
	\end{align}
	Requiring the success probability to be at least $1-\delta$ means that \Cref{eq:failure prob} must be smaller than $\delta$. Hence, in order to obtain the $\Dd_\j^{\alpha\beta}$ with an error of at most $\varepsilon$, the number of samples needs to be chosen as
	\begin{align}
		\label{eq:T_est first}
		T \geq \frac{1}{2 \varepsilon^2} \log\left( \frac{6 \abs{\Jf} }{\delta} \right).
	\end{align}
	Now we can compute estimators for the local averages $\omega_\j$ by \Cref{eq:direct V}. In order to get a total maximum error of order $\cO(\varepsilon)$, we have to choose $\abs{t} = \cO( \ell^{\gd} \varepsilon)$ and thus require an error $\cO(\ell^{\gd} \varepsilon^2)$ for the terms $\Dd_\j^{\alpha\beta}(t)$. Substituting $\varepsilon$ with $\ell^{\gd} \varepsilon^2$ in \Cref{eq:T_est first} yields a total sample complexity of
	\begin{align}
		\label{eq:T_est}
		T = \cO\left(  \ell^{-2\gd} \varepsilon^{-4} \log\left( \frac{\abs{\Jf} }{\delta} \right) \right),
	\end{align}
	with  corresponding total evolution time $T_c =t T = \cO\left(  \ell^{-\gd} \varepsilon^{-3} \log\left( \frac{\abs{\Jf} }{\delta} \right) \right)$.
\end{proof}

\section{Conclusions} We have developed a unified and modular framework for Hamiltonian learning in the first-principles position space of quantum particles, $\mathbb R^d$. We consider free fermions and show that local short-time measurements can be used to recover both Coulomb potentials and other broad classes of sufficiently regular external potentials. By combining tailored reconstruction procedures with continuum localization techniques and information-propagation bounds, our approach addresses the main difficulties caused by the infinite-dimensional setting and the unbounded Laplacian. These results provide a foundation for scalable Hamiltonian learning in  continuum models and suggest a natural path toward learning more general interacting systems as continuum locality bounds improve.

\section*{Data Availability}
The numerical code, input data, and analysis scripts that support the findings of this study are publicly available in the Zenodo repository "Learning Coulomb Potentials and Beyond with Fermions in Continuous Space" \cite{DataZenodo.2026} and via the associated \href{https://github.com/MitTimM/learning_coulomb_potentials_in_continuous_space}{GitHub repository} linked from that Zenodo record. The Zenodo archive includes simulation code, example input files, and postprocessing scripts; the software is released under the Creative Commons Zero (CC0 1.0 Universal) waiver.

\section*{Acknowledgments}
This work was funded by the ANR project PraQPV, grant number ANR-24-CE47-3023 (AB); the ERC Starting Grant “MathQuantProp” No.\ 101163620 (ML)\footnote{Views and opinions expressed are however those of the authors only and do not necessarily reflect those of the European Union or the European Research Council Executive Agency. Neither the European Union nor the granting authority can be held responsible for them}; the Deutsche Forschungsgemeinschaft (DFG, German Research Foundation) - Project-ID 470903074 - TRR 352 (ML, TM), QuantERA II Programme that has received funding from the EU’s H2020 research and innovation programme under the GA No 101017733 (TM)

\newpage

\widetext

\bibliography{lit}

\appendix

\section{Technical Proofs --- Learning Coulomb Potentials}\label{appx-sec:Coulomb}
We first focus on the technical results required to determine the position and charge number of a single Coulomb potential (see \Cref{sec:Coulomb}).  We then provide a complete proof of \Cref{thm:single-Coulomb}. For this, we repeat the short proofs from \Cref{sec:Coulomb} and include additional steps to present a complete proof.
\begin{theorem}\label{appx-thm:single-Coulomb}
	Let $\varepsilon>0$ with $\varepsilon<\Lambda_*\frac{\ell}{4L^2}$ be the given precision and let $V$ be a single Coulomb potential
	\begin{equation*}
		V:\R^3\to\R, \quad V(x)=\frac{\lambda}{\|x-y\|}
	\end{equation*}
	for some unknown $\lambda$ with $|\lambda|\in[\Lambda_*,\Lambda^*]\subset(0,\infty)$ and $y\in[0,L]^3$. Then, using \Cref{protocol}, we obtain estimators $\widehat{\lambda}$ and $\widehat{y}$ such that, with success probability at least $(1-\delta)\in(0,1)$,
	\begin{equation*}
		|\widehat{\lambda}-\lambda| \leq \varepsilon,\qquad\|\widehat{y}-y\|\leq \varepsilon,
	\end{equation*}
	requiring a total evolution time of
	\begin{equation*}
		T_{\operatorname{c}}=\cO\Bigl(\varepsilon^{-3}\ln\biggl(\frac{1}{\delta}\biggr)\Bigr)\,.
	\end{equation*}
\end{theorem}
\begin{proof}
	Define $\ell=L/m$ with $m=8$. Due to the identity of Newton's shell theorem (see \cite[Thm.~9.7]{LiebLoss.2001}), the proof reduces to solving a system of non-linear equations with four degrees of freedom:
	\begin{equation*}
		\omega_{\j}=\frac{\lambda}{\nn{p_{\j}-y}}\qquad\text{or}\qquad\frac{\lambda^2}{\omega_{\j}^2}=z_\j
	\end{equation*}
	for $\j\in\{0:3\}^3$ and $z_\j=\nn{p_{\j}-y}^2$. Here it is important to mention that in the following, $p_\j$ is chosen in such a way that the assumptions of Newton’s theorem are satisfied (see \cite[Thm.~9.7]{LiebLoss.2001}), that is $y\notin B_{\j}$. By definition, the sign of $\omega_j$ directly defines the sign of $\lambda$, w.l.o.g., $\lambda>\Lambda_*>0$.  Next, we subtract $z_{\mathbf{j}}-z_{\mathbf{i}}$ and rewrite the norms via the underlying scalar product. For simplicity, we define
	\begin{equation}\label{appx-eq:single-Coulomb-redefinition}
		\begin{aligned}
			p_{\mathbf{i}\j}    & =\frac{1}{2}(p_{\mathbf{i}}-p_{\j})                       \\
			v_{\mathbf{i}\j}    & =(\|p_\mathbf{i}\|^2-\|p_\j\|^2)                          \\
			\eta_{\mathbf{i}\j} & =(\frac{1}{\omega_\mathbf{i}^2}-\frac{1}{\omega_\j^2})\,,
		\end{aligned}
	\end{equation}
	which results in
	\begin{equation*}
		v_{\mathbf{i}\j}=\eta_{\mathbf{i}\j}\lambda^2+\cs{p_{\mathbf{i}\j},y}\,.
	\end{equation*}
	For simplicity, we label at most eight necessary points by $\j\in\{0:7\}$. This means that the label $\mathbf{i}$ of a grid point is used interchangeably with the numbering $\{1:8\}$. Then, these points and the corresponding estimated local averages reduce to the linear system
	\begin{equation}\label{appx-eq:matrix-LGS}
		\mathbf{v}\coloneqq
		\begin{pmatrix}
			v_{12} \\
			v_{34} \\
			v_{56} \\
			v_{78}
		\end{pmatrix}
		=
		\begin{pmatrix}
			\eta_{12} &  & p_{12}^T & \\
			\eta_{34} &  & p_{34}^T & \\
			\eta_{56} &  & p_{56}^T & \\
			\eta_{78} &  & p_{78}^T &
		\end{pmatrix}
		\begin{pmatrix}
			\lambda^2 \\
			y_1       \\
			y_2       \\
			y_3
		\end{pmatrix}
		\eqqcolon A\widetilde{\mathbf{y}}\,.
	\end{equation}
	Here it is important to mention that the matrix $\mathbf{A}$ and the vector $\mathbf{v}$ are characterized up to a small error. As before, note that $(\cdot)^T$ denotes the transpose of the vectors. To ensure a well-conditioned linear problem above, the algorithm looks as follows: First the area $[0,L]^3$ is partitioned in $8^3=2^9=512$ boxes and partitioned (with the exception of two boxes, which are combined with another neighboring, already-considered box to form a triple) into triples as explained in \Cref{fig:initial-states} and \ref{fig:sequential_scheme}. Then, \Cref{protocol} is executed but only with $\Jf$ being one triple (see \Cref{fig:sequential_scheme}).
	\begin{figure}[ht!]
		\includegraphics[scale=0.8]{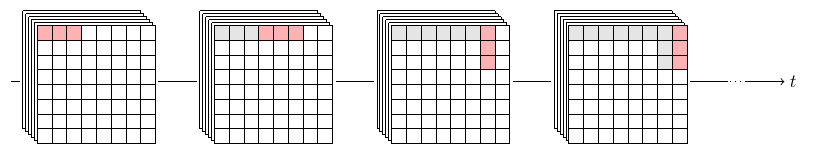}
		\caption{Demonstration of the sequential scheme, in which only one triplet is fixed, then \Cref{protocol} is executed, and the process continues with the next triplet. This approach is only efficient when the number of boxes is small.}
		\label{fig:sequential_scheme}
	\end{figure}
	This is then repeated for every triple, meaning that the sample complexity is $171$ times the total evolution time achieved in \Cref{thm:sampling}:
	\begin{equation*}
		T_c=\cO\Bigl(\varepsilon^{-3}\ln\biggl(\frac{1}{\delta}\biggr)\Bigr)\,.
	\end{equation*}
	Via this sequential scheme, we found estimators $\widehat{\omega}_{\j}$ up to precision $\varepsilon$ and probability $(1-\delta)$ for the weighted averages $\omega_{\j}$ uniformly over the partition of $512$ boxes. It is important to note that $m=8$ is fixed for the entire proof and is independent of the precision $\varepsilon$.

	From there, we calculate the variables defined in \Cref{appx-eq:single-Coulomb-redefinition} and choose four vectors $p_{\mathbf{i}\j}$ so that the linear system of equations is well-conditioned, i.e.~errors propagate linearly. First, we estimate the error in which the center of the Coulomb lies approximately by the simple optimization:
	\begin{equation}\label{appx-eq:maximization-y}
		\widehat{\j}_{y}\coloneqq\argmax_{\j\in\{0:7\}^3}(\widehat{\omega}_{\j})
	\end{equation}
	which takes a random element as single output if several maxima exist. Then, the assumption $\varepsilon<\Lambda_*\frac{\ell}{4L^2}$ allows us to infer that the Coulomb center lies in the neighborhood
	\begin{equation}\label{eq:closure}
		\partial B_{\hat{\j}_y}\coloneqq\bigcup_{\dist(\j, \hat{\j}_{y})\leq1}B_{\j}
	\end{equation}
	of $\widehat{\j}_{y}$. This can be seen by contraposition. If $y\notin\partial B_{\hat{\j}_y}$, we can choose $\mathbf{j_y}\in\{1:m\}^3$ such that $p_{\mathbf{j_y}}\in B_{\hat{\j}_y}\partial\coloneqq \partial B_{\hat{\j}_y}\backslash B_{\hat{\j}_y}$ and $\mathbf{j}_y$ is the label of $B_{\mathbf{j}_y}$, in which the path $[0,1]\ni s\mapsto sy+(1-s)p_{\hat{\mathbf{j}}_y}$ first enters (see \Cref{fig:contraposition_single_C}). 
    \begin{figure}[ht!]
		\includegraphics[scale=0.8]{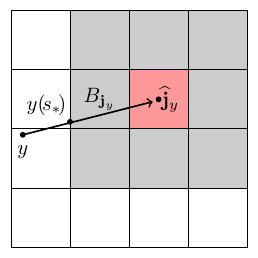}
		\caption{Visualization of path from $y$ to $p_{\hat{\j}_y}$ and first intersection $y(s_*)$ with $B_{\j_y}$.}
		\label{fig:contraposition_single_C}
	\end{figure}
    Since $y\notin\partial B_{\hat{\j}_y}$, we can apply Newton's shell theorem \cite[Thm.~9.7]{LiebLoss.2001}, which shows
    \begin{align}
		\omega_{\hat{\j}_y}-\omega_{\j_y} & =\int_{\R^3}|f_{\hat{\j}_y}(x)|^2\frac{\lambda}{\nn{x-y}}dx-\int_{\R^3}|f_{\j_y}(x)|^2\frac{\lambda}{\nn{x-y}}dx\nonumber\\
        & = \frac{\lambda}{\|p_{\hat{\j}_y}-y\|}-\frac{\lambda}{\|p_{\j_y}-y\|}\nonumber\\
        & = \lambda\frac{\|p_{{\j}_y}-y\|-\|p_{\hat{\j}_y}-y\|}{\|p_{\hat{\j}_y}-y\|\,\|p_{{\j}_y}-y\|}\nonumber
	\end{align}
    Next, we note that the maximal distance from the first intersection of $[0,1]\ni s\mapsto y(s)=sy+(1-s)p_{\hat{\mathbf{j}}_y}$ with $\partial B_{\j_y}$, i.e.~$y(s_*)$ with $s_*=\max\{s\in[0,1]\,|\,\min_{x\in \partial B_{\hat{\j}_y}}\|y(s)-x\|=0\}$ (see \Cref{fig:contraposition_single_C}), is upper bounded by
    \begin{equation*}
        \|y(s_*)-p_{\j_y}\|\leq \frac{\sqrt{3}}{2}\ell\,.
    \end{equation*}
    Then, the triangle inequality shows
    \begin{equation*}
        \|p_{\j_y}-y\|\leq \|y-y(s_*)\|+\|y(s_*)-p_{\j_y}\|\leq \|y-y(s_*)\|+\frac{\sqrt{3}}{2}\ell
    \end{equation*}
    and 
    \begin{equation*}
        \|p_{\hat{\j}_y}-y\|=\|y-y(s_*)\|+\|y(s_*)-p_{\hat{\j}_y}\|\geq\|y-y(s_*)\|+\frac{\sqrt{3}}{2}\ell+\frac{\ell}{2}\,.
    \end{equation*}
    The last inequality holds true, since from $y(s_*)$ there is a distance of at least $\frac{\sqrt{3}}{2}\ell$ to the boundary of $B_{\hat{\j}_y}$ and from there at least $\frac{\ell}{2}$ to $p_{\hat{\j}_y}$.
    Together, these inequalities show
    \begin{align*}
		\omega_{\hat{\j}_y}-\omega_{\j_y}
        \leq\lambda\frac{-\ell}{2\|p_{\hat{\j}_y}-y\|\,\|p_{{\j}_y}-y\|}\leq\lambda\frac{-\ell}{2L^2}<-2\varepsilon
	\end{align*}
	This contradicts the maximization in \Cref{appx-eq:maximization-y} and shows that $y\in\partial B_{\hat{\j}_y}$.
	\begin{figure}[ht!]
		\includegraphics[scale=0.8]{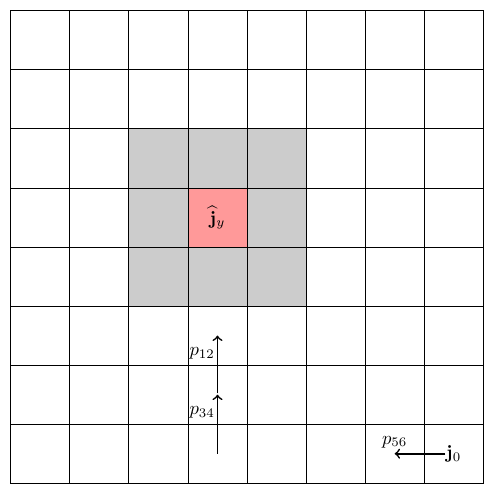}
		\caption{Visualization in two dimensions of a first estimator $p_{\hat{\j}_y}$ of $y$ indexed by $\widehat{\j}_y$, its neighborhood $\partial B_{\hat{\j}_y}$ in gray and the construction of the defining vectors $p_{12}, ..., p_{78}$ of the matrix $\mathbf{A}$ in \Cref{appx-eq:matrix-LGS}.}
		\label{fig:optimization_single_C}
	\end{figure}
	Next, we define the points $p_{\i\j}$ in such a way that the matrix in \Cref{appx-eq:matrix-LGS} is explicitly invertible. First, define the point of greatest distance to $\partial B_{\hat{\mathbf{j}}_y}$ by
	\begin{equation}\label{appx-eq:multi-mode-start}
		\j_0 \coloneqq \argmax_{\j\in\{0:7\}}\|\hat{\j}_y-\j\|
	\end{equation}
	and the direction compared to the canonical basis vectors by
	\begin{equation}\label{eq-appx:overlap}
		\begin{aligned}
			e_1 & =\mathrm{sgn}(\cs{(1,0,0),p_{\j_0}-p_{\hat{\j}_y}})(1,0,0)\,, \\
			e_2 & =\mathrm{sgn}(\cs{(0,1,0),p_{\j_0}-p_{\hat{\j}_y}})(0,1,0)\,, \\
			e_3 & =\mathrm{sgn}(\cs{(0,0,1),p_{\j_0}-p_{\hat{\j}_y}})(0,0,1)\,.
		\end{aligned}
	\end{equation}
	These allow us to pick the points $p_{\j_0}\!\!=\!p_8\!=\!p_6$, $p_{\j_0+e_1}\!\!=p_7$, $p_{\j_0+e_2}\!\!=p_5$, $p_{\hat{\j}_y-2e_3}\!\!=p_3$, $p_{\hat{\j}_y-3e_3}\!\!=\!p_4\!=\!p_1$ and $p_{\hat{\j}_y-4e_3}\!\!=p_2$ (the label $\mathbf{j}\in \{ 1:m\} ^3$ and $j\in \{ 1:8\}$  is simply an alternative notation), which define the direction vector $p_{\mathbf{i}\j}$ of the matrix $\mathrm{A}$ (see \Cref{fig:optimization_single_C}):
	\begin{equation}\label{appx-ex:matrix-p-vec}
		\begin{aligned}
			p_{(\hat{\j}_y-3e_3)(\hat{\j}_y-4e_3)} & =\frac{\ell}{2}e_3=p_{12}    \\
			p_{(\hat{\j}_y-2e_3)(\hat{\j}_y-3e_3)} & =\frac{\ell}{2}e_3=p_{34}    \\
			p_{(\j_0+e_2)\j_0}                     & =\frac{\ell}{2}e_2=p_{56}    \\
			p_{(\j_0+e_1)\j_0}                     & =\frac{\ell}{2}e_1=p_{78}\,.
		\end{aligned}
	\end{equation}
	Moreover, the values $\eta_{(\hat{\j}_y-2e_3)(\hat{\j}_y-3e_3)}\!\!=\eta_{34}$ and $\eta_{(\hat{\j}_y-3e_3)(\hat{\j}_y-4e_3)}\!\!=\eta_{12}$ satisfy
	\begin{equation}\label{appx-eq:lower-bound-single-C}
		\begin{aligned}
			\eta_{(\hat{\j}_y-2e_3)(\hat{\j}_y-3e_3)}-\eta_{(\hat{\j}_y-3e_3)(\hat{\j}_y-4e_3)} & = \frac{1}{\omega_{(\hat{\j}_y-2e_3)}^2}-\frac{2}{\omega_{(\hat{\j}_y-3e_3)}^2}+\frac{1}{\omega_{(\hat{\j}_y-4e_3)}^2}\\
            & = \frac{1}{\lambda^2}\Bigl(\|p_{\hat{\j}_y-2e_3}-y\|^2-2\|p_{\hat{\j}_y-3e_3}-y\|^2+\|p_{\hat{\j}_y-4e_3}-y\|^2\Bigr)\\
            & = \frac{1}{\lambda^2}\Bigl(\|(p_{\hat{\j}_y}-y)-\frac{2\ell}{2}e_3\|^2-2\|(p_{\hat{\j}_y}-y)-\frac{3\ell}{2}e_3\|^2+\|(p_{\hat{\j}_y}-y)-\frac{4\ell}{2}e_3\|^2\Bigr)\\
            & = \frac{1}{\lambda^2}\biggl(\Bigl(\frac{2\ell}{2}\Bigr)^2-2\Bigl(\frac{3\ell}{2}\Bigr)^2+\Bigl(\frac{4\ell}{2}\Bigr)^2+(-2\ell+6\ell-4\ell)\braket{p_{\hat{\j}_y}-y\,,\,e_3}\biggr)\\
			&=\frac{\ell^2}{2\lambda^2}
		\end{aligned}
	\end{equation}
	by applying Newton's shell theorem \cite[Thm.~9.7]{LiebLoss.2001} in the second equality. Clearly, this choice is motivated by the invertibility of the matrix $\mathbf{A}$. Due to the construction above and the analysis of the sign of $\lambda$ at the very beginning of the proof, we can calculate
	\begin{equation}\label{appx-eq:solve-LGS-single-mode}
		\begin{aligned}
			|\lambda| & = \sqrt{\Bigl|\frac{v_{34}-v_{12}}{\eta_{34}-\eta_{12}}\Bigr|} \\
			y_1       & =\frac{2}{\ell}\Bigl(v_{34}-\eta_{34}\lambda^2\Bigr)                                     \\
			y_2       & =\frac{2}{\ell}\Bigl(v_{56}-\eta_{56}\lambda^2\Bigr)                                     \\
			y_3       & =\frac{2}{\ell}\Bigl(v_{78}-\eta_{78}\lambda^2\Bigr)\,.
		\end{aligned}
	\end{equation}
	Due to the construction above --- in particular the lower bound in \Cref{appx-eq:lower-bound-single-C} and the numerical stability of floating‑point operations --- we obtain a numerically stable solution; that is, initial errors propagate only linearly through the algorithm. This completes the proof of the theorem.
\end{proof}
\begin{remark}
	The proof not only establishes the statement but also provides an explicit construction for a possible algorithm to determine the charge number and position of the Coulomb potential. However, several steps in the proof, while necessary for readability, can be numerically optimized. For example, the relationship between the number of boxes in the partition and the condition number of the matrix $\mathrm{A}$ is not optimized at the moment.
\end{remark}

\begin{remark}\label{appx-rmk:extension-single-mode}
	Based on the result above, it is evident that the local averages decrease as the distance $r$ increases. This observation allows one to generalize \Cref{appx-thm:single-Coulomb} to arbitrary rotationally invariant potentials --- without applying Newton's shell theorem. To do so, one optimizes the following difference over $r$ (e.g., via gradient descent):
	\begin{equation*}
		\int_{\R^d} |f_{\j}(x)|^2 V(rv) - \widehat{\omega}_{\j}
	\end{equation*}
	for three distinct points $\mathbf{j}$ that do not lie on a straight line. This yields three spheres, which are intersected in the subsequent step. The intersection approximates the center of the potential well. Consequently, the above method can be extended to any rotationally invariant algorithm in arbitrary dimensions.
\end{remark}

\subsection{Multi-Coulomb Learning}
Next, we present the complete proof of \Cref{thm:multi-Coulomb}, which is based on the idea of first finding a weak approximation of the Coulomb centers and charge numbers. This is achieved using a fundamental result from the perturbation theory of Coulomb centers, along with methods that constructs a diagonally dominant matrix approximating the charge numbers. In the next step, we reduce the problem separately to isolated single-Coulomb problems, in which we apply \Cref{thm:single-Coulomb} to improve the result by a factor of $\frac{1}{3}$. This method is then bootstrapped to obtain estimates for the positions and charge numbers in
\begin{equation}\label{appx-eq:multi-Coulomb}
	V(x) = \sum_{k=1}^{K} \frac{\lambda_k}{\nn{x - y_k}}\,.
\end{equation}
For the sake of clarity, we restate \Cref{thm:multi-Coulomb} along with the convention (see \Cref{eq:closure}) that
\begin{equation}\label{appx-eq:convention-boundary}
	\begin{aligned}
		\partial_{k}B_{\j} & = \bigcup_{\dist(\j,\mathbf{i})\leq k} B_{\mathbf{i}} \\
		B_{\j}\partial_{k} & = \partial_{k}B_{\j} \setminus \partial_{k-1}B_{\j}
	\end{aligned}
\end{equation}
for $k \in \{0 : m-1\}$, with $\partial_{-1}B_{\j} = \emptyset$. Moreover, we assume the minimal distance between two Coulomb centers to be
\begin{equation}\label{eq:min-center-dist}
	\min_{k.k'\in\{1:K\}}|y_k-y_{k'}|=y_*\,.
\end{equation}

\begin{theorem}\label{appx-thm:multi-Coulomb}
	Let $\varepsilon>0$ be the given precision, $L$, $\Lambda_*$, $\Lambda^*>0$, $K\in\N$ and $V$ be a multi-Coulomb potential
	\begin{equation*}
		V:\R^3\to\R, \quad V(x)=\sum_{k=1}^K\frac{\lambda_k}{\|x-y_k\|}
	\end{equation*}
	for unknown $K\in\N$, $\lambda_k\in[\Lambda_*,\Lambda^*]$ and $y_k\in[0,L]^3$ satisfying $\|y_k-y_{k'}\|\geq y_*$ for all $k\neq k'\in\{1:K\}$. For a fixed grid size $\ell \leq \frac{L}{m}$ with
	\begin{equation*}
		m=\Biggl\lceil L\max\left\{\Bigl(\frac{2K\Lambda^*}{\Lambda_*y_*}\Bigr)^{3}\!\!,\frac{12K\Lambda^*}{\Lambda_*y_*}, \bigl(64K\bigr)^{12}\!\!, \Bigl(\frac{90K}{y_*}\Bigr)^{12}\!\!,\Bigl(\frac{26}{y_*}\Bigr)^3\!\!,(144K(y_*\!+\!1)^2)^{12},\Bigl(\frac{4K}{y_*^2}\Bigr)^{12}\!\!, \Bigl(\frac{87\Lambda^* K}{y_*^2}\Bigr)^3\!\!, \bigl(44K^2(\Lambda^*\!+\!1)y_*\bigr)^{12}\right\}\Biggr\rceil\,,
	\end{equation*}
    we achieve estimators $\widehat{\lambda_i}$ and $\widehat{y_i}$ such that, with success probability at least $(1-\delta)\in(0,1)$,
	\begin{equation*}
		\max_{i\in\{1:K\}}\{|\widehat{\lambda}_i-\lambda_i|, |\widehat{y}_i-y_i|\}\leq \varepsilon,
	\end{equation*}
	requiring a total evolution time of
	\begin{equation*}
		T_{\operatorname{c}}=\cO\Bigl(L\,\mathrm{poly}(K, y_*, y_*^{-1}, \Lambda^*, \Lambda_*^{-1})\varepsilon^{-3}\ln\biggl(\frac{1}{\delta}\biggr)\Bigr)\,.
	\end{equation*}
\end{theorem}

Before diving into the proof, we emphasize that although we cannot estimate the values $V(p_{\j})$ at specific points and only observe local averages $\omega_{\j}$ for $\j \in \{0:m-1\}^3$, we nevertheless construct an algorithm that achieves precision $\varepsilon$, independent of the grid size $\ell$. Moreover, we view our result as a proof of principle, meaning that the constants are not optimized and leave room for numerical refinement.

\begin{proof}
	In a first step, we apply the data acquisition \Cref{protocol} with arbitrary grid size $\ell$. Then, \Cref{thm:sampling} provides estimators $\hat{\omega}_{\j}$ with precision $\varepsilon_\omega$ and probability $(1-\delta)$ for all $\j \in \{0:m-1\}^{3}$ requiring a total evolution time
	\begin{equation*}
		T_{\operatorname{c}}=\cO\left(\mathrm{poly}(\ell^{-1})  \varepsilon_\omega^{-3} \ln(\frac{m}{\delta}) \right)
	\end{equation*}
	and measurements after short time steps $t=\cO(\mathrm{poly}(\ell)\varepsilon)$.

	Second, we apply \Cref{prop:multi-Coulomb-stability-max} to achieve estimators ${p}_{\hat{\j}_k}$ for the Coulomb centers $\{y_k\}$. To find estimators ${p}_{\hat{\j}_k}$ satisfying the bound
	\begin{equation}\label{appx-eq:estimators-rough}
		\|{y_k-{p}_{\hat{\j}_k}}\|\leq\sqrt[3]{\ell}
	\end{equation}
	for all $k\in\{1:K\}$, we choose 
    \begin{equation}\label{appx-eq:proof-maximization-choice1}
		\varepsilon_\omega<c=\frac{K\Lambda^*}{2y_*}\qquad
	\end{equation}
    and require the assumption stated in \Cref{prop:multi-Coulomb-stability-max} (in \Cref{proof-eq:all_assumptions_ell}, we list all assumptions and connect those to $m$):
    \begin{equation}\label{appx-eq:proof-grid-step-size}
        \ell\leq \min\Bigl\{\frac{1}{64},\Bigl(\frac{y_*}{8}\Bigr)^3\!\!, \Lambda_*^3\Bigl(\max\Bigl\{\frac{2K\Lambda^*}{y_*}-\frac{2c}{2},0\Bigr\}\Bigr)^{-3}\!\!, \Bigl(\frac{2c}{288K\Lambda^*}\Bigr)^3, \frac{\Lambda_*}{2\sqrt{3}} \Bigl(\frac{2K\Lambda^*}{y_*}+2c\Bigr)^{-1}\Bigr\}
	\end{equation}
    Then, \Cref{appx-eq:maximization-multi} yields estimators ${p}_{\hat{\j}_k}$ satisfying \Cref{appx-eq:estimators-rough}. The estimators are robust with respect to small errors $\varepsilon_\omega$ induced by the estimation of $\widehat{\omega}_{\j}$ for all $\j\in\{0:m-1\}^3$ because of $\varepsilon_\omega<c$. By construction the index of the estimators ${\hat{\j}_k}$ satisfy
	\begin{equation}\label{appx-eq:proof-maximization-multi}
		\|\widehat{\omega}_{\hat{\j}_k} - \widehat{\omega}_{\j'}\| \geq 2c \qquad \text{for all} \qquad \j' \in \partial_{s_1} B_{\j_k}
	\end{equation}
    with $s_1 \coloneqq \lceil \ell^{-2/3} \rceil=\lceil\frac{\sqrt[3]{\ell}}{\ell}\rceil$. Note that \Cref{prop:multi-Coulomb-stability-max} shows that the distance between two estimators is at least $y_*-2\ell s_1>0$. This shows
	\begin{equation*}
		\|\omega_{\hat{\j}_k} - \omega_{\j'}\| \geq \bigl|\|\omega_{\j_k}-\widehat{\omega}_{\hat{\j}_k}\|-\|\widehat{\omega}_{\hat{\j}_k} - \omega_{\j'}\|\bigr|\geq 2c-\varepsilon_\omega\geq c
	\end{equation*}
    for all $\j' \in \partial_{s_1} B_{\hat{\j}_k}$. With the upper bound (see \Cref{proof-eq:all_assumptions_ell})
    \begin{equation}\label{appx-eq:ell-assum-1}
        \ell\leq \min\Bigl\{\frac{1}{64},\Bigl(\frac{y_*}{8}\Bigr)^3\!\!, \Bigl(\frac{2\Lambda_*y_*}{3K\Lambda^*}\Bigr)^{3}\!\!, \Bigl(\frac{1}{288y_*}\Bigr)^3, \frac{\Lambda_*}{2\sqrt{3}}\frac{y_*}{3K\Lambda^*}\Bigr\}
	\end{equation}
	which combines \Cref{appx-eq:proof-grid-step-size} and \Cref{appx-eq:proof-maximization-choice1}, we obtain a set of centers $\{{p}_{\hat{\j}_k}\}_{k=1}^K$ satisfying
	\begin{equation}\label{appx-eq:input-precision-lgs}
		\|{p}_{\hat{\j}_k}-y_k\|\leq \sqrt[3]{\ell}\eqqcolon \eta_y
	\end{equation}
	for all $k\in\{1:K\}$. Here, we introduce the label $\eta_y$ for the iteration process later in the proof. Note that \Cref{prop:multi-Coulomb-stability-max} also proves uniqueness in the sense that the number of Coulomb centers and the number of maximizers coincide.

	Third, we compute the charge numbers from the local averages and the approximated Coulomb centers by solving an approximation of the following linear system for $K$ points $j'_{k}$ satisfying $p_{\j'_{k}}\in B_{\hat{\j}_k}\partial_{s_3}$ with $s_3=s_1+\lceil\ell^{-11/12}\rceil+1$ for all $k\in\{1:K\}$ (see \Cref{fig:multi-coulomb-regions}): For $k'\in\{1:K\}$
    \begin{equation*}
		\omega_{\j'_{k'}}=\begin{pmatrix}
			\int_{\R^3}|f_{\j'_{k'}}(x)|^2\frac{1}{\nn{x-y_{1}}}dx & \cdots & \int_{\R^3}|f_{\j'_{k'}}(x)|^2\frac{1}{\nn{x-y_{k}}}dx & \cdots & \int_{\R^3}|f_{\j'_{k'}}(x)|^2\frac{1}{\nn{x-y_{K}}}dx
		\end{pmatrix}
		\begin{pmatrix}
			\lambda_1 \\
			\vdots    \\
			\lambda_k \\
			\vdots    \\
			\lambda_K
		\end{pmatrix}.
	\end{equation*}
	\begin{figure}[ht!]
		\includegraphics[scale=0.7]{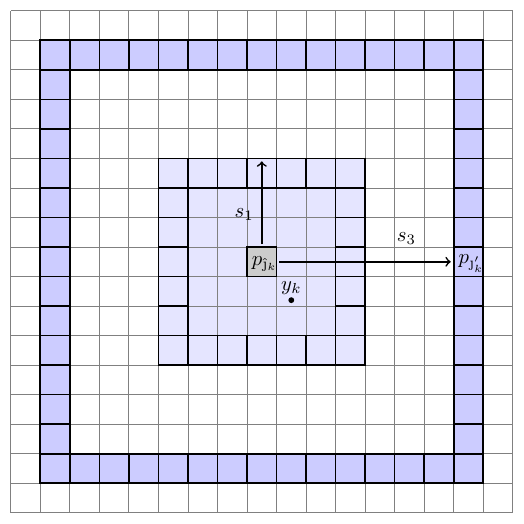}
		\caption{Schematic illustration of the choices of regions via $s_1$ and $s_3$ and associated point ${p}_{\hat{\j}_k}$, $y_k$, and $p_{\j'}$ used in the proof of \Cref{appx-thm:multi-Coulomb}}
		\label{fig:multi-coulomb-regions}
	\end{figure}
    The choice of the points $p_{j'_{k'}}$ balances two competing sources of error: the error introduced by approximating the true positions, which grows as $p_{j'_{k'}}$ approaches the singularity at $y_{k'}$, and the error arising from being sufficiently close to the singularities to obtain a diagonally dominant linear system. In the following, we use the notation
	\begin{equation}\label{appx-eq:diag-dom-matrix}
		\mathbb{M}\coloneqq\Bigl(\int_{\R^3}|f_{\j'_{k'}}(x)|^2\frac{1}{\nn{x-y_k}}dx\Bigr)_{k',k=1}^K\qquad\text{and}\qquad\widehat{\mathbb{M}}\coloneqq\Bigl(\int_{\R^3}|f_{\j'_{k'}}({x})|^2\frac{1}{\|{x-{p}_{\hat{\j}_k}}\|}dx\Bigr)_{k',k=1}^K\,.
	\end{equation}
	To show diagonal dominance of $\widehat{\mathbb{M}}$, \Cref{lem:diagonal-dominant-coulomb} requires the additional assumption
	\begin{equation}\label{appx-eq:cond-diag-dom}
		\max_{x\in B_{{\j}'_{k}}}\|{p}_{\hat{\j}_k}-x\|\leq \eta<\frac{y_*}{2(K-1)}\,.
	\end{equation}
    Due to the assumptions on $m$ (see \Cref{proof-eq:all_assumptions_ell}):
	\begin{equation}\label{appx-eq:proof-assumption2.1}
		\max_{x\in B_{{\j}'_{k}}}\|{p}_{\hat{\j}_k}-x\|\leq \sqrt{3}\ell s_3= \sqrt{3}(\ell\lceil\ell^{-2/3}\rceil+\ell\lceil\ell^{-11/12}\rceil+\ell)\leq\sqrt{3}(\sqrt[3]{\ell}+\sqrt[12]{\ell}+3\ell)\leq10\sqrt[12]{\ell}\leq\frac{5y_*}{16(K-1)}\,,
	\end{equation}
    which directly implies the above assumption with $\eta = \frac{5y_*}{16 (K-1)}$ and \Cref{lem:diagonal-dominant-coulomb}
	\begin{equation*}
		\|\widehat{\mathbb{M}}^{-1}\|\leq\frac{\eta y_*}{y_*-2\eta(K-1)}\leq\frac{y_*}{K-1}\,.
	\end{equation*} 
    This diagonal dominance of $\mathbb{M}$ allows us to solve the linear system $(\widehat{\omega}_{\j_k})_{k=1}^K = \widehat{\mathbb{M}}(\widehat{\lambda}_{k})_{k=1}^K$ and achieve a stable solution $(\widehat{\lambda}_{k})_{k=1}^K$ (see \cite{fawzilecture.2022numerical}). 
    
    Next, we prove the error made by solving the approximated linear system instead of the original one. For that, we need to find bounds on the following terms
	\begin{equation}\label{appx-eq:matric-inversion-error}
		\begin{aligned}
			\|\lambda-\widehat{\lambda}\|_\infty & \leq\|\mathbb{M}^{-1}\omega-\widehat{\mathbb{M}}^{-1}\widehat{\omega}\|_{\infty}\\
            & \leq\|\mathbb{M}^{-1}-\widehat{\mathbb{M}}^{-1}\|_{\infty\rightarrow\infty}\,\|\omega\|_{\infty}+\|\widehat{\mathbb{M}}^{-1}\|_{\infty\rightarrow\infty}\,\|\omega-\widehat{\omega}\|_{\infty}\\
            & \leq\|\mathbb{M}^{-1}\|_{\infty\rightarrow\infty}\,\|\mathbb{M}-\widehat{\mathbb{M}}\|_{\infty\rightarrow\infty}\,\|\widehat{\mathbb{M}}^{-1}\|_{\infty\rightarrow\infty}\,\|\omega\|_{\infty}+\|\widehat{\mathbb{M}}^{-1}\|_{\infty\rightarrow\infty}\,\varepsilon_\omega\,,
		\end{aligned}
	\end{equation}
    where $\|A\|_{p\rightarrow p}=\sup_{x\in\mathbb{R}^K}\frac{\|Ax\|_p}{\|x\|_p}$ denotes the matrix induced $p$-norms for $p\in\{1,...,\infty\}$, in particular $\|A\|_{2\rightarrow2}=\|A\|$, for matrices $A\in\mathbb{R}^{K\cross K}$. Note that only bounds on the infinity norm are required.

    Next, we repeat the diagonal dominance proof for $\mathbb{M}$, which follows similarly by (see \Cref{fig:multi-coulomb-regions} and \Cref{appx-eq:proof-assumption2.1}) 
    \begin{equation}\label{appx-eq:proof-assumption2.2}
		\max_{x\in B_{\j'_{k}}}\|y_{k}-x\|\leq \sqrt{3}(s_3+1+s_1)\ell\leq7\sqrt{3}\sqrt[12]{\ell}\leq14\sqrt[12]{\ell} \leq \frac{5y_*}{16(K-1)}=\eta<\frac{y_*}{K-1}\,,
	\end{equation}
    holds due to the assumptions on $\ell$ (see \Cref{proof-eq:all_assumptions_ell}) applied in the fourth inequality. Then, \Cref{lem:diagonal-dominant-coulomb} show diagonal dominance, in particular invertibility, and 
    \begin{equation*}
		\nn{\mathbb{M}^{-1}}\leq\frac{y_*}{(K-1)}\,.
	\end{equation*}
    Note that the inequalities $\|\mathbb{M}^{-1}x\|_\infty\leq\|\mathbb{M}^{-1}x\|_2\leq\|\mathbb{M}^{-1}\|\,\|x\|_2$ and $\|x\|_2\leq\sqrt{K}\|x\|_\infty$ directly prove
    \begin{equation*}
        \|\mathbb{M}^{-1}\|_{\infty\rightarrow\infty}=\sup_{x\in\mathbb{R}^K}\frac{\|\mathbb{M}^{-1}x\|_\infty}{\|x\|_\infty}\leq\sqrt{K}\sup_{x\in\mathbb{R}^K}\frac{\|\mathbb{M}^{-1}x\|_2}{\|x\|_2}=\sqrt{K}\|\mathbb{M}^{-1}\|\,.
    \end{equation*}
    Together with $\sqrt{K}\leq K-1$ for $k\geq2$, this shows directly
    \begin{equation*}
		\|\widehat{\mathbb{M}}^{-1}\|_{\infty\rightarrow\infty},\,\|\mathbb{M}^{-1}\|_{\infty\rightarrow\infty}\leq\frac{\sqrt{K}y_*}{K-1}\leq y_*\,.
	\end{equation*}
    Moreover,
    \begin{equation*}
        \|\omega\|_{\infty}\leq\max_{k'\in\{1:K\}}\sum_{k=1}^K\int_{\mathbb{R}^3}|f_{\mathbf{j}'_{k'}}(x)|^2\frac{|\lambda_k|}{\|x-y_k\|}dx\leq\frac{\Lambda^* K}{\ell(s_3-s_1)} \leq \frac{\Lambda^*K}{\sqrt[12]{\ell}}
    \end{equation*}
    with $s_3=s_1+\lceil\ell^{-11/12}\rceil+1$ (see \Cref{fig:multi-coulomb-regions}). Finally, we prove an upper bound on the difference $\|\mathbb{M}-\widehat{\mathbb{M}}\|_{\infty\rightarrow\infty}$ for all $k,k'\in\{1:K\}$
	\begin{equation}\label{appx-eq:difference-diag-dom-mat}
		\begin{aligned}
			\Bigl|\int_{\R^3}|f_{\j'_{k'}}(x)|^2\frac{1}{\|x-y_k\|}\d x-\int_{\R^3}|f_{\j'_{k'}}(x)|^2\frac{1}{\|x-{p}_{\hat{\j}_k}\|}\d x\Bigr| & \leq\int_{\R^3}|f_{\j'_{k'}}(x)|^2\frac{\bigl|\|x-{p}_{\hat{\j}_k}\|-\|x-y_k\|\bigr|}{\|x-{p}_{\hat{\j}_k}\|\|x-y_k\||}\d x \\
            & \leq\int_{\R^3}|f_{\j'_{k'}}(x)|^2\frac{\|{p}_{\hat{\j}_k}-y_k\|}{\sqrt[6]{\ell}} \d x\\
            & \leq\frac{\eta_y}{\sqrt[6]{\ell}}
		\end{aligned}
	\end{equation}
	because $\|x-{p}_{\hat{\j}_k}\|, \|x-y_k\|\geq \sqrt[12]{\ell}$ for all $x\in B_{\j'_k}$ (see \Cref{fig:multi-coulomb-regions}) and $\|{p}_{\hat{\j}_k}-y_k\|\leq \sqrt[3]{\ell}= \eta_y$ (see \Cref{appx-eq:input-precision-lgs}). Since $\|A\|_{\infty\rightarrow\infty}=\max_{k'\in\{1:K\}}\sum_{k=1}^K|A_{k'k}|$, this shows
	\begin{equation*}
		\|\mathbb{M}-\widehat{\mathbb{M}}\|_{\infty\rightarrow\infty}\leq K\frac{\eta_y}{\sqrt[6]{\ell}}\,.
	\end{equation*}
	This provides us with estimators $\widehat{\lambda}_{k}$ satisfying for all $k\in\{1:K\}$
	\begin{equation}\label{appx-eq:lambda-diag-dom}
        \begin{aligned}
            |\lambda_k-\widehat{\lambda}_k|&\leq \frac{Ky_*^2}{(K-1)^2}K\frac{\eta_y}{\sqrt[6]{\ell}}\frac{K\Lambda^*}{\sqrt[12]{\ell}}+y_*\varepsilon_\omega\\
            &\leq 2K\Lambda^*y_*^2\frac{\eta_y}{\ell^{3/12}}+y_*\varepsilon_\omega
        \end{aligned}
	\end{equation}
	with probability $(1-\delta)$.

	Fourth, we improve the charge numbers $\lambda$ and Coulomb centers iteratively by isolating each center and applying the single Coulomb result (\Cref{appx-thm:single-Coulomb}). For that, recall that the estimated centers ${p}_{\hat{\j}_{k}}$ satisfy 
    \begin{equation*}
		\|y_{k}-\widehat{y}_{k}\|\leq\eta_y\quad\text{and}\quad|\lambda_{k}-\widehat{\lambda}_{k}|\leq\frac{2K\Lambda^*y_*^2\eta_y}{\ell^{3/12}}+y_*\varepsilon_\omega\eqqcolon\eta_\lambda\,.
	\end{equation*}
    for all $k\in\{1:K\}$ and $\widehat{y}_k = p_{\hat{\mathbf{j}}_k}$ with $\hat{\mathbf{j}}_k$ such that $\widehat{y}_k\in B_{\hat{\mathbf{j}}_k}$. This notation is introduced because the estimator for $y_k$, which is $p_{\hat{\mathbf{j}}_k}$ in the first step, will be refined iteratively. In the following, we isolate the Coulomb center $k'\in\{1:K\}$ by estimating the following weighted local averages of single Coulomb values defined as
	\begin{equation*}
		\omega_{\j}^{(k')}\coloneqq\frac{1}{\ell^{-1/3}}\int_{\R^3}|f_\j({x})|^2\frac{\lambda_{k'}}{\nn{x-y_{k'}}}\d x=\frac{\omega_{\j}}{\ell^{-1/3}}-\frac{1}{\ell^{-1/3}}\sum_{k\neq k'}\int_{\R^3}|f_{\j}(x)|^2\frac{{\lambda}_{k}}{\nn{x-y_k}}\d x\,.
	\end{equation*}
    Here it is important that the points with index $\j$ are chosen in an annulus around the estimated Coulomb center (not to be confused with $\j'_{k'}$), i.e., $\j \in B_{\hat{\j}_k}\partial_{s_1+1}^{3}\coloneqq B_{\hat{\j}_k}\partial_{s_1+1}\cup B_{\hat{\j}_k}\partial_{s_1+2}\cup B_{\hat{\j}_k}\partial_{s_1+3}$. The above values are estimated by
	\begin{equation*}
		\begin{aligned}
			\widehat{\omega}_{\j}^{(k')}=\frac{\widehat{\omega}_{\j}}{\ell^{-1/3}}-\frac{1}{\ell^{-1/3}}\sum_{k\neq k'}\int_{\R^3}|f_{\j}(x)|^2\frac{\widehat{\lambda}_{k}}{\nn{x-\widehat{y}_{k}}}\d x\,,
		\end{aligned}
	\end{equation*}
	which approximate $\omega_{\j}^{(k')}$ well because
	\begin{equation*}
		\begin{aligned}
			|\omega_{\j}^{(k')}-\widehat{\omega}_{\j}^{(k')}| & \leq \frac{\varepsilon_\omega}{\ell^{-1/3}}+\frac{1}{\ell^{-1/3}}\sum_{k\neq k'}\int_{\R^3}|f_{\j}(x)|^2\Bigl|\frac{\lambda_{k}}{\nn{x-y_{k}}}-\frac{\widehat{\lambda}_{k}}{\nn{x-\widehat{y}_{k}}}\Bigr|\d x \\
            & \leq \frac{\varepsilon_\omega}{\ell^{-1/3}}+\frac{\eta_\lambda}{\ell^{-1/3}}\sum_{k\neq k'}\int_{\R^3}|f_{\j}(x)|^2\frac{1}{\nn{x-y_{k}}}\d x+\frac{\Lambda^*+\eta_\lambda}{\ell^{-1/3}}\sum_{k\neq k'}\int_{\R^3}|f_{\j}(x)|^2\Bigl|\frac{1}{\nn{x-y_{k}}}-\frac{1}{\nn{x-\widehat{y}_{k}}}\Bigr|\d x \\
            & \overset{\text{(1)}}{\leq} \frac{\varepsilon_\omega}{\ell^{-1/3}}+\frac{\eta_\lambda}{\ell^{-1/3}}\frac{K-1}{y_*-12\sqrt[3]{\ell}}+\frac{\Lambda^*+\eta_\lambda}{\ell^{-1/3}}\sum_{k\neq k'}\int_{\R^3}|f_{\j}(x)|^2\frac{\nn{\widehat{y}_{k}-y_{k}}}{\nn{x-y_{k}}\nn{x-\widehat{y}_{k}}}\d x\\
            & \overset{\text{(2)}}{\leq} \frac{\varepsilon_\omega}{\ell^{-1/3}}+\frac{\eta_\lambda}{\ell^{-1/3}}\frac{2K}{y_*}+\frac{\Lambda^*+\eta_\lambda}{\ell^{-1/3}}\sum_{k\neq k'}\int_{\R^3}|f_{\j}(x)|^2\frac{4\eta_y}{y_*^2}\d x\\
            & \overset{\text{(3)}}{\leq} \frac{\varepsilon_\omega}{\ell^{-1/3}}+\frac{\eta_\lambda}{\ell^{-1/3}}\frac{2K}{y_*}+\eta_y\frac{8\Lambda^*K}{\ell^{-1/3}y_*^2}
		\end{aligned}
	\end{equation*}
    where we used
    \begin{itemize}
        \item[\hypertarget{assum-3}{(1)}] The bounds $\|\widehat{y}_{k'}-x\|\leq\| \widehat{y}_{k'}-p_{\j}\|+\|p_{\j}-x\|\leq \ell\sqrt{3}(s_1+3+1)\leq \ell\sqrt{3}(\frac{\sqrt[3]{\ell}}{\ell}+5)\leq 12\sqrt[3]{\ell}\leq \frac{1}{2}y_*$ for all $x\in B_{\j}$ follow from the assumption $\ell\leq(\frac{y_*}{26})^3$ (see \Cref{proof-eq:all_assumptions_ell}), and together with the inverse triangle inequality
        \begin{equation*}
            \begin{aligned}
                \|x-y_k\|\geq \|y_k-\widehat{y}_{k'}\|-\|\widehat{y}_{k'}-x\| \geq y_* - 12\sqrt[3]{\ell} \geq \frac{1}{2}y_*\,.
            \end{aligned}
        \end{equation*}
        Moreover, we use again the inverse triangle inequality for $\bigl|\|x-y_k\|-\|x-\widehat{y}_k\|\bigr|\leq\|\widehat{y}_k-y_k\|$.
        \item[(2)] Additionally to the bounds in (1), $\|y_k-\widehat{y}_{k}\|\leq\|p_{\j}-x\|+\|y_k-\widehat{y}_{k}\|\leq \sqrt[3]{\ell}$ shows
        \begin{equation*}
            \|x-\widehat{y}_k\| \geq \|x-y_k\|-\|y_k-\widehat{y}_{k}\|\geq y_*-13\sqrt[3]{\ell} \geq \frac{1}{2}y_*\,.
        \end{equation*}
        \item[\hypertarget{assum-4}{(3)}] The bound $\eta_\lambda\leq\Lambda^*$ follows from the assumption $\ell\leq(\frac{1}{4Ky_*^2})^{12}$ and impose $\varepsilon_\omega\leq\frac{\Lambda^*}{2y_*}$ so that
        \begin{equation*}
            \eta_\lambda = \sqrt[12]{\ell}2K\Lambda^*y_*^2+y_*\varepsilon_\omega\leq\Lambda^*\,.
        \end{equation*}
    \end{itemize}
	Next, we insert the expression of $\eta_\lambda$
	\begin{equation*}
		\begin{aligned}
			|\omega_{\j}^{(k')}-\widehat{\omega}_{\j}^{(k')}| & \leq\frac{\varepsilon_\omega}{\ell^{-1/3}}+\biggl(\frac{\eta_y2K\Lambda^*y_*^2}{\ell^{3/12}}+y_*\varepsilon_\omega\biggr)\frac{2K}{\ell^{-1/3}y_*}+\eta_y\frac{8\Lambda^*K}{\ell^{-1/3}y_*^2}\\
            & \leq \frac{\varepsilon_\omega\Bigl(1+\frac{y_*2K}{y_*}\Bigr)}{\ell^{-1/3}}+\frac{\eta_y}{\ell^{-1/3}}\biggl(\frac{8\Lambda^*K}{y_*^2}+\frac{4K^2\Lambda^*y_*}{\ell^{3/12}}\biggr)\\
            & \leq \varepsilon_\omega \sqrt[3]{\ell}\Bigl(1+2K\Bigr)+\eta_y\biggl(\sqrt[3]{\ell}\frac{8\Lambda^*K}{y_*^2}+\sqrt[12]{\ell}4K^2\Lambda^*y_*\biggr)\\
            & \leq \varepsilon_\omega \sqrt[3]{\ell}3K+\eta_y\biggl(\sqrt[3]{\ell}\frac{8\Lambda^*K}{y_*^2}+\sqrt[12]{\ell}4K^2\Lambda^*y_*\biggr)\,.
		\end{aligned}
	\end{equation*}
    Then, we impose the following assumptions on $\ell$ (see \Cref{proof-eq:all_assumptions_ell}) for a variable $\widetilde{c}>1$:
    \begin{equation}\label{appx-eq:assumption-6}
		\begin{aligned}
			\ell & \leq\min\Bigl\{\Bigl(\frac{1}{\widetilde{c}6K}\Bigr)^{12},\,\Bigl(\frac{y_*^2}{\widetilde{c}32\Lambda^* K}\Bigr)^3, \Bigl(\frac{1}{\widetilde{c} 16K^2\Lambda^*y_*}\Bigr)^{12}\Bigr\}
		\end{aligned}
	\end{equation}
    which allows us to show
    \begin{equation}\label{appx-eq:error-local-averages}
		|\omega_{\j}^{(k')}-\widehat{\omega}_{\j}^{(k')}| \leq \varepsilon_\omega \ell^{3/12} \sqrt[12]{\ell}3K+\eta_y\biggl(\sqrt[3]{\ell}\frac{8\Lambda^*K}{y_*^2}+\sqrt[12]{\ell}4K^2\Lambda^*y_*\biggr)\leq\frac{\varepsilon_\omega \ell^{3/12}+\eta_y}{2\widetilde{c}}\,.
	\end{equation}
    Next, we consider two cases. First, $\varepsilon_\omega\ell^{3/12}\leq \eta_y$, which directly reduces the above equation to
    \begin{equation*}
		|\omega_{\j}^{(k')}-\widehat{\omega}_{\j}^{(k')}| \leq\frac{\eta_y}{\widetilde{c}}\,.
	\end{equation*}
	This allows us to exactly follow the proof of \Cref{appx-thm:single-Coulomb} starting from \Cref{appx-eq:multi-mode-start} but for the isolated, relative local averages $\widehat{\omega}_\j^{(k')}$:
    \begin{equation*}
		\omega_{\j}^{(k')}=\frac{\widetilde{\lambda}}{\nn{p_{\j}-y}}\qquad\text{or}\qquad\frac{\widetilde{\lambda}^2}{\omega_{\j}^2}=z_\j
	\end{equation*}
    for $\widetilde{\lambda}=\frac{\lambda}{\ell^{-1/3}}$. By the definitions 
	\begin{equation}\label{appx-ex:matrix-p-vec2}
		p_{\mathbf{i}\j}=\frac{1}{2}(p_{\mathbf{i}}-p_{\j})\qquad v_{\mathbf{i}\j}=(\|p_\mathbf{i}\|^2-\|p_\j\|^2)\qquad\eta_{\mathbf{i}\j} =(\frac{1}{\omega_\mathbf{i}^2}-\frac{1}{\omega_\j^2})\,.
	\end{equation}
	the above equations translates to (see \Cref{appx-eq:single-Coulomb-redefinition})
	\begin{equation*}
		v_{\mathbf{i}\j}=\eta_{\mathbf{i}\j}\widetilde{\lambda}^2+\cs{p_{\mathbf{i}\j},y}\,.
	\end{equation*}
	Next, we want to find eight appropriate lattice indices $\mathbf{i}$ and $\j$ such that the above linear system of equations can be solved (see  \Cref{appx-eq:matrix-LGS}). Then, as in \Cref{appx-eq:multi-mode-start}, we define
    \begin{equation*}
		\j_0^{(k')} \coloneqq \argmax_{\j\in B_{\hat{\j}_{k'}}\partial_{s_1+1}^{3}}\|\hat{\j}_{k'}-\j\|
	\end{equation*}
    for $\hat{\j}_y$ such that $\widehat{y}_{k'}\in B_{\hat{\j}_{k'}}$ and calculate the overlap with the canonical basis vectors in $e_1$, $e_2$, and $e_3$ (as in \Cref{eq-appx:overlap}). Then, we define the indices for the points $p_1,...,p_8$ by 
	\begin{equation*}
		\begin{aligned}
			1 & \mapsto\hat{\j}_{k'}-3e_3; & \qquad 2 & \mapsto\hat{\j}_{k'}-4e_3; & \qquad 3 & \mapsto\hat{\j}_{k'}-2e_3; & \qquad 4 & \mapsto\hat{\j}_{k'}-3e_3;\\
			5 & \mapsto\j_0^{(k')}+e_2; & \qquad 6 & \mapsto\j_0^{(k')}; & \qquad 7 & \mapsto\j_0^{(k')}+e_1; & \qquad 8 & \mapsto\j_0^{(k')}
		\end{aligned}
	\end{equation*}
	Next, we define $p_{ij}$, $v_{i,j}$, and $\eta_{i,j}$ for $i,j\in\{1:8\}$, (see \Cref{appx-ex:matrix-p-vec2}). By the same caluculation as in \Cref{appx-eq:lower-bound-single-C}, we have
	\begin{equation*}
		\eta_{34}-\eta_{12}=\frac{1}{2}\ell^2\,.
	\end{equation*}
    Next, we define $v_{ij}=(\|p_i\|^2-\|p_j\|^2)$ as in \Cref{appx-eq:single-Coulomb-redefinition} and note that these definitions directly leads to well-conditioned equations (see \Cref{appx-eq:solve-LGS-single-mode})
	\begin{equation*}
		\begin{aligned}
			|\widetilde{\lambda}| & = \sqrt{\Bigl|\frac{v_{34}-v_{12}}{\eta_{34}-\eta_{12}}\Bigr|} \\
			y_1       & =v_{34}-\eta_{34}\widetilde{\lambda}^2                                     \\
			y_2       & =v_{56}-\eta_{56}\widetilde{\lambda}^2                                     \\
			y_3       & =v_{78}-\eta_{78}\widetilde{\lambda}^2\,.
		\end{aligned}
	\end{equation*}
    This scheme provides the improved bound below on the location and on the scaled charge number $\widetilde{\lambda}$. However, the improvement on the scaled large number vanished when rescaling the charge number $\lambda=\ell^{-1/3}\widetilde{\lambda}$. For the centers of the Coulomb potential given by $y_{k'}=(y_1, y_2, y_3)$ we achive the following error
	\begin{equation*}
		\nn{y_{k'}-\widehat{y}_{k'}}\leq\frac{\eta_y}{\widetilde{c}}\coloneqq \eta_y^{(1)}\,.
	\end{equation*}
	Next, we apply again the diagonally dominant matrix inversion starting with precision $\eta_y^{(1)}$ in \Cref{appx-eq:input-precision-lgs} and finishing in \Cref{appx-eq:lambda-diag-dom}. In more detail, we chose $K$ points $j'_{k}$ satisfying $p_{\j'_{k}}\in B_{\hat{\j}_k}\partial_{s_3}$ with $\widehat{y}_{k}B_{\hat{\j}_k}$ and $s_3=s_1+\lceil\ell^{-11/12}\rceil+1$ for all $k\in\{1:K\}$ (see \Cref{fig:multi-coulomb-regions}). Moreover, we slightly adapt the notation (compare to \Cref{appx-eq:diag-dom-matrix}) to 
	\begin{equation*}
		\widehat{\mathbb{M}}\coloneqq\Bigl(\int_{\R^3}|f_{\j'_{k'}}({x})|^2\frac{1}{\|{x-\widehat{y}_{k}}\|}dx\Bigr)_{k',k=1}^K\,.
	\end{equation*}
	Similar to \Cref{appx-eq:cond-diag-dom}, we need to find the following bound
	\begin{equation*}
		\max_{x\in B_{{\j}'_{k}}}\|\widehat{y}_{k}-x\|\leq \eta<\frac{y_*}{2(K-1)}\,.
	\end{equation*}
    Due to the assumptions on $m$ (see \Cref{proof-eq:all_assumptions_ell}):
	\begin{equation}\label{appx-eq:proof-assumption3.1}
		\max_{x\in B_{{\j}'_{k}}}\|\widehat{y}_{k}-x\|\leq \sqrt{3}\ell (s_3+1)= \sqrt{3}(\ell\lceil\ell^{-2/3}\rceil+\ell\lceil\ell^{-11/12}\rceil+2\ell)\leq\sqrt{3}(\sqrt[3]{\ell}+\sqrt[12]{\ell}+4\ell)\leq12\sqrt[12]{\ell}\leq\frac{5y_*}{16(K-1)}\,,
	\end{equation}
    which directly implies the above assumption with $\eta = \frac{5y_*}{16 (K-1)}$ and \Cref{lem:diagonal-dominant-coulomb}
	\begin{equation*}
		\|\widehat{\mathbb{M}}^{-1}\|\leq\frac{\eta y_*}{y_*-2\eta(K-1)}\leq\frac{y_*}{K-1}\,.
	\end{equation*} 
    This diagonal dominance of $\mathbb{M}$ allows us to solve the linear system $(\widehat{\omega}_{\j_k})_{k=1}^K = \widehat{\mathbb{M}}(\widehat{\lambda}_{k})_{k=1}^K$ and achieve a stable solution $(\widehat{\lambda}_{k})_{k=1}^K$ (see \cite{fawzilecture.2022numerical}). To provide a explicit error bound, we need to find an upper bound on \Cref{appx-eq:matric-inversion-error}. The only terms, which have changed are $\|\widehat{\mathbb{M}}^{-1}\|$, which have changed above, and $\|M-\widehat{M}\|$, which is discussed in the following (compare to \Cref{appx-eq:difference-diag-dom-mat}):
	\begin{equation*}
		\Bigl|\int_{\R^3}|f_{\j'_{k'}}(x)|^2\frac{1}{\|x-y_k\|}\d x-\int_{\R^3}|f_{\j'_{k'}}(x)|^2\frac{1}{\|x-\widehat{y}_{k}\|}\d x\Bigr|\leq\int_{\R^3}|f_{\j'_{k'}}(x)|^2\frac{\|\widehat{y}_{k}-y_k\|}{\sqrt[6]{\ell}} \d x\leq\frac{\eta_y^{(1)}}{\sqrt[6]{\ell}}\,.
	\end{equation*}
	Then, following the same procedure as in \Cref{appx-eq:matric-inversion-error} until \Cref{appx-eq:lambda-diag-dom}, we obtain
	\begin{equation*}
        \begin{aligned}
            |\lambda_{k'}-\widehat{\lambda}_{k'}|\leq 2K\Lambda^*y_*^2\frac{\eta_y^{(1)}}{\ell^{3/12}}+y_*\varepsilon_\omega= 2K\Lambda^*y_*^2\frac{\eta_y}{\widetilde{c}\, \ell^{3/12}}+y_*\varepsilon_\omega\,.
        \end{aligned}
	\end{equation*}
	Next, we iterate the above case and procedure $S\in\N$ times until the case assumption $\varepsilon_\omega\ell^{3/12}\leq \frac{\eta_y}{\tilde{c}^S}$ is violated:
	\begin{equation*}
		\nn{y_{k'}-\widehat{y}_{k'}}\leq\max\Bigl\{\frac{\eta_y}{\widetilde{c}^S}, \ell^{3/12}\varepsilon_\omega\Bigr\}\eqqcolon \eta_y^{(S)}
	\end{equation*}
	with $\eta_y^{(0)}=\eta_y$ and
	\begin{equation*}
        \begin{aligned}
            |\lambda_{k'}-\widehat{\lambda}_{k'}|&\leq  2K\Lambda^*y_*^2\frac{\eta_y^{(S)}}{\ell^{3/12}}+y_*\varepsilon_\omega
		\end{aligned}
	\end{equation*}

    Then the case $\eta_y^{(S)}\leq\ell^{3/12}\varepsilon_\omega$ is achieved, apply the assumption to \Cref{appx-eq:error-local-averages} showing that
    \begin{equation}\label{appx-eq:case1-error}
		|\omega_{\j}^{(k')}-\widehat{\omega}_{\j}^{(k')}| \leq\frac{\varepsilon_\omega \ell^{3/12}+\eta_y^{(S)}}{2\widetilde{c}}\leq\frac{\ell^{3/12}\varepsilon_\omega}{\widetilde{c}}\,.
	\end{equation}
    By the same procedure as before, we achieve the better estimate
    \begin{equation*}
		\nn{y_{k'}-\widehat{y}_{k'}}\leq\frac{\ell^{3/12}\varepsilon_\omega}{\widetilde{c}}\,.
	\end{equation*}
    and by inserting the assumption 
    \begin{equation*}
        \begin{aligned}
            |\lambda_{k'}-\widehat{\lambda}_{k'}|\leq \Bigl(2K\Lambda^*y_*^2+y_*\Bigr)\varepsilon_\omega\,.
        \end{aligned}
	\end{equation*}
	Since, $\varepsilon_\omega$ is not fixed yet, we choose $\varepsilon_\omega= \min\Bigl\{\frac{1}{4K\Lambda^*y_*^2},\,\frac{1}{2y_*},1\Bigr\}\varepsilon$ and $\widetilde{c}>1$, which shows
	\begin{equation*}
		\nn{y_{k'}-\widehat{y}_{k'}}\leq\max\Bigl\{\frac{\eta_y}{\widetilde{c}^S}, \frac{\ell^{3/12}\varepsilon_\omega}{\widetilde{c}}\Bigr\}\leq\varepsilon\qquad\text{and}\qquad|\lambda_{k'}-\widehat{\lambda}_{k'}|\leq \varepsilon\,.
	\end{equation*}
	for all $k'\in\{1,...,K\}$using the assumption $\ell<1$. Next, we consider the number of iterations $S$ until the case $\eta_y^{(S)}> \ell^{3/12}\varepsilon_\omega$ is achieved. For that, we repeat the assumptions on $\ell$, which is summarized in \Cref{proof-eq:all_assumptions_ell}:
	\begin{equation}\label{appx-eq:ell-proof-assumption-7}
		\begin{aligned}
			\ell\leq \min\Bigl\{\Bigl(\frac{1}{4K\Lambda^*y_*^2}\Bigr)^{12}\!\!,\Bigl(\frac{1}{2y_*}\Bigr)^{12}\!\!,1\Bigr\}\,.
		\end{aligned}
	\end{equation}
	Then, using the definition of $\eta_y^{(S)}=\frac{\eta_y}{\widetilde{c}^S}=\frac{\sqrt[3]{\ell}}{\widetilde{c}^S}$, we have
	\begin{equation*}
		\widetilde{c}^S>\frac{1}{\varepsilon}\geq\frac{\sqrt[12]{\ell}}{\varepsilon_\omega}=\frac{\sqrt[3]{\ell}}{\ell^{3/12}\varepsilon_\omega}\,.
	\end{equation*}
	Therefore, the case $\eta_y^{(S)}\leq \ell^{3/12}\varepsilon_\omega$ with respect to $\widetilde{c}=\exp(1)$ is achieved for 
	\begin{equation*}		
		S = \Bigl\lceil\ln\bigl(\frac{1}{\varepsilon}\bigr)\Bigr\rceil\,.
	\end{equation*}
	For presentation, we summarize the assumptions on the grid size $\ell=\frac{L}{m}$ in the following:
	\begin{equation}\label{proof-eq:all_assumptions_ell}
		\begin{aligned}
			\ell & \overset{\eqref{appx-eq:ell-assum-1}}{\leq} \min\Bigl\{\frac{1}{64},\Bigl(\frac{y_*}{8}\Bigr)^3\!\!,\Bigl(\frac{2\Lambda_*y_*}{3K\Lambda^*}\Bigr)^{3}\!\!, \Bigl(\frac{1}{288y_*}\Bigr)^3, \frac{1}{2\sqrt{3}}\frac{\Lambda_*y_*}{3K\Lambda^*}\Bigr\}\\
			\ell & \overset{\eqref{appx-eq:proof-assumption2.1}}{\leq}\Bigl(\frac{1}{64K}\Bigr)^{12}\leq\Bigl(\frac{1}{32(K-1)}\Bigr)^{12}\\
            \ell & \overset{(\ref{appx-eq:proof-assumption2.2}; \ref{appx-eq:proof-assumption3.1})}{\leq}\Bigl(\frac{y_*}{90K}\Bigr)^{12}\leq\Bigl(\frac{y_*}{45(K-1)}\Bigr)^{12}\leq\Bigl(\frac{5y_*}{224(K-1)}\Bigr)^{12}\\
            \ell & \overset{\hyperlink{assum-3}{\#(1)}}{\leq}\Bigl(\frac{y_*}{26}\Bigr)^3\\
            \ell & \overset{\hyperlink{assum-4}{\#(3)}}{\leq} \Bigl(\frac{1}{4Ky_*^2}\Bigr)^{12}\\
			\ell & \overset{\eqref{appx-eq:assumption-6}}{\leq}\min\Bigl\{\Bigl(\frac{1}{e^16K}\Bigr)^{12},\,\Bigl(\frac{y_*^2}{e^132\Lambda^* K}\Bigr)^3, \Bigl(\frac{1}{e^1 16K^2\Lambda^*y_*}\Bigr)^{12}\Bigr\}\\
			\ell & \overset{\eqref{appx-eq:ell-proof-assumption-7}}{\leq} \min\Bigl\{\Bigl(\frac{1}{4K\Lambda^*y_*^2}\Bigr)^{12}\!\!,\Bigl(\frac{1}{2y_*}\Bigr)^{12}\!\!,1\Bigr\} \leq \min\Bigl\{\Bigl(\frac{1}{4K\Lambda^*y_*^2}\Bigr)^{12},\,\Bigl(\frac{1}{2y_*}\Bigr)^{12},1\Bigr\}
		\end{aligned}
	\end{equation}
	Together,
	\begin{equation}\label{appx-eq:upper-bound-ell}
		\begin{aligned}
			\ell & \leq \min\biggl\{\!\Bigl(\!\frac{\Lambda_*y_*}{2K\Lambda^*}\!\Bigr)^{3}\!\!,\frac{\Lambda_*y_*}{12K\Lambda^*}\!,\Bigl(\!\frac{1}{64K}\!\Bigr)^{12}\!\!,\Bigl(\!\frac{y_*}{90K}\!\Bigr)^{12}\!\!,\Bigl(\frac{y_*}{26}\Bigr)^3\!\!,\Bigl(\!\frac{1}{144K(y_*\!+\!1)^2}\!\Bigr)^{12}\!\!, \Bigl(\!\frac{y_*^2}{4K}\!\Bigr)^{12}\!\!, \Bigl(\!\frac{y_*^2}{87\Lambda^* K}\!\Bigr)^3\!\!, \Bigl(\!\frac{1}{44K^2(\Lambda^*\!+\!1)y_*}\!\Bigr)^{12}\!\biggr\},
		\end{aligned}
	\end{equation}
    which directly relates to the bound on $m$ given in the statement. Additionally, note that 
    \begin{equation}\label{appx-eq:varepsilon-multi-coulomb}
        \varepsilon_\omega= \min\Bigl\{\frac{1}{4K\Lambda^*y_*^2},\,\frac{1}{2y_*},1\Bigr\}\varepsilon
    \end{equation}
    induces the dependencies on the total evolution time, which finishes the proof.
\end{proof}
Next, we prove the following auxiliary result used in the above proof.:
\begin{proposition}\label{prop:multi-Coulomb-stability-max}
    Let $L$, $\Lambda_*$, $\Lambda^*>0$, unknown $K\in\N$ with limit $K\leq\widetilde{K}\in\N$ and $V$ a multi-Coulomb potential
	\begin{equation*}
		V:\R^3\to\R, \quad V(x)=\sum_{k=1}^{{K}}\frac{\lambda_k}{\|x-y_k\|}
	\end{equation*}
	for $\lambda_k\in[\Lambda_*,\Lambda^*]$, $y_k\in[0,L]^3$ satisfying $\|y_k-y_{k'}\|\geq y_*$ for all $k\neq k'\in\{1:{K}\}$. For $c>0$ and fixed grid size $\ell$ with
    \begin{equation*}\label{eq:grid-step-size}
        \ell\leq \min\Bigl\{\frac{1}{64}, \Bigl(\frac{y_*}{8}\Bigr)^3\!\!, \Lambda_*^3\Bigl(\max\Bigl\{\frac{2\widetilde{K}\Lambda^*}{y_*}-\frac{c}{2},0\Bigr\}\Bigr)^{-3}\!\!, \Bigl(\frac{c}{288K\Lambda^*}\Bigr)^3, \frac{\Lambda_*}{2\sqrt{3}} \Bigl(\frac{2\widetilde{K}\Lambda^*}{y_*}+c\Bigr)^{-1}\Bigr\}\,, 
	\end{equation*}
    we can prove the following equivalence: If there are $K$ indices $\j_k\in\{0:m-1\}$ for $k\in \{1:K\}^3$ such that the points $p_{\j_k}$ are separated by distances of at least $y_* - 2\ell s_1>0$ with $s_1 = \lceil \ell^{-2/3} \rceil$, and if the following holds:
	\begin{equation}\label{appx-eq:maximization-multi}
		\omega_{\j_k} - \omega_{\j'} \geq c \qquad \text{for all} \qquad \j' \in \partial_{s_1} B_{\j_k}\,,
	\end{equation}
	then the Coulomb potential consists of $K$ Coulomb centers $\{y_k\}_{k=1}^K$ satisfying
	\begin{equation*}
		\|y_k - p_{\j_k}\| \leq \sqrt[3]{\ell}\,.
	\end{equation*}
	Moreover, assuming that the Coulomb potential consists of $K$ Coulomb centers, there exists a set of exactly $K$ maximizers satisfying the condition in \Cref{appx-eq:maximization-multi}.
\end{proposition}

\begin{proof}
    Before starting with the proof, we note that $y_*-2\ell s_1>0$ because $2\ell s_1\leq 2(\sqrt[3]{\ell}+\ell)\leq 4\sqrt[3]{\ell}\leq\frac{y_*}{2}$ holds by the assumption $\ell\leq(\frac{y_*}{8})^3$. Then, we first prove the direction that if there are $K$ maxima $\{\omega_{\j_{k}}\}_{k=1}^K$ fulfilling \Cref{appx-eq:maximization-multi} and distanced by at least $y_*-2\ell s_1$, i.e.~$\min_{k,k'\in\{1:K\}}\|p_{\j_k}-p_{\j_{k'}}\|\geq y_*-2s_1\ell $, then $\|y_k - p_{\j_k}\| \leq \sqrt[3]{\ell}$ for all $k\in\{1:K\}$ and $s_1=\lceil\ell^{-2/3}\rceil$. We prove the statement by contradiction. Therefore, we assume that there is a set of K points $\j_k$ satisfying \Cref{appx-eq:maximization-multi} and that there exists a $k'\in\{1:K\}$ such that $\|{p_{\j_k}-y_{k'}}\|>\sqrt[3]{\ell}$ holds true for all $k\in\{1:K\}$. Then, we aim to show a contradiction. W.l.o.g.~we fix the index $\j_{k'}$ of the point $p_{\j_{k'}}$ closest to $y_{k'}$ and analyze the differences of the local averages related to $p_{\j_{k'}}$ and a point on the boundary $p_{\j'_{k'}}\in B_{\j_{k'}}\partial_{s_1}$ (see \Cref{fig:cases-uniqueness-max}):
    \begin{equation*}
		\begin{aligned}
			\omega_{\j_{k'}}-\omega_{\j'_{k'}} & =\int_{\R^3} \bigl(|f_{\j_{k'}}(x)|^2-|f_{\j'_{k'}}(x)|^2\bigr) V(x) \d x\\
            & =\int_{\R^3} \bigl(|\ell^{-1/2}\,g\!\left(\bigl\|\ell^{-1}x-\tfrac{1}{2}(1,\ldots,1)^T-\j_{k'}\bigr\|\right)|^2-|\ell^{-1/2}\,g\!\left(\bigl\|\ell^{-1}x-\tfrac{1}{2}(1,\ldots,1)^T-\j'_{k'}\bigr\|\right)|^2\bigr) V(x) \d x\\
            & =\int_{\R^3} \bigl(|\ell^{-1/2}\,g\!\left(\bigl\|\ell^{-1}(x-p_{\j_{k'}})\bigr\|\right)|^2-|\ell^{-1/2}\,g\!\left(\bigl\|\ell^{-1}(x-p_{\j'_{k'}})\bigr\|\right)|^2\bigr) V(x) \d x\\
            & =\int_{\R^3} \ell^{-1}|g(\|\ell^{-1}x\|)|^2 \bigl(V(x+p_{\j_{k'}})-V(x+p_{\j'_{k'}})\bigr) \d x
		\end{aligned}
	\end{equation*}
	\begin{figure}[ht!]
		\includegraphics[scale=0.8]{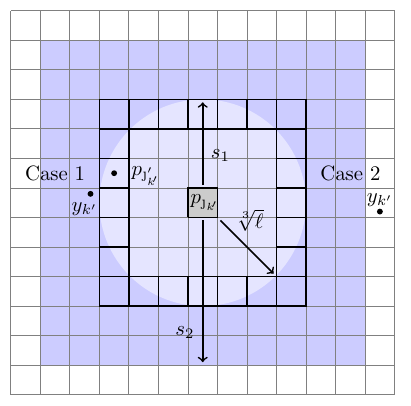}
		\caption{Schematic illustration of the regions explicitly defined in the proof of \Cref{prop:multi-Coulomb-stability-max}.}
		\label{fig:cases-uniqueness-max}
	\end{figure}
	\hypertarget{case-1}{\hspace{-0.8ex}First, we consider the case $y_{k'}\in\partial_{s_2}B_{\j_{k'}}\backslash\{x\,|\,\|x-p_{\j_{k'}}\|\leq \sqrt[3]{\ell}\}$ for $s_2=\lceil (1+\eta)\ell^{-2/3}\rceil$ and $\eta\in(0,1)$ (see \Cref{fig:cases-uniqueness-max}).} Next, we choose $p_{\j'_{k'}}=\argmin_{\j\in B_{\hat{\j}_{k'}}\partial_{s_1}}\|p_{\j}-y_{k'}\|$ where we used the short notation $\j\in B_{\hat{\j}_{k'}}\partial_{s_1}$ for $\j\in\{0:m-1\}^3$ such that $p_{\j}\in B_{\hat{\j}_{k'}}\partial_{s_1}$. Due to the geometry of the set $\partial_{s_1}B_{\j_{k'}}$,
	\begin{equation}\label{appy-eq:max-dist-annulus}
		\|p_{\j'_{k'}}-y_{k'}\|\leq\sqrt{3}\ell (s_2-s_1+\frac{1}{2})\leq\sqrt{3}(\eta\sqrt[3]{\ell} + \frac{3}{2}\ell)\,,
	\end{equation}
    where the value $1/2$ comes from the distance of the point $p_{\j'_{k'}}$ to the boundary of $\partial_{s_1}B_{\j_{k'}}$. Then, we write out $V$ and separate the case $k=k'$ and $k\neq k'$ for which we drop the negative terms to obtain the following bound:
	\begin{equation}\label{appx-eq:not-increasing1}
		\begin{aligned}
			\omega_{\j_{k'}}-\omega_{\j'_{k'}} & =\int_{\R^3} \ell^{-1}|g(\|\ell^{-1}x\|)|^2 \bigl(V(x+p_{\j_{k'}})-V(x+p_{\j'_{k'}})\bigr) \d x\\
            &\leq \sum_{k\neq k'}^K \int_{\R^3} \ell^{-1}|g(\|\ell^{-1}x\|)|^2\frac{\lambda_k}{\nn{x+p_{\j_{k'}} - y_{k}}} \d x\\
            & \qquad\qquad+ \lambda_{k'} \int_{\R^3} \ell^{-1}|g(\|\ell^{-1}x\|)|^2 \left(\frac{1}{\nn{x+p_{\j_{k'}} - y_{k'}}} - \frac{1}{\|x+p_{\j'_{k'}} - y_{k'}\|}\right) \d x \\
            & \leq \frac{\widetilde{K}\Lambda^*}{y_*-\sqrt{3}(\eta\sqrt[3]{\ell}+2\ell)} + \lambda_{k'}\frac{2}{\sqrt[3]{\ell}} - \lambda_{k'}\frac{1}{\sqrt{3}(\eta\sqrt[3]{\ell} + 2\ell)}\,.
		\end{aligned}
	\end{equation}
    In the last line, we use the following bound for the first term: for all $x\in B_{0}$ and $k\neq k'$, we have
	\begin{equation*}
		\begin{aligned}
			\nn{x+p_{\j_{k'}} - y_{k}}&\geq\nn{y_{k}-y_{k'}}-\nn{p_{\j_{k'}} - y_{k'}}-\nn{x}\\
            &\geq y_*-\sqrt{3}(\eta\sqrt[3]{\ell} + \frac{3}{2}\ell)-\sqrt{3}\frac{\ell}{2}\\
            &\geq y_*-\sqrt{3}(\eta\sqrt[3]{\ell}+2\ell)
		\end{aligned}
	\end{equation*}
	where we used \Cref{appy-eq:max-dist-annulus}. For the second term in \Cref{appx-eq:not-increasing1}, we apply the inverse triangle inequality and the assumptions $y_{k'}\in\partial_{s_2}B_{\j_{k'}}\backslash\{x\,|\,\|x-p_{\j_{k'}}\|\leq \sqrt[3]{\ell}\}$ as well as  $\sqrt[3]{\ell}\geq 3^{2/3}\ell \geq \sqrt{3}\ell$, which follows from $\ell\leq\frac{1}{3}$, to find
    \begin{equation}\label{appx-eq:prop-assum-ell-1.1}
        \nn{x+p_{\j_{k'}} - y_{k'}}\geq \nn{p_{\j_{k'}} - y_{k'}}-\nn{x}\geq\sqrt[3]{\ell}-\frac{\sqrt{3}}{2}\ell\geq \frac{1}{2}\sqrt[3]{\ell}
    \end{equation}
    for all $x\in B_{0}$. Also for term three of \Cref{appx-eq:not-increasing1}, we apply the triangle inequality to show
    \begin{equation*}
        \|x+p_{\j'_{k'}} - y_{k'}\|\leq\|p_{\j'_{k'}} - y_{k'}\|+\|x\|\leq\sqrt{3}(\eta\sqrt[3]{\ell} + 2\ell)\,.
    \end{equation*}
    Next, we fix $\eta=\frac{1}{3\sqrt{3}}-\frac{1}{6}>0$, which induces
    \begin{equation}\label{appx-eq:prop-assum-ell-1.2}
        \sqrt{3}(\eta\sqrt[3]{\ell} + 2\ell)=\frac{1}{3}\sqrt[3]{\ell}-\frac{\sqrt{3}}{6}\bigl(\sqrt[3]{\ell} - 12\ell\bigr)\leq\frac{1}{3}\sqrt[3]{\ell}
    \end{equation}
    because $\sqrt[3]{\ell} - 12\ell\geq0$ is induced by the assumption $\ell\leq\frac{1}{4^{3}}\leq\frac{1}{12^{3/2}}$. This simplifies \Cref{appx-eq:not-increasing1} as follows:
    \begin{equation}\label{appx-eq:condition-ellnew}
		\begin{aligned}
			\omega_{\j_{k'}}-\omega_{\j'_{k'}} \leq \frac{\widetilde{K}\Lambda^*}{y_*-\frac{1}{3}\sqrt[3]{\ell}} - \lambda_{k'}\frac{1}{\sqrt[3]{\ell}}\leq \frac{2\widetilde{K}\Lambda^*}{y_*} - \Lambda_*\frac{1}{\sqrt[3]{\ell}}\,, 
		\end{aligned}
	\end{equation}
	where we used the assumption $\ell\leq y_*^3\leq(\frac{3y_*}{2})^3$. For $\frac{c}{2}\geq \frac{2\widetilde{K}\Lambda^*}{y_*}$, the above bound directly reduces to  
	\begin{equation*}
		\begin{aligned}
			\omega_{\j_{k'}}-\omega_{\j'_{k'}} \leq \frac{2\widetilde{K}\Lambda^*}{y_*} - \Lambda_*\frac{1}{\sqrt[3]{\ell}}\leq\frac{c}{2}\,, 
		\end{aligned}
	\end{equation*}
	which contradicts the assumption. Otherwise, if $\frac{c}{2}< \frac{2\widetilde{K}\Lambda^*}{y_*}$, the assumption
	\begin{equation}\label{appx-eq:condition-ell1}
		\ell\leq\Lambda_*^3\Bigl(\frac{2\widetilde{K}\Lambda^*}{y_*}-\frac{c}{2}\Bigr)^{-3}\,,
	\end{equation}
	implies the following bound:
	\begin{equation*}
		\begin{aligned}
			\omega_{\j_{k'}}-\omega_{\j'_{k'}} \leq \frac{2\widetilde{K}\Lambda^*}{y_*} - \Lambda_*\frac{1}{\sqrt[3]{\ell}}\leq \frac{c}{2} 
		\end{aligned}
	\end{equation*}
	for any given $c>0$. This contracts the assumption that $|\omega_{\j_{k'}}-\omega_{\j'_{k'}}|\geq c$ for all $\j' \in \partial_{s_1} B_{\j_{k'}}$. Therefore, we have shown that the first case cannot be achieved.

	In the second case (compare to \hyperlink{case-1}{first case}), we assume that there is a $k'\in\{1:K\}$ such that $y_{k'}\in(\partial_{s_2}B_{\j_{k'}})^c$ for $s_2=\lceil (1+\eta)\ell^{-2/3}\rceil$ and $\eta=\frac{1}{3\sqrt{3}}-\frac{1}{6}>0$ (see \Cref{fig:cases-uniqueness-max}). For that, we repeat that a twice differentiable function $h:\R^2\supset U\rightarrow\R$ is called harmonic if $\Delta h = 0$ for an open subset $U$. Recall that the Coulomb potential is a harmonic function because of
    \begin{equation*}
        \frac{\partial^2}{\partial x_1^2}\frac{1}{\|x-p\|}=-\frac{\partial}{\partial x_1}\frac{x_1-p_1}{\|x-p\|^3}=\frac{1}{\|x-p\|^3}-3\frac{(x_1-p_1)^2}{\|x-p\|^5}
    \end{equation*}
    for $p\in\R^3\backslash U$. By adding all three derivatives, the above equation sums up to $0$. Next, we define
	\begin{equation*}
		U=\{x\in\partial_{s_1}B_{\j_{k'}}\,|\,\|x-z\|_\infty>\frac{\ell}{2} \text{ for all } z\in(\partial_{s_2}B_{\j_{k'}})^{c}\}\,,
	\end{equation*}
	i.e.~interior of the convex hull of the box-centers in $\partial_{s_1}B_{\j_{k'}}$. By linearity of the property of being harmonic, the function $V$ is harmonic on $U$. Next, we consider a local average as a function in $p$, i.e. 
    \begin{equation*}
        \omega_{p}: U\ni p\mapsto\int_{U}\ell^{-1}|g(\|\ell^{-1}(x-p)\|)|^2V(x)dx = V(p)\,,   
    \end{equation*}
    where the last equality is a direct application of Newton's shell theorem \cite[Thm.~9.7]{LiebLoss.2001}. Then, the weak maximum principle \cite[Thm.~1 on p.~327]{Evans.2010} directly states that the maximum is attained at the boundary of $U$. For that reason, there is a $p_*\in \{p\,|\, \|p-p_{\j_{k'}}\|_\infty=s_1\ell\}$ such that $\omega_{p_*}-\omega_{\j_{k'}}\geq 0$. Additionally, there is a $p_{\j'_{k'}}\in B_{\j_{k'}}\partial_{s_1}$ defined by our discrete lattice $\j'_{k'}\in\{1,...,m\}^3$ so that $\|p_{\j'_{k'}}-p_*\|\leq\frac{\ell}{2}$. Since. $y_k\notin \partial_{s_2}B_{\j_{k'}}$ for all $k\in\{1:K\}$, Newton's shell theorem \cite[Thm.~9.7]{LiebLoss.2001} can be applied and shows 
    \begin{equation}\label{appx-eq:proof-contradiction-far-y}
		\begin{aligned}
			\omega_{\j_{k'}}-\omega_{\j'_{k'}} & =\omega_{\j_{k'}}-\omega_{p_*}+\omega_{p_*}-\omega_{\j'_{k'}}\\
            &\overset{\text{(1)}}{\leq} V(p_{*})-V(p_{\j'_{k'}})\\
            &\overset{\text{(2)}}{\leq} \widetilde{K}\Lambda^*\frac{\|p_{\j'}-p_{*}\|}{\|y_{k'}-p_{*}\|\,\|y_{k'}-p_{\j'_{k'}}\|}\\
            &\overset{\text{(3)}}{\leq} \widetilde{K}\Lambda^*\frac{\ell}{\eta\ell^{1/3}(1+\eta)\ell^{1/3}}\\
            &\leq \frac{\widetilde{K}\Lambda^*}{\eta^2}\sqrt[3]{\ell}
		\end{aligned}
	\end{equation}
    where we used in (1) $\omega_{\j_{k'}}-\omega_{p_*}\leq 0$, in (2) that $y_{k'}$ is closest to $p_*$ under the Coulomb centers, and in (3) the definition of $p_{\j'_{k'}}$, $p_*$, and the assumption of the case $y_{k'}\in(\partial_{s_2}B_{\j_{k'}})^c$. For $\eta=\frac{1}{3\sqrt{3}}-\frac{1}{6}\leq \frac{1}{6}$, direct calculations show 
	\begin{equation*}
		\frac{1}{12}\leq\eta=\frac{1}{3\sqrt{3}}-\frac{1}{6}\leq \frac{1}{6}\,.
	\end{equation*}
	so that the assumption
	\begin{equation}\label{appx-eq:condition-ell2}
        \ell\leq\Bigl(\frac{c}{288\widetilde{K}\Lambda^*}\Bigr)^3
	\end{equation}
    implies
    \begin{equation*}
		|\omega_{\j_{k'}}-\omega_{\j'_{k'}}|\leq \frac{\widetilde{K}\Lambda^*}{\eta^2}\sqrt[3]{\ell}\leq 144 \widetilde{K}\Lambda^* \sqrt[3]{\ell}\leq\frac{c}{2}\,,
	\end{equation*}
	which contradicts our initial assumption and completes the first implication of the proof. The assumptions on $\ell$ are summarized in the following:
	\begin{equation}\label{appx-eq:assumptions-on-ell}
		\begin{aligned}
			\ell & \overset{\text{(\ref{appx-eq:prop-assum-ell-1.1}-\ref{appx-eq:prop-assum-ell-1.2})}}{\leq}\frac{1}{64}\,,\\
			\ell & \overset{\text{(\ref{appx-eq:condition-ellnew})}}{\leq}\Bigl(\frac{y_*}{8}\Bigr)^3\leq y_*^3\leq\Bigl(\frac{3y_*}{2}\Bigr)^3\,,\\
			\ell & \overset{\eqref{appx-eq:condition-ell1}}{\leq}\Lambda_*^3\Bigl(\max\Bigl\{\frac{2K\Lambda^*}{y_*}-\frac{c}{2},0\Bigr\}\Bigr)^{-3}\,,\\
			\ell & \overset{\eqref{appx-eq:condition-ell2}}{\leq}\Bigl(\frac{c}{288\widetilde{K}\Lambda^*}\Bigr)^3\,.
		\end{aligned}
	\end{equation}

	In the next step, we prove the opposite direction of the statement, i.e.~given a $K$-Coulomb potential, we show that there are exactly $K$ optimizers in terms of \Cref{appx-eq:maximization-multi}. First, we take the boxes $B_{\j_k}$ that satisfy $y_k\in B_{\j_k}$ for all $k\in\{1:K\}$. Then, we fix a $k'\in\{1:K\}$, choose $\j'_{k'}\in B_{\j_k}\partial_{s_1}$, and follow similar ideas as \Cref{appx-eq:not-increasing1}:
    \begin{equation*}
		\begin{aligned}
			\omega_{\j_{k'}}-\omega_{\j'_{k'}} & =\int_{\R^3} \ell^{-1}|g(\|\ell^{-1}x\|)|^2 \bigl(V(x+p_{\j_{k'}})-V(x+p_{\j'_{k'}})\bigr) \d x\\
            & \geq-\sum_{k\neq k'}^K\frac{\lambda_k}{\|p_{\j'_{k'}} - y_k\|} + \int_{\R^3} \ell^{-1}|g(\|\ell^{-1}x\|)|^2\frac{\lambda_{k'}}{\|x-p_{\j_{k'}}-y_{k'}\|} \d x - \frac{\lambda_{k'}}{\|p_{\j'_{k'}} - y_{k'}\|}\\
		\end{aligned}
	\end{equation*}
	In the last inequality, we just drop positive terms coming from the Coulomb centers $y_k$ for $k\neq k'$ and applied Netwons shell theorem \cite[Thm.~9.7]{LiebLoss.2001} to the terms not supported in the box $B_{\j_{k'}}$. Next, we use the following bounds:
    \begin{itemize}
        \item[(1)] By definition of $s_1$ and due to the assumption $y_*\geq 8\sqrt[3]{\ell}$: 
        \begin{equation*}
            \|p_{\j'_{k'}} - y_k\|\geq\|y_{k'} - y_k\|-\|p_{\j'_{k'}} - y_{k'}\|\geq y_*-\sqrt{3}(s_1+1)\ell\geq y_*-4\sqrt[3]{\ell}\geq\frac{y_*}{2}
        \end{equation*}
        \item[(2)] Since $y_{k'},\,x\in B_{0}$
        \begin{equation*}
            \|x-p_{\j_{k'}}-y_{k'}\|\leq\sqrt{3}\ell\,.
        \end{equation*}
        \item[(3)] By definition of $p_{\j'_{k'}}\in B_{\j_{k'}}\partial_{s_1}$ and $\sqrt[3]{\ell}\geq\ell$
        \begin{equation*}
            \|p_{\j'_{k'}} - y_{k'}\|\geq \sqrt[3]{\ell}-\frac{\ell}{2}\geq\frac{1}{2}\sqrt[3]{\ell}\,.
        \end{equation*}
    \end{itemize}
    Next, we apply the above bounds which demonstrate
    \begin{equation*}
		\begin{aligned}
			\omega_{\j_{k'}}-\omega_{\j'_{k'}} & \geq -\frac{2(K-1)\Lambda^*}{y_*}+\frac{\lambda_{k'}}{\sqrt{3}\ell}-\frac{2\lambda_{k'}}{\sqrt[3]{\ell}}\\
            & \overset{(1)}{\geq} -\frac{2K\Lambda^*}{y_*}+\frac{\lambda_{k'}}{2\sqrt{3}\ell}\\
            &\overset{(2)}{\geq} c
		\end{aligned}
	\end{equation*}
    In (1), we use that the assumption $\ell\leq\frac{1}{64}\leq\frac{1}{(4\sqrt{3})^{3/2}}$ implies $4\sqrt{3}\ell\leq\sqrt[3]{\ell}$ which is equivalent to $\frac{1}{2\sqrt{3}\ell}-\frac{2}{\sqrt[3]{\ell}}\geq0$ and in (2) we used the following assumption on $\ell$:
	\begin{equation}\label{appx-eq:final-assumption}
		\ell\leq \frac{\Lambda_*}{2\sqrt{3}} \Bigl(\frac{2K\Lambda^*}{y_*}+c\Bigr)^{-1}\,.
	\end{equation}
	This shows that there are at least $K$ points $p_{\j_k}$ that satisfy \Cref{appx-eq:maximization-multi}. Let $p_{\j_{K+1}}$ be another optimizer, which by assumption is $y_*-2\lceil\ell^{-1/3}\rceil$ away from any Coulomb center $y_k$. Then, \Cref{appx-eq:proof-contradiction-far-y} directly contradicts the construction of $p_{\j'_{k'}}$ (see \Cref{appx-eq:maximization-multi}), which proves that there are exactly $K$ optimizers and finishes the proof.
\end{proof}

\begin{remark}\label{appx-rmk:extension-multi-mode}
    Note that the above method can be extended to a rotationally invariant potential not satisfying Newton's shell theorem and only requires a certain decay rate of the individual potentials away from their centers, as well as stability of the maxima.
\end{remark}

One key ingredient of the above proofs is the diagonal dominance of the matrix $\mathbb{M}$, for which we establish the following criterion:

\begin{lemma}\label{lem:diagonal-dominant-coulomb}
	Let $\{y_k\}_{k=1}^K\subset\R^3$ be $K\geq2$ Coulomb centers separated by at least $y_*$, i.e.~$\|y_k-y_{k'}\|\geq y_*$ for all $k\neq k'$. Moreover, let $\{p_{\j'_{k'}}\}_{k'=1}^K\subset\R^3$ are $K$ points such that $\max_{x\in B_{\j'_{k'}}}\|y_k-x\|\leq \eta<\frac{y_*}{2(K-1)}$, then the matrix
	\begin{equation*}
		\mathbb{M}=\Bigl(\int_{\R^3} |f_{\j'_{k'}}(x)|^2 \frac{1}{\nn{x - y_{k}}} \d x\Bigr)_{k',k=1}^{K}
	\end{equation*}
	is strictly diagonally dominant with
	\begin{equation*}
		\|\mathbb{M}^{-1}\|\leq\frac{\eta y_*}{y_*-2\eta(K-1)}\,.
	\end{equation*}
\end{lemma}
\begin{proof}
    To prove the statement, we fix $k'=1$ w.l.o.g.~and apply the assumption $\max_{x\in B_{\j_{1}}}\|x-y_{1}\|\leq\eta<y_*$:
	\begin{equation*}
		\begin{aligned}
			\sum_{k=2}^{K} \int_{\R^3}| f_{\j'_1}(x)|^2 \frac{1}{\nn{x-y_k}} \d x <\frac{(K-1)}{y_*-\eta}\,.
		\end{aligned}
	\end{equation*}
	On the other side, we have
	\begin{equation*}
		\begin{aligned}
			\int_{\R^3} |f_{\j'_1}(x)|^2 \frac{1}{\nn{x-y_1}}\d x\geq\frac{1}{\eta}
		\end{aligned}
	\end{equation*}
	so that
	\begin{equation}\label{eq:proof-diag-dominant}
		\mathbb{M}_{11}-\sum_{k\neq 1}\mathbb{M}_{1,k}\geq \frac{1}{\eta}-\frac{(K-1)}{y_*-\eta}\,.
	\end{equation}
	Next, we lower bound the right-hand side by using the assumption $y_*\geq 2\eta$:
	\begin{equation*}
		\mathbb{M}_{11}-\sum_{k\neq 1}\mathbb{M}_{1,k}\geq \frac{1}{\eta}-\frac{(K-1)}{y_*-\eta}\geq \frac{1}{\eta}-\frac{2(K-1)}{y_*}=\frac{y_*-2\eta(K-1)}{\eta y_*}\,.
	\end{equation*}
    By the same reasoning, we can also sum over $k'$ achieving the same bound, which allows us to apply the Varah's bound (see \cite{Varah.1975} and \Cref{eq:diag-dom-inverse-norm}).
\end{proof}

\subsection{Statistics}
\begin{lemma}[Hoeffding's inequality \texorpdfstring{\cite{hoeffding1963probability}}{???}]\label{lem:hoeffy}
	Let $X_1,\dots, X_N$ be independent random variables such that $a\le X_j\le b$, $a\le b$, almost surely. Consider the sum $S_N\coloneqq\sum_{j=1}^N X_j$. Then
	\begin{align*}
		\mathbb{P}\Big(|S_N-\mathbb{E}[S_N]|\ge\epsilon\Big)\le 2\operatorname{exp}\left(-\frac{2\epsilon^2}{N(b-a)^2}\right)\,.
	\end{align*}
\end{lemma}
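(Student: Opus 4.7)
The plan is to follow the classical Chernoff--Cramér scheme combined with Hoeffding's lemma on the moment generating function of bounded random variables, and then symmetrize via a union bound.

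First I would center the variables, setting $Y_j \coloneqq X_j - \mathbb{E}[X_j]$, so that $Y_j$ has mean zero and $Y_j \in [a - \mathbb{E} X_j, b - \mathbb{E} X_j]$, an interval of length $b-a$. For any $\lambda > 0$, Markov's inequality applied to the exponential yields
\begin{equation*}
\mathbb{P}\Bigl(\textstyle\sum_{j=1}^N Y_j \ge \epsilon\Bigr) \le e^{-\lambda \epsilon}\,\mathbb{E}\bigl[e^{\lambda \sum_j Y_j}\bigr] = e^{-\lambda \epsilon} \prod_{j=1}^N \mathbb{E}\bigl[e^{\lambda Y_j}\bigr],
\end{equation*}
where the last equality uses the independence of the $X_j$ (hence of the $Y_j$).

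The main technical step is Hoeffding's lemma: if $Y$ is a mean-zero random variable with $Y \in [a',b']$, then $\mathbb{E}[e^{\lambda Y}] \le \exp(\lambda^2 (b'-a')^2/8)$. I would prove this by writing $e^{\lambda y}$ as a convex combination in $y$ of $e^{\lambda a'}$ and $e^{\lambda b'}$ (using convexity of the exponential on $[a',b']$), taking expectation to eliminate the linear term, and then bounding the resulting function $\phi(u) \coloneqq -p u + \ln(1-p+p e^u)$ (with $u = \lambda (b'-a')$ and $p = -a'/(b'-a')$) by $u^2/8$; the latter follows from $\phi(0) = \phi'(0) = 0$ and $\phi''(u) \le 1/4$ via Taylor's theorem with remainder. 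Applying this to each $Y_j$ gives
\begin{equation*}
\mathbb{P}\Bigl(\textstyle\sum_j Y_j \ge \epsilon\Bigr) \le \exp\Bigl(-\lambda \epsilon + \tfrac{N \lambda^2 (b-a)^2}{8}\Bigr).
\end{equation*}

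Finally, I would optimize over $\lambda > 0$: the right-hand side is minimized at $\lambda = 4\epsilon / (N(b-a)^2)$, producing the upper tail bound $\exp\bigl(-2\epsilon^2 / (N(b-a)^2)\bigr)$. Applying the identical argument to $-Y_j$ bounds the lower tail by the same quantity, and a union bound over the two events $\{S_N - \mathbb{E} S_N \ge \epsilon\}$ and $\{S_N - \mathbb{E} S_N \le -\epsilon\}$ yields the factor of $2$ and the claim. The only nontrivial obstacle is Hoeffding's lemma itself, i.e.\ verifying $\phi''(u) \le 1/4$, which is a short calculus exercise (the maximum of $p(1-p) \le 1/4$ appears after computing $\phi''$); every other step is a direct application of Markov, independence, or optimization in a single real variable.
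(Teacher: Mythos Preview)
Your argument is correct and is the standard Chernoff--Cram\'er proof of Hoeffding's inequality. Note, however, that the paper does not supply its own proof of this lemma: it is stated with a citation to Hoeffding's original article and used as a black box, so there is no in-paper argument to compare against. Your write-up simply fills in what the paper takes for granted.
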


\section{LRBs and \texorpdfstring{$2^{\text{nd}}$}{2nd} Moment}\label{appx-subsec:finite-diff}
In this section we show a rigorous version of \Cref{eq:direct V}, i.e., how the local averages $\omega_\j$ can be expressed as derivatives at $t=0$ of the expectation values of our measurements up to an error, which we explicitly estimate. More precisely, we will prove the following theorem.
\begin{theorem}\label{th:final_fVf_D_estimate}
	Assume that $V$ is $4(d+1)$ times differentiable with bounded derivatives (for $\abs{\Jf} > 1$) or $V$ is relatively bounded by $-\Delta$ (for $\abs{\Jf} = 1$).
	We have
	\begin{align*}
		\cs{f^\alpha_\j, V f^\alpha_\j} & =  \sum_{1\leq \beta < \gamma \leq 3} 	\sigma_\alpha(\beta, \gamma)  \dDd{t}{\j}{\beta \gamma} -  2 \ell^{-2} \cs{f,(-\Delta) f}  + \abs{t} E_{t,\ell},
	\end{align*}
	where $f$ is the fixed profile function used in \cref{eq:f_j defn}, and
	\begin{align}
		\dDd{t}{\j}{\alpha \beta}
		                             & \coloneqq \frac{ \Dd_\j^{\alpha \beta} (t) -  \Dd_\j^{\alpha \beta} (0)  }{t}, \\
		\sigma_\alpha(\beta, \gamma) & :=
		\begin{cases}
			1  & \text{if } \alpha \in \{\beta,\gamma\}, \\
			-1 & \text{else},
		\end{cases} \label{eq:sigma constants}
	\end{align}
	and the error can be estimated by
	\begin{align*}
		\abs{E_{t,\ell} } \leq   C \begin{cases}
			                           \cs t^{3+d} \ell^{-\gd} & :  \abs{\Jf} > 1, \\ \ell^{-4} &: \abs{\Jf} = 1,
		                           \end{cases}
	\end{align*}
	with $\cs t := \sqrt{t^2 + 1}$.
\end{theorem}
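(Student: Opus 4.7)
The plan is to combine a second-order Taylor expansion of $t\mapsto \Dd_\j^{\beta\gamma}(t)$ at $t=0$ with explicit formulas for the first derivative and uniform bounds on the second derivative --- the content of Theorems~\ref{th:first_order} and~\ref{th:second order}, respectively. Taylor's theorem with integral remainder gives
\begin{equation*}
    \dDd{t}{\j}{\beta\gamma} = \partial_t \Dd_\j^{\beta\gamma}(0) + \frac{1}{t}\int_0^t (t-s)\,\partial_t^2 \Dd_\j^{\beta\gamma}(s)\,\d s,
\end{equation*}
so the finite-difference error is bounded by $\tfrac{|t|}{2}\sup_{s\in[0,t]} |\partial_t^2 \Dd_\j^{\beta\gamma}(s)|$. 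The theorem therefore reduces to (i) identifying $\partial_t \Dd_\j^{\beta\gamma}(0)$ in terms of the sought one-body matrix elements and (ii) bounding the second derivative by the required power of $\ell^{-1}$ on a short time interval.

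For step (i), Theorem~\ref{th:first_order} expresses $\partial_t \Dd_\j^{\beta\gamma}(0) = \i\cs{\psi_0^{\beta\gamma},[H,\widetilde{P}_\j^{\beta\gamma}]\psi_0^{\beta\gamma}}$. Because $\psi_0^{\beta\gamma}$ in \eqref{eq:psi0_defn} factorizes across triples and $\widetilde{P}_\j^{\beta\gamma}$ is localized to box $\j$, commuting with every fermion operator outside the triple $\j$, this expectation collapses to a single-triple, two-mode calculation involving only $f_\j^\beta$, $f_\j^\gamma$, and the one-body operator $h=-\Delta+V$. Using that the supports of $f_\j^\beta$ and $f_\j^\gamma$ are disjoint (so cross terms like $\cs{f_\j^\beta, h f_\j^\gamma}$ vanish) and that $\cs{f_\j^\delta,(-\Delta)f_\j^\delta}=\cs{f,(-\Delta)f}$ by translation invariance of $-\Delta$, a direct computation gives $\partial_t \Dd_\j^{\beta\gamma}(0)$ as a fixed linear combination of $\cs{f_\j^\beta, V f_\j^\beta}$, $\cs{f_\j^\gamma, V f_\j^\gamma}$, and $\cs{f,(-\Delta)f}$. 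Summing $\sigma_\alpha(\beta,\gamma)\,\partial_t \Dd_\j^{\beta\gamma}(0)$ over the three pairs and using that each mode index $\delta$ lies in exactly two of them, a short parity check shows that the coefficient of $\cs{f_\j^\delta, V f_\j^\delta}$ equals $2\,\delta_{\alpha\delta}$, while the kinetic pieces assemble with a fixed weight; dividing by $2$ and moving the kinetic term to the other side reproduces the first two terms on the right-hand side.

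For step (ii), Theorem~\ref{th:second order} controls $\partial_t^2 \Dd_\j^{\beta\gamma}(s) = -\cs{\psi_s^{\beta\gamma},[H,[H,\widetilde{P}_\j^{\beta\gamma}]]\psi_s^{\beta\gamma}}$ when $|\Jf|>1$. The difficulty is that while $\widetilde{P}_\j^{\beta\gamma}$ is still strictly localized to box $\j$, the evolved state $\psi_s^{\beta\gamma}$ is spread over every triple, so estimating this expectation requires the continuum Lieb--Robinson bound of \cite{hinrichs2023lieb} in the modified form of Theorem~\ref{th:modified lrb}. Its almost linear light cone and arbitrarily strong power-law decay combine with the $\ell^{-1}$ scaling of gradients of the rescaled profiles $f_\j^\alpha$ (each spatial derivative in a nested commutator costs a factor of $\ell^{-1}$) to produce the exponent $\gd=3d+10$ and the bound $|\partial_t^2 \Dd_\j^{\beta\gamma}(s)|\leq C\,\cs{s}^{3+d}\ell^{-\gd}$. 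The regularity hypothesis on $V$ (bounded derivatives up to order $4(d+1)$) enters exactly here, to handle the nested commutators $[V,\cdot]$ in the Lieb--Robinson argument. For $|\Jf|=1$ no inter-triple propagation is possible, and a direct pointwise estimate of $[h,[h,\widetilde{P}_\j^{\beta\gamma}]]$, using only that $V$ is $(-\Delta)$-relatively bounded, suffices and yields the sharper $\ell^{-4}$ bound of Theorem~\ref{th:second order one box}. The main obstacle I anticipate is step (ii) in the many-triple case: tracking the precise interplay between the Lieb--Robinson decay and the $\ell^{-1}$ scaling of the wave-packet derivatives so as to land on the explicit power $\gd=3d+10$ and to verify that $4(d+1)$ derivatives of $V$ are the right amount.
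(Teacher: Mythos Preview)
Your proposal is correct and follows essentially the same route as the paper: Taylor-expand $\Dd_\j^{\beta\gamma}$ to second order, invoke Proposition~\ref{th:first_order} for the first-derivative identity, and cite Propositions~\ref{th:second order one box} and~\ref{th:second order} for the second-derivative bounds in the cases $\abs{\Jf}=1$ and $\abs{\Jf}>1$ respectively. The only cosmetic difference is that the paper uses the Lagrange (mean-value) form of the remainder rather than the integral form, which makes no difference to the estimate.
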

We start by expressing local averages $\omega_\j$ as the first derivative at $t=0$ of the expectation values
\begin{align*}
	\Dd_\j^{\alpha\beta}(t) = \cs{ \psi^{\alpha\beta}_t, (W^{\alpha\beta}_\j)^* P^{\alpha\beta}_\j W^{\alpha\beta}_\j \psi^{\alpha\beta}_t}
\end{align*}
at $t=0$, up to constants depending on the fixed profile $f$. In the following we will use the notation
\begin{align*}
	\Aj{\j}{\alpha\beta} = a^*(f^\alpha_{\j})  a^*(f^\beta_{\j}).
\end{align*}
\begin{proposition}\label{th:first_order}
	For all $\alpha \in \{0,1,2\}$ and all $\j$, we have
	\begin{align}
		\label{eq:fVf}
		\cs{f^\alpha_\j, V f^\alpha_\j} & =  \frac{d}{dt}\bigg|_{t=0}  \sum_{0\leq \beta < \gamma \leq 2} 	\sigma_\alpha(\beta, \gamma)   \Dd_\j^{\beta \gamma}(t) - 2 \ell^{-2}\cs{ f, (-\Delta) f },
	\end{align}
	where the constants $\sigma_\alpha(\beta, \gamma) \in \{\pm 1\}$ were defined in Eq.~(\ref{eq:sigma constants}).
\end{proposition}
\begin{proof}
	By a direct calculation and due to the self-adjointness of the operators in the commutator, we find
	\begin{align}
		\label{eq:firstdervt}
		\frac{d}{dt}\bigg|_{t=0} \Dd_\j^{\alpha\beta}(t) = \frac{d}{dt}\bigg|_{t=0} \cs{ \psi^{\alpha\beta}_t, (W^{\alpha\beta}_\j)^* P^{\alpha\beta}_\j W^{\alpha\beta}_\j \psi^{\alpha\beta}_t}
		 & = \i \cs{\psi^{\alpha\beta}_0,  [H , (W^{\alpha\beta}_\j)^* P^{\alpha\beta}_\j W^{\alpha\beta}_\j ]  \psi^{\alpha\beta}_0}
	\end{align}
	Now, using the decomposition $\FF = \FF(L^2(B_\j^\alpha \cup B_\j^\beta)) \otimes \FF(L^2(\IR^d \setminus (B_\j^\alpha \cup B_\j^\beta)))$ (cf.~Eq.~(\ref{eq:fock decomposition})) we can write
	\begin{align*}
		\psi^{\alpha\beta}_0 = \phi_\j \otimes \phi_{\not= \j}, \qquad \phi_\j := 2^{-1/2} \left( (\Id + A^{\alpha\beta}_\j) \Omega \right), \quad \phi_{\not= \j} = 2^{-(\abs{\Jf} -1)/2}\prod_{\substack{\k \in \Jf, \\ \k \not= \j}} (\Id + A^{\alpha\beta}_\k) \Omega,
	\end{align*}
	where the vectors $\phi_\j$ and $\phi_{\not= \j}$ are normalized.
	Notice that $\dG(h) \phi_\j \otimes \phi_{\not= \j} = (\dG(h) \phi_j) \otimes \phi_{\not= \j} + \phi_\j \otimes  (\dG(h) \phi_{\not= \j})$, where we use that both tensor product factors lie in the domain of $\dG(h)$, which in this notation only acts on the Fock spaces of the subregions. Furthermore, recall that
	\begin{align*}
		\widetilde P^{\alpha\beta}_\j  := (W^{\alpha\beta}_\j)^* P^{\alpha\beta}_\j W^{\alpha\beta}_\j = \ketbra{(W^{\alpha\beta}_\j)^*  \Omega}{(W^{\alpha\beta}_\j)^* \Omega} \otimes \Id.
	\end{align*}
	Thus, we can compute right-hand side of the inner product in \eqref{eq:firstdervt} as
	\begin{align*}
		[H ,  \widetilde P^{\alpha\beta}_\j ]  \psi^{\alpha\beta}_0 & = \dG(h)  \left(\ketbra{(W^{\alpha\beta}_\j)^*  \Omega}{(W^{\alpha\beta}_\j)^* \Omega} \phi_\j \right) \otimes \phi_{\not= \j} -  \widetilde P^{\alpha\beta}_\j ((\dG(h) \phi_j) \otimes \phi_{\not= \j} + \phi_\j \otimes  (\dG(h) \phi_{\not= \j})) \\
		                                                            & = \cs{(W^{\alpha\beta}_\j)^* \Omega, \phi_\j} \left( ( \dG(h) (W^{\alpha\beta}_\j)^* \Omega) \otimes \phi_{\not=\j} +  ((W^{\alpha\beta}_\j)^* \Omega)\otimes (\dG(h) \phi_{\not= \j}) \right)                                                        \\
		                                                            & \qquad -  \cs{(W^{\alpha\beta}_\j)^* \Omega,  \dG(h) \phi_j} ((W^{\alpha\beta}_\j)^* \Omega) \otimes \phi_{\not= \j} -  \cs{(W^{\alpha\beta}_\j)^* \Omega, \phi_\j}  ((W^{\alpha\beta}_\j)^* \Omega)  \otimes  (\dG(h) \phi_{\not= \j}))              \\
		                                                            & = \cs{(W^{\alpha\beta}_\j)^* \Omega, \phi_\j} (\dG(h) (W^{\alpha\beta}_\j)^* \Omega) \otimes \phi_{\not=\j}  - \cs{(W^{\alpha\beta}_\j)^* \Omega,  \dG(h) \phi_\j} ((W^{\alpha\beta}_\j)^* \Omega) \otimes \phi_{\not= \j}
	\end{align*}
	Therefore,
	\begin{align}
		\i \cs{ \psi^{\alpha\beta}_0, [H ,  \widetilde P^{\alpha\beta}_\j ]  \psi^{\alpha\beta}_0 } & = \i \cs{(W^{\alpha\beta}_\j)^* \Omega, \phi_\j} \cs{ \phi_\j, \dG(h) (W^{\alpha\beta}_\j)^* \Omega} - \i\cs{(W^{\alpha\beta}_\j)^* \Omega,  \dG(h) \phi_\j} \cs{ \phi_\j,(W^{\alpha\beta}_\j)^* \Omega} \nonumber \\
		                                                                                            & = -  2 \Im \cs{(W^{\alpha\beta}_\j)^* \Omega, \phi_\j} \cs{ \phi_\j, \dG(h) (W^{\alpha\beta}_\j)^* \Omega}.
		\label{eq:psi-commutator-psi}
	\end{align}
	Now we use \eqref{eq:controlled-displacement} and the definition of $\phi_\j$, the commutation relations of $\dG(h)$ with creation operators \eqref{eq:af dG commutation relations}, the fact that $\dG(h) \Omega =0$ and $\cs{\Omega,a^*(f^\alpha_\j) a^*(f^\beta_\j) \Omega } =0$, and compute
	\begin{align*}
		 & \cs{(W^{\alpha\beta}_\j)^* \Omega, \phi_\j} \cs{ \phi_\j, \dG(h) (W^{\alpha\beta}_\j)^* \Omega}                                                                                                                                                     \\
		 & \qquad = \frac{1}{4} \cs{(\Id  - \i  a^*(f^\alpha_{\j}) a^*(f^\beta_{\j})) \Omega, (\Id + A^{\alpha\beta}_\j) \Omega} \cs{ (\Id + A^{\alpha\beta}_\j) \Omega, \dG(h) (\Id  - \i  a^*(f^\alpha_{\j}) a^*(f^\beta_{\j})) \Omega}                      \\
		 & \qquad = \frac{1}{4} \left( 1+\i \cs{a^*(f^\alpha_{\j}) a^*(f^\beta_{\j}) \Omega, a^*(f^\alpha_{\j}) a^*(f^\beta_{\j})  \Omega} \right) \cs{ a^*(f^\alpha_{\j}) a^*(f^\beta_{\j}) \Omega, (-\i)  \dG(h)   a^*(f^\alpha_{\j}) a^*(f^\beta_{\j}) \Omega} \\
		 & \qquad= - \frac{\i }{4}  \left( 1+ \i \norm{a^*(f^\alpha_\j) a^*(f^\beta_\j) \Omega}^2\right) \left(\cs{f^\alpha_\j, h f^\alpha_\j}  \norm{f^\beta_\j}^2  + \cs{f^\beta_\j, h f^\beta_\j} \norm{f^\alpha_\j}^2    \right).
	\end{align*}
	We plug this into \eqref{eq:firstdervt} and \eqref{eq:psi-commutator-psi}, using that $\norm{f^\beta_\j} =  \norm{f^\alpha_\j} = 1$. Furthermore, notice that $\cs{f^\alpha_\j,(-\Delta) f^\alpha_\j } = \ell^{-2} \cs{ f, (-\Delta) f }$ for all $\j$ and $\alpha$, since the $f^\alpha_\j$ are translated and $\ell$-scaled  versions of the original profile $f$,
	$f_\j(x) = \ell^{-d/2} f(\ell^{-1}x-\j)$.
	This entails
	\begin{align*}
		\frac{d}{dt}\bigg|_{t=0} \Dd_\j^{\alpha\beta}(t) =  \frac{1}{2} \left(\cs{f^\alpha_\j, h f^\alpha_\j}    + \cs{f^\beta_\j, h f^\beta_\j}    \right)
		=  \frac{1}{2} \left( \cs{f^\alpha_\j, V f^\alpha_\j}  + \cs{f^\beta_\j, V f^\beta_\j} \right) +   \ell^{-2}\cs{ f, (-\Delta) f } .
	\end{align*}
	for all $(\alpha,\beta) = (0,1),(0,2),(1,2)$. We can write these three equations as
	\begin{align*}
		\frac{d}{dt}\bigg|_{t=0} \begin{pmatrix}
			                         \Dd_\j^{01}(t) \\ 	\Dd_\j^{02}(t) \\ \Dd_\j^{12}(t)
		                         \end{pmatrix}  = \frac{1}{2} \begin{pmatrix}
			                                                      1 & 1 & 0 \\
			                                                      1 & 0 & 1 \\
			                                                      0 & 1 & 1
		                                                      \end{pmatrix} \begin{pmatrix}
			                                                                    \cs{f^0_\j, V f^0_\j} \\
			                                                                    \cs{f^1_\j, V f^1_\j} \\
			                                                                    \cs{f^2_\j, V f^2_\j}
		                                                                    \end{pmatrix} + \ell^{-2}\cs{ f, (-\Delta) f } \begin{pmatrix} 1 \\ 1 \\ 1 \end{pmatrix},
	\end{align*}
	whose solution is
	\begin{align*}
		\begin{pmatrix}
			\cs{f^0_\j, V f^0_\j} \\
			\cs{f^1_\j, V f^1_\j} \\
			\cs{f^2_\j, V f^2_\j}
		\end{pmatrix} = \frac{d}{dt}\bigg|_{t=0} \begin{pmatrix}
			                                         1  & 1  & -1 \\
			                                         1  & -1 & 1  \\
			                                         -1 & 1  & 1
		                                         \end{pmatrix} \begin{pmatrix}
			                                                       \Dd_\j^{01}(t) \\ 	\Dd_\j^{02}(t) \\ \Dd_\j^{12}(t)
		                                                       \end{pmatrix} - 2 \ell^{-2}\cs{ f, (-\Delta) f } \begin{pmatrix} 1 \\ 1 \\ 1 \end{pmatrix}.
	\end{align*}
	Reading it line by line yields the desired equations \eqref{eq:fVf}.
\end{proof}
In the next step, we have to bound the second commutator, however, at arbitrary small times $t > 0$.
Since the time evolution by a Schrödinger operator is not strictly local, the fermions first localized in the respective boxes can leak into neighboring boxes as soon as $t > 0$, which makes the computation significantly more challenging than in \Cref{th:first_order}. However, this leakage is small due to continuous Lieb-Robinson bounds. More precisely, we will use the following variant of the one-body bound from \cite{hinrichs2023lieb}.
\begin{theorem}
	\label{th:modified lrb}
	Let $h \coloneqq -  \Delta + V$ and assume that $V$ is $2(n+j-1)$ times differentiable with bounded derivatives. Let $\chi$ be compactly supported, $f$ be a Schwartz function, and $T_\k f \coloneqq f(\cdot - \k)$ be the unitary translation operator. Furthermore, let
	\begin{align*}
		U_\ell f(x) \coloneqq \ell^{-d/2} f(x / \ell)
	\end{align*}
	be the unitary dilation operator on $L^2(\R^d)$, and set $f_\k = U_\ell T_\k f$, corresponding to the notation before in Eq.~(\ref{eq:f_j defn}).
	Let $\chi_\ell(x) = \chi(x/\ell)$.
	Then, for all $n \in \N$, there exists a constant $C_n$ such that for all $\ell \in (0,1]$ and all $j =0,1,2$,
	\begin{align*}
		\norm{\chi_\ell e^{-\i t h} h^j f_\kb } \leq C_n \ell^{-2j-2n-1} \cs t^{1/2} \norm{f} \left( 1 \wedge \frac{\cs t}{\dist(\k ,\supp \chi)} \right)^n,
	\end{align*}
	where $h^0 := \Id$, $h^1 := h$, $h^2 := h h$ (applied $h$ twice) and $\cs{t} = \sqrt{t^2+1}$.
\end{theorem}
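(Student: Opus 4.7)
The plan is to reduce the estimate to the one-body continuum Lieb--Robinson bound of \cite{hinrichs2023lieb} by rescaling both space and time with the unitary dilation $U_\ell$. A direct computation gives $U_\ell^{-1} h\, U_\ell = \ell^{-2}\widetilde h$, where $\widetilde h \coloneqq -\Delta + \widetilde V$ with $\widetilde V(x) \coloneqq \ell^2 V(\ell x)$, as well as $U_\ell^{-1} M_{\chi_\ell} U_\ell = M_\chi$ and $U_\ell^{-1} f_\k = T_\k f$. Because $\ell \in (0,1]$, the identity $\partial^\alpha \widetilde V(x) = \ell^{2+|\alpha|}(\partial^\alpha V)(\ell x)$ transfers the bounded-derivative assumption on $V$ to $\widetilde V$ uniformly in $\ell$, which is the crucial point that makes a fixed-scale LR bound applicable after rescaling.

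Using these conjugations, the unitarity of $U_\ell$, and setting $\tau \coloneqq t/\ell^2$, I rewrite the left-hand side as
\begin{align*}
\norm{\chi_\ell e^{-\i t h} h^j f_\k}^2 = \ell^{-4j}\, \norm{\chi\, e^{-\i \tau \widetilde h}\, \widetilde h^j T_\k f}^2.
\end{align*}
This reduces the problem to a Lieb--Robinson estimate for the fixed-scale operator $\widetilde h$ applied to the Schwartz vector $\widetilde h^j T_\k f$ at the rescaled time $\tau$, with the cut-off $\chi$ whose distance to $\k$ is now measured in the rescaled coordinates (so $\dist(\k,\supp\chi)$ appears naturally). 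Invoking the continuum LR bound of \cite{hinrichs2023lieb} for $\widetilde h$ then yields the decay factor $\bigl(1 \wedge \cs{\tau}/\dist(\k,\supp\chi)\bigr)^n$ together with a prefactor of order $\cs{\tau}^2 = (\ell^{-2}\cs{t})^2$, and an $L^2$-type weight of the cut-off that produces the $\ell^d$ factor appearing in the claim. The constant $C_n$ is uniform in $\ell$ because it depends only on $2n$ bounded derivatives of $\widetilde V$, controlled above.

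It remains to bound $\norm{\widetilde h^j f}$ uniformly in $\ell \in (0,1]$. Expanding $(-\Delta+\widetilde V)^j$ and commuting Laplacians past $\widetilde V$ produces a finite linear combination of terms containing at most $2j$ derivatives of $f$ (harmless because $f$ is Schwartz) and at most $2(j-1)$ derivatives of $\widetilde V$. This is precisely why the theorem assumes $V$ to be $2(n+j-1)$ times differentiable with bounded derivatives: $2n$ of them are consumed by the $n$-th power decay in the LR bound of \cite{hinrichs2023lieb}, and the remaining $2(j-1)$ are needed to dominate $\widetilde h^j f$ uniformly in $\ell$. The main obstacle of the proof is exactly this regularity bookkeeping and the verification that every constant remains independent of $\ell$; the rescaling argument itself is routine once the unitary conjugations are carefully tracked.
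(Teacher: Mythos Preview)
Your rescaling step is exactly the paper's: conjugation by $U_\ell$ reduces everything to the unit-scale operator $\widetilde h = -\Delta + \ell^2 V(\ell\cdot)$ (the paper calls it $h_\ell$) at effective time $\tau=\ell^{-2}t$, with the derivatives of the effective potential uniformly bounded for $\ell\in(0,1]$. The gap is in how you then ``invoke'' the bound from \cite{hinrichs2023lieb}.

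That result (their Theorem~2.2) is stated for inner products $|\langle g,e^{-\i\tau T}\varphi_x\rangle|$, not for norms $\|\chi\,e^{-\i\tau T}\varphi_x\|$, and the paper first explains how to upgrade their Proposition~3.4 to a norm statement. More substantially, the bound is not an operator inequality that accepts an arbitrary $L^2$ vector: its constant enters through position- and energy-tail hypotheses on the initial profile (their Lemma~3.3), namely $\|\chi_{\{|\cdot-x|\ge R\}}\varphi_x\|\le C R^{-n}$ and $\|\chi_{[E,\infty)}(\widetilde h)\varphi_x\|\le C E^{-n}$. To take $\widetilde h^{\,j}\varphi_x$ as initial data one must re-verify both tails for this modified vector, and the paper carries this out explicitly. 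Your proposal collapses ``what remains'' to the single estimate $\|\widetilde h^{\,j} f\|<\infty$; that is necessary but not sufficient. In particular, the energy-tail verification $\|\chi_{[E,\infty)}(\widetilde h)\,\widetilde h^{\,j}\varphi_x\|\le C E^{-n}$ is where all $2(n+j-1)$ derivatives of $V$ are consumed at once --- not split as $2n$ for the LR step plus a separate $2(j-1)$ for the norm, as you suggest. Finally, your explanation of the $\ell^d$ prefactor via ``an $L^2$-type weight of the cut-off'' does not hold up: after rescaling, $\chi$ is $\ell$-independent, so no such weight arises from the cut-off alone.
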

\begin{proof}
	First, notice that a direct calculation shows $U_\ell^* = U_{1/\ell}$ and therefore,
	\begin{align*}
		U_\ell^* \zeta  U_\ell = \zeta( \ell \cdot ), \qquad U_\ell^* (-\Delta) U_\ell = \ell^{-2} (-\Delta),
	\end{align*}
	for a function $\zeta \colon \IR^d \rightarrow \IR$ interpreted as a multiplication operator on $L^2(\IR^d)$. In particular, we get $U_\ell^* \chi_\ell U_\ell = \chi$ and
	\begin{align}
		\label{eq:effective time}
		h_\ell \coloneqq \ell^2 U_\ell^* h U_\ell = \ell^2 U_\ell^* (-\Delta + V) U_\ell = \ell^2( \ell^{-2}(-\Delta) + V(\ell \cdot)) =  - \Delta + \ell^2 V(\ell \cdot),
	\end{align}
	so that we can write,
	\begin{align}
		\label{eq:modified est for theorem}
		\norm{ \chi_\ell e^{-\i t h} h^j f_\kb}  = \norm{ \chi_\ell e^{-\i t h} h^j U_\ell T_\k f } = \norm{ U_\ell^* \chi_\ell U_\ell U_\ell^* e^{-\i  t h} h^j U_\ell T_\k f } = \ell^{-2j} \norm{   \chi  e^{-\i \ell^{-2} t  h_\ell} h_\ell^j  T_\k f }.
	\end{align}
	Now the claimed result will follow from a modified form of \cite[Proposition 3.4 (i)]{hinrichs2023lieb}. A first modification we need is that the bounds have to be formulated in terms of norms instead of inner products. More precisely, in \cite[Proposition 3.4 (i)]{hinrichs2023lieb} we can change the first bound to
	\begin{equation*}
		\norm{f e^{-\i t T}\varphi_x} \leq  \Cobt{0} \norm{f}_\infty \braket{t}^\delta G_{n,t}(R_f),
	\end{equation*}
	where $f$ on the left hand side denotes the multiplication operator with $f$, $\norm{\cdot}$ the $L^2$ norm and $\norm{\cdot}_\infty$ the supremum norm. The proof works in the same way. We only have to replace the inner products in the first two displayed equations by norms and instead of Cauchy-Schwarz we use the operator norm estimate for the multiplication operator with $f$, i.e., (3.25) in \cite[Proposition 3.4 (i)]{hinrichs2023lieb} becomes
	\begin{align*}
		\nn{ f e^{-\i t T} T^j \varphi_x} \leq \nn{f}_\infty  \nn{e^{-\i t T} (1-B) T^j \varphi_x} +  \nn{f}_\infty \nn{\chr_{\abs{\cdot - x} \geq R_f} e^{-\i t T} B} \nn{T^j \varphi_x}.
	\end{align*}
	We then apply \cite[Proposition 3.4 (i)]{hinrichs2023lieb} by using the following assignments of the variables therein:
	\begin{align*}
		f \rightarrow \chi, \qquad T \rightarrow h_\ell, \text{ i.e., } V \rightarrow \ell^2 V(\ell\cdot), \qquad t \rightarrow \ell^{-2} t, \qquad \varphi_x \rightarrow T_\k f,
	\end{align*}
	where the left-hand side denotes the notation in \cite[Proposition 3.4 (i)]{hinrichs2023lieb}. This yields the desired estimate for the last term in \eqref{eq:modified est for theorem} for $j=0$,
	\begin{align}
		\label{eq:apply lrb theorem}
		\norm{   \chi  e^{-\i \ell^{-2} t  h_\ell} h_\ell^j T_\k f } & \leq \Cobt{0}   (\ell^{-2}  \cs{t})^{1/2} \left(1 \wedge \frac{\ell^{-2}\cs{t}}{\dist(\k,\supp \chi)}\right)^n ,
	\end{align}
	where we set $\delta =1/2$, and used that $\nn{\chi}_\infty \leq 1$ and $\cs{\ell^{-2} t} \leq \ell^{-2} \cs t$ for $\ell \in (0,1]$. Notice that the proposition can actually applied with the new effective potential $\ell^2 V(\ell \cdot)$, as the potential in the original proof only enters in the form of infinity norms of itself and its derivatives. They can be uniformly bounded from above for all $\ell \in (0,1]$.

	Another necessary modification is the generalization of \eqref{eq:apply lrb theorem} to $j =1,2$. The only place where the terms $h_\ell^j$ enter in the proof is when  \cite[Lemma 3.3]{hinrichs2023lieb} is applied or when $\nn{T^j \varphi_x}$ is estimated. The latter can be estimated by a constant depending on the infinity norms of the derivatives of $V$ up to order $2j-1$. Furthermore, from the proof it is easy to see that (a) in \cite[Lemma 3.3]{hinrichs2023lieb}  generalizes $j =1,2$,
	\begin{equation*}
		\norm{\chiRx{R}{x} h_\ell^j \varphi_x} \leq C^{n,j}_{1} R^{-n} \qquad \text{for all}\ x\in\R^d, R>0,~\ell \in (0,1],
	\end{equation*}
	where the constant $C^{n,j}_{1}$ depends on the supremum norms of the derivatives of $V$ up to order $2(j-1)$. The same is true for the second bound in (b), which involves derivatives up to order $2(n+j-1)$,
	\begin{equation*}
		\norm{\chr_{[E,\infty)}(h_\ell) h_\ell^j \varphi_x} \leq  \frac{C^{n,j}_2}{E^n} \qquad \mbox{for all}\ x\in\R^d,\ E>0, ~\ell \in (0,1]. 
	\end{equation*}
	Thus, having established those bounds, the same estimate as in \eqref{eq:apply lrb theorem} with $j=1,2$ holds.
\end{proof}

\begin{proposition}\label{th:second order}
	\begin{itemize}
		\item[(a)]  In the case of one triple, $\abs{\Jf} = 1$ assume that $V$ is relatively bounded with respect to $-\Delta$. Then there exists a constant $C$ such that, for all $t \in \R$ and $\ell \in (0,1]$,
		      \begin{align*}
			      \norm{  	\frac{d^2}{dt^2} \cs{W^{\alpha\beta}_\j \psi^{\alpha\beta}_t,   P^{\alpha\beta}_\j W^{\alpha\beta}_\j \psi^{\alpha\beta}_t}  }  \leq  C  \ell^{-4}.
		      \end{align*}
		\item[(b)] For the general case $\abs{\Jf} > 1$ assume that $V$ is $4(d+1)$ times differentiable with bounded derivatives.
		      Then there exists a constant $C$ such that, for all $t \in \R$ and $\ell \in (0,1]$,
		      \begin{align*}
			      \norm{  	\frac{d^2}{dt^2} \cs{W^{\alpha\beta}_\j \psi^{\alpha\beta}_t,   P^{\alpha\beta}_\j W^{\alpha\beta}_\j \psi^{\alpha\beta}_t}  }  \leq  C  \cs t^{2d+3} \ell^{-\gd}\,,
		      \end{align*}
		      where
		      \begin{align}
			      \label{eq:gd constant}
			      \gd \coloneqq  4d +10.
		      \end{align}
	\end{itemize}

\end{proposition}
\begin{proof}
	Let $\chi_\ell$ be a smooth function which equals one in the two boxes $B^\alpha_\j$ and $B^\beta_\j$, which decays to zero in the neighboring boxes and which vanishes on all other boxes, see \Cref{fig:chi functions} for a depiction in 1D. Furthermore, assume there is a function $\chi$, independent of $\ell$ such that $\chi_\ell = \chi(\cdot / \ell)$. Similarly, let $\chit_\ell$ be a smooth function which equals one in the boxes  $B^\alpha_\j$, $B^\beta_\j$ and in all their direct neighboring boxes, which decays to zero in all of their neighbors (of the boxes where $\chit_\ell = 1$), and which vanishes otherwise.

	\begin{figure}
		\begin{tikzpicture}
			\begin{axis}[
					axis x line=middle, axis y line=middle,
					xmin=0, xmax=18, ymin=-0.1, ymax=1.2,
					xtick={1,3,5,7,9,11,13,15,17},
					xticklabels={$B_0$, $B_1$, $B_2$, $B_3$, $B_4$, $B_5$, $B_6$, $B_7$, $B_8$},
					ytick={0,1},
					xlabel={$x$}, ylabel={$\textcolor{blue}{\chi_\ell(x)}, \textcolor{red}{\chit_\ell(x)}$},
					samples=200, domain=0:18, yscale = 0.5
				]

				\addplot[blue, thick]
				{ (x < 8) * exp(-10*(8-x)^2) +
					(x >= 8 && x <= 12) * 1 +
					(x > 12) * exp(-10*(x-12)^2)
				};

				\addplot[red, thick, dashed]
				{ (x < 6) * exp(-10*(6-x)^2) +
					(x >= 6 && x <= 14) * 1 +
					(x > 14 && x <= 16) * exp(-10*(x-14)^2)
				};

				\foreach \x in {0,2,4,6,8,10,12,14,16} {\addplot[dashed] coordinates {(\x,0) (\x,1)};
					}
			\end{axis}
		\end{tikzpicture}
		\caption{\label{fig:chi functions}A suitable function $\chi_\ell$ (blue) and $\chit_\ell$ (red) in $d=1$ for $\j = 1$ and $(\alpha,\beta)=(1,2)$. The function $\chi_\ell$ is equal to one on $B_4$ and $B_5$, and decays to zero in $B_3$ and $B_6$. In all other boxes it equals zero. In particular, any derivative of $\chi_\ell$ is only non-zero in $B_3$ and $B_6$. The function $\chit_\ell$ also equals one in the neighboring boxes of the ones where $\chi_\ell$ is one, namely in $B_3$ and $B_6$.}
	\end{figure}

	We use the IMS localization formula (e.g., see \cite[Theorem 3.2]{cycon1987schrodinger}) and decompose, for $\chip_\ell \coloneqq \sqrt{  1- \chi_\ell^2 }$,
	\begin{align}
		\label{eq:dGh decomposition}
		\dG(h) = \dG(\chi_\ell h \chi_\ell)   +  \dG(\chip_\ell (-\Delta) \chip_\ell) + \dG((\nabla \chi_\ell)^2) + \dG((\nabla \chip_\ell)^2),
	\end{align}
	and similarly with $\chit_\ell$ and $\chitp_\ell$. Now notice that the operators $\chip_\ell (-\Delta) \chip_\ell$, $(\nabla \chi_\ell)^2$ and $(\nabla \chip_\ell)^2$ vanish on $L^2(B_\j^\alpha \cup B_\j^\beta)$. Hence, $(W^{\alpha\beta}_\j)^* P^{\alpha\beta}_\j W^{\alpha\beta}_\j$ commutes with all the operators of the right hand side of \Cref{eq:dGh decomposition} except the first one, which entails $[H,(W^{\alpha\beta}_\j)^* P^{\alpha\beta}_\j W^{\alpha\beta}_\j] = [\dG(\chi_\ell h \chi_\ell),(W^{\alpha\beta}_\j)^* P^{\alpha\beta}_\j W^{\alpha\beta}_\j]$. Similarly, $\chitp_\ell (-\Delta) \chitp_\ell$, $(\nabla \chit_\ell)^2$ and $(\nabla \chitp_\ell)^2$ vanish on $L^2(B)$, where $B$ denotes the union of the boxes $B_\j^\alpha$, $B_\j^\beta$ and all of their immediate neighbors. At the same time, the commutator $[\dG(\chi_\ell h \chi_\ell),(W^{\alpha\beta}_\j)^* P^{\alpha\beta}_\j W^{\alpha\beta}_\j]$ only acts on $\FF(L^2(B))$ non-trivially.
	Therefore,
	\begin{align*}
		E_{2,t} \coloneqq \frac{d^2}{dt^2} \cs{W^{\alpha\beta}_\j \psi^{\alpha\beta}_t,   P^{\alpha\beta}_\j W^{\alpha\beta}_\j \psi^{\alpha\beta}_t} & = - \cs{\psi^{\alpha\beta}_t, [H,[H,(W^{\alpha\beta}_\j)^* P^{\alpha\beta}_\j W^{\alpha\beta}_\j]] \psi^{\alpha\beta}_t}                                                        \\
		                                                                                                                                              & = -\cs{\psi^{\alpha\beta}_t, [\dG(\chit_\ell h \chit_\ell) ,[\dG(\chi_\ell h \chi_\ell) ,(W^{\alpha\beta}_\j)^* P^{\alpha\beta}_\j W^{\alpha\beta}_\j]] \psi^{\alpha\beta}_t} .
	\end{align*}

	We first prove (b), and use self-adjointness, $\nn{(W^{\alpha\beta}_\j)^* P^{\alpha\beta}_\j W^{\alpha\beta}_\j} = 1$, $\norm{\psi^{\alpha\beta}_t} =1$ and the Cauchy-Schwarz inequality to obtain
	\begin{align}
		\abs{E_{2,t}} & \leq   2 \abs{ \cs{\dG(\chit_\ell h \chit_\ell) \psi^{\alpha\beta}_t, [ \dG(\chi_\ell h \chi_\ell), (W^{\alpha\beta}_\j)^* P^{\alpha\beta}_\j W^{\alpha\beta}_\j] \psi^{\alpha\beta}_t   }} \nonumber                                               \\
		              & \leq 2 \left(\norm{ \dG(\chi_\ell h \chi_\ell) \dG(\chit_\ell h \chit_\ell) \psi^{\alpha\beta}_t } + \norm{ \dG(\chit_\ell h \chit_\ell) \psi^{\alpha\beta}_t} \norm{ \dG(\chi_\ell h \chi_\ell) \psi^{\alpha\beta}_t}  \right). \label{eq:E2t est}
	\end{align}
	Now, let us write $f^\alpha_{\j,t} = e^{-\i t h} f^\alpha_\j$ and  $A^{\alpha\beta}_{\j,t} = e^{-\i t H} A^{\alpha\beta}_{\j} e^{\i t H} = a^*(f^\alpha_{\j,t})  a^*(f^\beta_{\j,t})$, so that, analogously to \Cref{eq:psi0_defn}, we have
	\begin{align*}
		\psi^{\alpha\beta}_t = 2^{-\abs{\Jf} / 2}  \prod_{\j \in \Jf} (\Id + \Aj{\j,t}{\alpha\beta} ) \Omega.
	\end{align*}
	As $\dG(T) \Omega = 0$, we have $\dG(T) A \Omega = [\dG(T),A] \Omega$  for any self-adjoint operator $T$ and bounded operator $A$. Thus,
	\begin{align}
		\dG(\chi_\ell h \chi_\ell) \psi^{\alpha\beta}_t                              & = 2^{-\abs{\Jf} / 2}  \left[\dG(\chi_\ell h \chi_\ell),\prod_{\j \in \Jf} (\Id + \Aj{\j,t}{\alpha\beta} )\right] \Omega \nonumber
		\\
		                                                                             & = 2^{-\abs{\Jf} / 2}  \sum_{\k \in \Jf} [\dG(\chi_\ell h \chi_\ell),\Aj{\k,t}{\alpha\beta} ] \prod_{\substack{\j \in \Jf                                        \\ \j \not= \k}} (\Id + \Aj{\j,t}{\alpha\beta}) \Omega, \label{eq:dG psi} \\
		\dG(\chi_\ell h \chi_\ell) \dG(\chit_\ell h \chit_\ell) \psi^{\alpha\beta}_t & =  2^{-\abs{\Jf} / 2}  \sum_{\k \in \Jf} \bigg( [\dG(\chi_\ell h \chi_\ell),[\dG(\chit_\ell h \chit_\ell),\Aj{\k,t}{\alpha\beta} ]] \prod_{\substack{\j \in \Jf \\ \j \not= \k}}  (\Id + \Aj{\j,t}{\alpha\beta}) \Omega \nonumber \\
		                                                                             & \qquad +  [\dG(\chit_\ell h \chit_\ell),\Aj{\k,t}{\alpha\beta} ] \sum_{\substack{\q \in \Jf,                                                                    \\ \q \not= \k}} [\dG(\chi_\ell h \chi_\ell),\Aj{\q,t}{\alpha\beta} ] \prod_{\substack{\j \in \Jf \\ \j \not= \k,\q}}  (\Id + \Aj{\j,t}{\alpha\beta}) \Omega \bigg), \label{eq:dG dG psi}
	\end{align}
	and
	\begin{align}
		[\dG(\chi_\ell h \chi_\ell), \Aj{\k,t}{\alpha\beta} ]                                & = a^*(\chi_\ell h \chi_\ell f^\alpha_{\k,t}) a^*(f^\beta_{\k,t}) + a^*(f^\alpha_{\k,t}) a^*(\chi_\ell h \chi_\ell f^\beta_{\k,t}), \label{eq:dG A commutator}                               \\
		[\dG(\chi_\ell h \chi_\ell),[\dG(\chit_\ell h \chit_\ell), \Aj{\k,t}{\alpha\beta} ]] & = a^*(\chi_\ell h \chi_\ell \chit_\ell h \chit_\ell f^\alpha_{\k,t}) a^*(f^\beta_{\k,t}) + a^*(\chit_\ell h \chit_\ell f^\alpha_{\k,t}) a^*(\chi_\ell h \chi_\ell f^\beta_{\k,t}) \nonumber \\ &\qquad + a^*(\chi_\ell h \chi_\ell f^\alpha_{\k,t}) a^*(\chit_\ell h \chit_\ell f^\beta_{\k,t}) +  a^*(f^\alpha_{\k,t}) a^*(\chi_\ell h \chi_\ell \chit_\ell h \chit_\ell f^\beta_{\k,t}) . \label{eq:dG dG A commutator}
	\end{align}
	Then we write
	\begin{align}\label{eq:chi_h_chi_ft}
		\chi_\ell h \chi_\ell  = \chi_\ell^2 h + \chi_\ell [-\Delta, \chi_\ell]  = \chi_\ell^2 h  + \chi_\ell (-\Delta \chi_\ell)   +  \chi_\ell  (\nabla \chi_\ell)\cdot \nabla,
	\end{align}
	and claim  that there exist smooth functions $\xi_0, \xi_1$ independent of $\ell$ with $\supp \xi_j \subseteq \supp \chi$ such that
	\begin{align}
		\label{eq:xhx f est}
		\norm{ \chi_\ell h \chi_\ell f^\alpha_{\k,t} } \leq \sum_{j=0}^1 \ell^{-2(1-j)} \norm{ \xi_j(\cdot / \ell) e^{-\i t h} h^j f^\alpha_{\k} }.
	\end{align}
	This is obvious for the first two terms on the right hand side of \eqref{eq:chi_h_chi_ft}. In order to see \eqref{eq:xhx f est} also for the last operator in \eqref{eq:chi_h_chi_ft} applied to $f^\alpha_{\k,t}$, we use \Cref{th:L2 norm estimate lemma} and get
	\begin{align}
		\label{eq:lemma applied}
		\norm{ \chi_\ell  (\nabla \chi_\ell)\cdot \nabla f^\alpha_{\k,t}  } & \leq C_d \sum_{i=1}^d	\left(
		\sqrt{ \| \chi_\ell (\partial_i \chi_\ell) f^\alpha_{\k,t}\| \| \chi_\ell (\partial_i \chi_\ell)  \Delta f^\alpha_{\k,t}\| }
		+\|\nabla (\chi_\ell (\partial_i \chi_\ell) )\, f^\alpha_{\k,t}\|
		\right).
	\end{align}
	Notice that chain rule gives $\partial_i \chi_\ell = \ell^{-1} (\partial_i \chi)(\cdot / \ell)$. Hence, the second term in \eqref{eq:lemma applied} scales like $\ell^{-2}$, so it can be written in the form of the summand $j=0$ in \eqref{eq:xhx f est}. The first one can be estimated as follows, using $\sqrt{ab} \leq \frac{a}{2} + \frac{b}{2}$ and the fact that $V$ is bounded,
	\begin{align*}
		\sqrt{ \| \chi_\ell (\partial_i \chi_\ell) f^\alpha_{\k,t}\| \| \chi_\ell (\partial_i \chi_\ell)  \Delta f^\alpha_{\k,t}\| } & = \sqrt{ \ell^{-2} \| \chi_\ell (\partial_i \chi)(\cdot / \ell) f^\alpha_{\k,t}\| \| \chi_\ell (\partial_i \chi)(\cdot / \ell)  \Delta f^\alpha_{\k,t}\| }                                                                                                       \\
		                                                                                                                             & \leq \frac{1}{2}\left( \ell^{-2} \| \chi_\ell (\partial_i \chi)(\cdot / \ell) f^\alpha_{\k,t}\| + \| \chi_\ell (\partial_i \chi)(\cdot / \ell)  \Delta f^\alpha_{\k,t}\|  \right)                                                                                \\
		                                                                                                                             & \leq \frac{1}{2}\left( \ell^{-2} \| \chi_\ell (\partial_i \chi)(\cdot / \ell) f^\alpha_{\k,t}\| +  \nn{V}_\infty  \| \chi_\ell (\partial_i \chi)(\cdot / \ell)  f^\alpha_{\k,t}\|  + \| \chi_\ell (\partial_i \chi)(\cdot / \ell)  h f^\alpha_{\k,t}\|  \right).
	\end{align*}
	Again we see that the first and second term are of the form as the summand for $j= 0$ on the right-hand side of \eqref{eq:xhx f est}, whereas the last term is of the form as the summand for $j= 1$. In conclusion, we have shown that \eqref{eq:xhx f est} holds.


	With a similar argument, one can show that
	\begin{align}
		\label{eq:chitchitchichi est}
		\norm{ \chit_\ell h \chit_\ell \chi_\ell h \chi_\ell f^\alpha_{\k,t} } \leq \sum_{j=0}^2 \ell^{-2(2-j)} \norm{ \xi_j(\cdot / \ell) e^{-\i t h} h^j f^\alpha_{\k} },
	\end{align}
	where $\xi_j$, $j=0,1,2$, are again smooth functions not depending on $\ell$ with $\supp \xi_j \subseteq \supp \chit$.

	Now, combine \Cref{eq:dG psi} and (\ref{eq:dG A commutator}). Then we group the triples indexed by $\kb$ according to the integer supremum metric to the triple where $\chi$ is supported. To this end, let $d^\infty_\kb := \dist_\infty(\supp \chi,[0,3]^d + \kb)$ denote the distance between the support of $\chi$ and the box $[0,3]^d + \kb$ where the unscaled triple (i.e., for $\ell=1$) at position $\kb$ is contained in, and note that it has values in $\mathbb N_0$. We get
	\begin{align}
		\label{eq:dG psit est}
		\norm{ \dG(\chi_\ell h \chi_\ell) \psi^{\alpha\beta}_t  } & \leq  \frac{1}{2} \sum_{\k \in \Jf}  \left(\norm{\chi_\ell h \chi_\ell f^\alpha_{\k,t} }   + \norm{\chi_\ell h \chi_\ell f^\beta_{\k,t} }  \right)
		\leq  \frac{1}{2} \sum_{r=0}^\infty \sum_{\substack{\k \in \Jf,                                                                                                                                                \\ r=d^\infty_\kb}}  \left(\norm{\chi_\ell h \chi_\ell f^\alpha_{\k,t} }   + \norm{\chi_\ell h \chi_\ell f^\beta_{\k,t} }  \right).
	\end{align}
	The terms in the sum can be estimated with \eqref{eq:xhx f est} and the LRBs of \Cref{th:modified lrb}, which gives
	\begin{align*}
		\norm{\chi_\ell h \chi_\ell f^\alpha_{\k,t} }   + \norm{\chi_\ell h \chi_\ell f^\beta_{\k,t}}  \leq  C \ell^{-2n -3}  \cs t^{1/2}  \left(1 \wedge \frac{\cs t}{r}\right)^{n}
	\end{align*}
	for some constant $C$ (depending on $n,d$ but not on $\ell$), which will now change from line to line.
	The number of triple boxes such that $r=d^\infty_\kb$ behaves like $\cO((r+1)^{d-1})$. Thus, we get, setting $n = d +1$,
	\begin{align*}
		\norm{ \dG(\chi_\ell h \chi_\ell) \psi^{\alpha\beta}_t  } & \leq  C \ell^{-2(d+1)-3}  \cs t^{1/2+d+1}  \sum_{r=0}^\infty   \frac{(r+1)^{d-1}}{1 \wedge r^{d+1}}
		\leq  C \ell^{-2d -5}  \cs t^{d+3/2} ,
	\end{align*}
	where the series converges for our choice of $n$.

	Next, we use \Cref{eq:dG dG psi,eq:dG dG A commutator} and consider
	\begin{align}
		\norm{ \dG(\chi_\ell h \chi_\ell) \dG(\chit_\ell h \chit_\ell) \psi^{\alpha\beta}_t  } & \leq 2^{-1/2} \sum_{\k \in \Jf} \norm{	[\dG(\chi_\ell h \chi_\ell),[\dG(\chit_\ell h \chit_\ell), \Aj{\k,t}{\alpha\beta} ]]} \nonumber                                                                                                              \\
		                                                                                       & \qquad + 2^{-1} \sum_{\k,\q \in \Jf} \norm{ [\dG(\chit_\ell h \chit_\ell),\Aj{\k,t}{\alpha\beta} ]}  \norm{ [\dG(\chi_\ell h \chi_\ell),\Aj{\q,t}{\alpha\beta}] } \nonumber                                                                         \\
		                                                                                       & \leq 2^{-1/2} \cdot 2  \sum_{\k \in \Jf} \left( \norm{ \chi_\ell h \chi_\ell f^\alpha_{\k,t}} \norm{ \chit_\ell h \chit_\ell f^\beta_{\k,t}} +  \norm{ \chi_\ell h \chi_\ell \chit_\ell h \chit_\ell f^\beta_{\k,t}} \right) \label{eq:dGdG term12} \\
		                                                                                       & \qquad +  2^{-1} \sum_{\k,\q \in \Jf} \norm{ [\dG(\chit_\ell h \chit_\ell),\Aj{\k,t}{\alpha\beta} ]}  \norm{ [\dG(\chi_\ell h \chi_\ell),\Aj{\q,t}{\alpha\beta}] } \label{eq:dGdG term3}
	\end{align}
	The term \eqref{eq:dGdG term3} can be bounded with the same estimate as before by the square, i.e.,  $C (  \ell^{-2d -5}  \cs t^{d+3/2})^2$. The second term in \Cref{eq:dGdG term12} can be treated in the same way as the previous estimate using \Cref{eq:chitchitchichi est} and thus differs from the previous estimate only by a factor of $\ell^{-2}$ due to an additional operator $h$. This means that it can be estimated by $C   \ell^{-2d -7}  \cs t^{d+3/2}$.  The first term in \Cref{eq:dGdG term12} can be bounded from above by
	\begin{align*}
		\sum_{\k \in \Jf}  \norm{ \chi_\ell h \chi_\ell f^\alpha_{\k,t}} \norm{ \chit_\ell h \chit_\ell f^\beta_{\k,t}} & \leq   \frac{1}{2}\sum_{\k \in \Jf}  \left( \norm{ \chi_\ell h \chi_\ell f^\alpha_{\k,t}}^2 +  \norm{ \chit_\ell h \chit_\ell f^\beta_{\k,t}}^2 \right) \leq  C \ell^{-4n-6}  \cs t^{1+2n}  \sum_{r=0}^\infty   \frac{(r+1)^{d-1}}{1 \wedge r^{2n}},
	\end{align*}
	where the only difference to the first estimate are the squares inside the sum. The series converges if we set $n= \lceil (d+1)/2 \rceil$, i.e., $d+1 \leq 2n \leq d+2$. Hence, the whole term can be estimated by $C \cs t^{d+3} \ell^{-2d -10}$.

	Finally, by collecting all estimates, only tracking the worst $\ell$-dependence (using that $\ell \leq 1$), we arrive at
	\begin{align*}
		\norm{ \dG(\chi_\ell h \chi_\ell) \dG(\chit_\ell h \chit_\ell) \psi^{\alpha\beta}_t  }  \leq C \ell^{-4d -10} \cs{t}^{2d+3}, \\
		\norm{ \dG(\chi_\ell h \chi_\ell) \psi^{\alpha\beta}_t} \norm{ \dG(\chit_\ell h \chit_\ell) \psi^{\alpha\beta}_t}   \leq C \ell^{-4d -10} \cs{t}^{2d+3} .
	\end{align*}
	This together with \Cref{eq:E2t est} proves the result (b).

	Finally, let us prove result (a), where we assume $\abs{\Jf} =1$, but the only condition we impose on the potential $V$ is that it is relatively bounded with respect to the Laplacian.  Recall that we defined $P^{\alpha\beta}_\j =\ketbra{\Omega}{\Omega}_{\widetilde B} \otimes \Id_{\FF(L^2(\IR^d \setminus \widetilde B))}$, where $\widetilde B$ denotes the three boxes corresponding to $\j$ united with all its neighboring boxes. Then we can write
	\begin{align*}
		\dG(\chi_\ell h \chi_\ell) = \dG(\chi_\ell h \chi_\ell) \otimes \Id, \qquad (W^{\alpha\beta}_\j)^* P^{\alpha\beta}_\j W^{\alpha\beta}_\j = \ketbra{(W^{\alpha\beta}_\j)^* \Omega}{(W^{\alpha\beta}_\j)^* \Omega} \otimes \Id,
	\end{align*}
	where the first tensor factor acts on $\FF(L^2(\widetilde B))$ and the second one on $\FF(L^2(\IR^d \setminus \widetilde B))$.
	This allows us to obtain a direct expression for the double commutator appearing in $E_{2,t}$ and an alternative to its estimate \eqref{eq:E2t est}.
	Namely, we get
	\begin{align*}
		[\dG(\chi_\ell h \chi_\ell) ,(W^{\alpha\beta}_\j)^* P^{\alpha\beta}_\j W^{\alpha\beta}_\j] = \ketbra{\dG(\chi_\ell h \chi_\ell) (W^{\alpha\beta}_\j)^* \Omega}{(W^{\alpha\beta}_\j)^* \Omega} \otimes \Id - \ketbra{(W^{\alpha\beta}_\j)^* \Omega}{\dG(\chi_\ell h \chi_\ell) (W^{\alpha\beta}_\j)^* \Omega}    \otimes \Id .
	\end{align*}
	Thus,
	\begin{align}
		\label{eq:E2t est J1}
		\abs{E_{2,t}} & \leq 2 \nn{\dG(\chit_\ell h \chit_\ell) \psi^{\alpha\beta}_t } \nn{ [\dG(\chi_\ell h \chi_\ell) ,(W^{\alpha\beta}_\j)^* P^{\alpha\beta}_\j W^{\alpha\beta}_\j]  \psi^{\alpha\beta}_t} \leq 4 \nn{\dG(\chit_\ell h \chit_\ell) \psi^{\alpha\beta}_t }  \nn{\dG(\chi_\ell h \chi_\ell) (W^{\alpha\beta}_\j)^* \Omega}.
	\end{align}
	The second factor on the right hand side of \eqref{eq:E2t est J1} can be estimated with \eqref{eq:xhx f est} by
	\begin{align*}
		\nn{\dG(\chi_\ell h \chi_\ell) (W^{\alpha\beta}_\j)^* \Omega} = \nn{\dG(\chi_\ell h \chi_\ell) a^*(f^\alpha_{\j,t})  a^*(f^\beta_{\j,t})\Omega} \leq \nn{\chi_\ell h \chi_\ell f^\alpha_{\j,t}} + \nn{\chi_\ell h \chi_\ell f^\beta_{\j,t}}  \leq C \ell^{-2},
	\end{align*}
	where we bound $\xi_j(\cdot / \ell)$ by its supremum norm, use $\nn{ e^{-\i t h} } \leq 1$, and and the fact that $V$ is relatively bounded with respect to $-\Delta$,
	\begin{align*}
		\nn{h f^\alpha_{\j,t}} \leq C_1 \nn{(-\Delta+1) f^\alpha_{\j,t}} \leq C_2 \ell^{-2}, \qquad C_1,C_2 > 0.
	\end{align*}
	The first factor can be treated with the first inequality of \eqref{eq:dG psit est} and \eqref{eq:xhx f est} again, which yields the same bound $C \ell^{-2}$. This shows the statement (a).
\end{proof}
Now, the proof of the main result of this section follows by combining \Cref{th:first_order,th:second order}.
\begin{proof}[Proof of Theorem \ref{th:final_fVf_D_estimate}]
	Through a Taylor expansion with a second order remainder term, we can write
	\begin{align*}
		\dDd{t}{\j}{\alpha \beta} & \coloneqq \frac{ \Dd_\j^{\alpha \beta} (t) -  \Dd_\j^{\alpha \beta} (0)  }{t}  =    \frac{d}{dt}\bigg|_{t=0} \Dd_\j^{\alpha \beta} (t) + t \frac{d^2}{d t^2}\bigg|_{t = t_{\alpha \beta}}  \Dd_\j^{\alpha \beta} (t)
	\end{align*}
	for some numbers  $t_{\alpha \beta} \in \IR$, thus,
	\begin{align}
		\label{eq:first dev formula}
		\frac{d}{dt}\bigg|_{t=0} \Dd_\j^{\alpha \beta} (t) =  \dDd{t}{\j}{\alpha \beta}  - t \frac{d^2}{d t^2}\bigg|_{t = t_{\alpha \beta}}  \Dd_\j^{\alpha \beta} (t).
	\end{align}
	Using \Cref{eq:first dev formula} in \Cref{th:first_order} yields

	\begin{align*}
		\cs{f^\alpha_\j, V f^\alpha_\j}  =   \sum_{1\leq \beta < \gamma \leq 3} 	\sigma_\alpha(\beta, \gamma)  \left( \dDd{t}{\j}{\alpha \beta}  - t \frac{d^2}{d t^2}\bigg|_{t = t_{\alpha \beta}}  \Dd_\j^{\alpha \beta} (t) \right) -  2\ell^{-2}\cs{ f, (-\Delta) f }.
	\end{align*}
	Then we use \Cref{th:second order} to estimate the second derivatives.
\end{proof}
In the proof we were using the following elementary inequality to estimate single gradients by the Laplacian.
\begin{lemma}
	\label{th:L2 norm estimate lemma}
	There exists a universal constant $C$ such that
	for all $\eta \in W^{1,\infty}(\IR^d)$ (i.e., with an essentially bounded first derivative) and $f \in H^2(\IR^d)$,
	\begin{equation}\label{eq:localized-gradient}
		\|\eta \nabla f\|
		\leq
		C\Big(
		\sqrt{\|\eta f\| \|\eta \Delta f\|  }
		+\|(\nabla \eta) f\|
		\Big),
	\end{equation}
	where $\nn{\cdot}$ denotes the $L^2$-norm.

	In particular, there exists a constant $C_d$ depending only on the dimension $d$ such that	for all $\boldsymbol{\eta} = (\eta_1, \ldots, \eta_d)$ with $\eta_i \in W^{1,\infty}(\IR^d)$ and $f \in H^2(\IR^d)$,
	\begin{equation}\label{eq:localized-gradient2}
		\|(\boldsymbol\eta \cdot \nabla) f\|
		\;\le\;
		C_d \sum_{i=1}^d
		\Big(
		\sqrt{ \|\eta_i \, f\| \|\eta_i \,\Delta f\| }
		+\|(\nabla \eta_i)\, f\|
		\Big).
	\end{equation}

\end{lemma}
\begin{proof}
	Integration by parts yields
	\[
		\|\eta \nabla f\|^2
		= \int_{\mathbb{R}^d} \eta^2 |\nabla f|^2
		= -\Re \int_{\mathbb{R}^d} \eta^2 f \overline{\Delta f}
		- 2 \Re \int_{\mathbb{R}^d} \eta f \nabla \eta \cdot \overline{\nabla f}.
	\]
	Applying Cauchy--Schwarz,
	\[
		\|\eta \nabla f\|^2
		\le
		\|\eta f\|\|\eta \Delta f\|
		+ 2 \|\eta \nabla f\|\|(\nabla \eta) f\|.
	\]
	Using the inequality $2ab \le \tfrac12 a^2 + 2 b^2$ on the second term,
	we obtain
	\[
		\|\eta \nabla f\|^2
		\le
		\|\eta f\|\|\eta \Delta f\|
		+ \tfrac12 \|\eta \nabla f\|^2
		+ 2 \|(\nabla \eta) f\|^2.
	\]
	Rearranging terms yields
	\[
		\|\eta \nabla f\|^2
		\leq
		2\|\eta f\| \|\eta \Delta f\|  + 4 \|(\nabla \eta) f\|^2.
	\]
	Taking square roots gives \eqref{eq:localized-gradient}. The second bound \eqref{eq:localized-gradient2} follows from \eqref{eq:localized-gradient} using the triangle inequality.
\end{proof}

\section{Background}

\subsection*{Second Quantization Formalism and Rigorous Definition of the Hamiltonian}
\label{appx-sec:technical-details}

In this section we briefly recall the second quantization (Fock space) formalism in the continuum and provide a rigorous definition of our non-interacting fermionic many-body Hamiltonian in \Cref{eq:Hamiltonian physically}. For more details we refer the reader to \cite[Section 5.2.1.]{BR2} or \cite[Chapter 6]{Arai.2018}.

The Hilbert space for our model is the fermonic Fock space over $L^2(\IR^d)$,
\[ \FF \coloneqq \IC \oplus \bigoplus_{k=1}^\infty (L^2(\R^{d}))^{\wedge k}  =  \IC \oplus \bigoplus_{k=1}^\infty L^2_\sfa(\R^{d\cdot k}) \]
where $\wedge k$ denotes the $k$-times anti-symmetric tensor product and $L^2_\sfa(\R^{d\cdot k})$  the $L^2$-space where the anti-symmetrization in each summand occurs over the $k$ $d$-dimensional variables. Elements $\psi \in \FF$ can be written as sequences
\[
	\psi = (\psi^{(k)})_{k \in \IN_0 }, \qquad \psi^{(0)} \in \IC,~ \psi^{(k)} \in L^2_\sfa(\R^{d\cdot k}),~k\geq 1.
\]
The vacuum vector is defined as $\Omega = (1,0,0,0,\ldots)$.

Let $a(f)$, $f\in L^2(\R^d)$ be the fermionic annihilation operator
\[ a(f)(\psi^{(k)})_{k\in\IN_0} \coloneqq {\sqrt{k+1}}\left(\int \overline {f(x)} \psi^{(k+1)}(x,\cdots)\d x\right)_{k\in\IN_0}.  \]
This defines a bounded operator on $\FF$ with $\nn{a(f)}=\nn f$, $a(f) \Omega = 0$ and, denoting its adjoint as $a^*(f)\coloneqq a(f)^*$, we have the usual canonical anticommmutation relations (CAR)
\[ \{a(f),a(g)\}=\{a^*(f),a^*(g)\}=0, \quad \{a(f),a^*(g)\}=\braket{f,g} \qquad\mbox{for all}\ f\in  L^2(\R^d).\]
In particular, this implies $\nn{a^*(f)} = \nn f$ and
\begin{align}
	\label{eq:a products}
	\nn{ a^*(f_1) \cdots a^*(f_n) } = 1
\end{align}
for any orthonormal set $f_1, \ldots, f_n$. \Cref{eq:a products}. In particular, this implies that the initial states defined in \Cref{eq:psi0_defn} are actually normalized,
\begin{align}
	\label{eq:normalization}
	\norm{\prod_{j \in \Jf} (\Id +a^*(f^\alpha_{\j})  a^*(f^\beta_{\j})) \Omega}^2 = \norm{\sum_{\mathfrak{S} \subseteq \Jf} \prod_{\j \in \mathfrak{S}} a^*(f^\alpha_{\j})  a^*(f^\beta_{\j}) \Omega }^2 = \sum_{\mathfrak{S} \subseteq \Jf}  \norm{\prod_{\j \in \mathfrak{S}} a^*(f^\alpha_{\j})  a^*(f^\beta_{\j}) \Omega }^2 = \abs{\mathfrak{P}(\Jf)} = 2^{\abs{\Jf}},
\end{align}
where we expanded the product in the first equality: every factor is either $\Id$ ($\j \not\in \Sigma$) or $a^*(f^\alpha_{\j})  a^*(f^\beta_{\j})$ ($\j \in \Sigma$), and used the orthogonality of the states in the second equality.

Given a self-adjoint operator $A$ on $L^2(\R^d)$, we define its second quantization (lift to the Fock space) $\dG(A)$ on $\FF$ as
\[
	\dG(A) = \overline{ 0 \oplus \bigoplus_{k=1}^\infty \sum_{l=1}^{k}  \Id \otimes \cdots \otimes \underbrace{A}_{l-\text{th position}} \otimes \cdots \otimes \Id}
\]
where the bar denotes the self-adjoint closure.

Now, the non-interacting Hamiltonian \Cref{eq:Hamiltonian physically} can be rigorously defined an unbounded self-adjoint operator on $\FF$ by
\begin{align*}
	H = \dG(-\Delta + V).
\end{align*}

Heuristically, one can also write
\[
	a(f) = \int \overline{ f(x) } a_x \d x, \qquad a^*(f) = \int f(x) a^*_x \d x
\]
with the formal `pointwise operators' $a_x$, $a^*_x$ satisfying the following pointwise CAR:
\begin{align*}
	\{a_x, a_y\} = \{ a_x^*, a_y^*\} = 0, \qquad \{a_x,a^*_y\} = \delta(x-y).
\end{align*}
Using these relations one can formally compute the smeared-out CAR relations above.
Furthermore, one can easily derive the following commutation rules between $\dG(A)$ and $a(f)$,
\begin{align}
	\label{eq:af dG commutation relations}
	[a(f), \dG(A)]                          & = a(Af),                      \\
	e^{\i t \dG(A)} a^*(f) e^{-\i t \dG(A)} & = a^*(e^{i t A} f). \nonumber
\end{align}
They can be proven rigorously with the above definitions or can be seen heuristically with the pointwise CAR as well.

Next, note for the projector $P^{\alpha\beta}_\j$ defined in \Cref{eq:pvm},
\begin{align*}
	P^{\alpha\beta}_\j = \ketbra{\Omega}{\Omega}_{L^2(B_\j^\alpha \otimes B_\j^\beta)} \otimes \Id_{\text{Rest}},
\end{align*}
we use the decomposition of the fermionic Fock space (cf.~\cite[Theorem 6.21]{Arai.2018}),
\begin{align}
	\label{eq:fock decomposition}
	\FF\left( \bigoplus_{\j}  L^2(B_\j) \right) \cong \bigotimes_{\j} 	\FF\left( L^2(B_\j) \right),
\end{align}
and the projector to the vacuum only acts on the two tensor factors which are referenced by $\j,\alpha$ and $\j,\beta$.

Finally, we give a rigorous proof of the action of the discplacment operators on the vacuum,   \Cref{eq:controlled-displacement}.
\begin{lemma}
\label{th:controlled-displacement}
    Let $f,g \in L^2(\IR^d)$ be two normalized functions, which are perpendicular to each other, $\cs{f,g} = 0$. Then the operator $A = a^*(f) a^*(g) + a(g) a(f)$ is a self-adjoint bounded operator in $\FF$, so $e^{-\i \pi/4 A}$ is a well-defined operator as well. It acts on the vacuum as
    \begin{align*}
        e^{-\i \pi/4 A} \Omega =  2^{-1/2}\bigl(\Id  - \i  a^*(f) a^*(g)\bigr) \Omega.
    \end{align*}
\end{lemma}
\begin{proof}
    Consider $\psi_1 = (1 + a^*(f) a^*(g))\Omega$ and $\psi_2= (1 - a^*(f) a^*(g))\Omega$. From  the CAR, it is straightforward to see that $A \psi_1 = \psi_1$ and $A\psi_2 = - \psi_2$. Hence it follows with functional calculus
    	\begin{align*}
			e^{-\i \pi/4 A} \psi_1 =  e^{-\i \pi/4 } \psi_1  = \frac{1}{\sqrt 2}(1 - \i) \psi_1, \qquad    e^{-\i \pi/4 A} \psi_2 = e^{i \pi/4} \psi_2 = \frac{1}{\sqrt 2}(1 + \i) \psi_2
		\end{align*}
		and thus,
		\[
			e^{-\i \pi/4 A} \Omega = \frac{1}{2}e^{-\i \pi/4 A} (\psi_1 + \psi_2)  =  \frac{1}{2 \sqrt 2} ( (1-\i) \psi_1 + (1+\i) \psi_2) = \frac{1}{\sqrt 2} ( 1 -  \i a^*(f) a^*(g)) \Omega. \qedhere
		\]
\end{proof}

\section{Numerics for Coulomb Post-Processing}\label{appx-sec:numerics}

In this section, we present exemplary runs of the single and multi-Coulomb post-processing algorithms. Before explaining the setup, we recall that the corresponding code is available on GitHub and Zenodo under the following links: \href{https://github.com/MitTimM/learning_coulomb_potentials_in_continuous_space}{GitRepo} and \cite{DataZenodo.2026}.

For simplicity, we choose $\lambda \in (1,2)$ at random and $y \in [0,1]^3$ at random.

We test the post-processing algorithm on data generated by numerically computing the local averages $\omega_{\jmath}$ using Gauss–Legendre quadrature with between 24 and 40 nodes, depending on the desired numerical accuracy and computational performance. In addition, an adaptive quadrature scheme is implemented for computing the local averages. Gaussian noise is then added pointwise to simulate noisy data.

\begin{remark}
	The code presented in \href{https://github.com/MitTimM/learning_coulomb_potentials_in_continuous_space}{GitRepo} and \cite{DataZenodo.2026} is a direct realization of the algorithm analyzed in this paper. It is clear that the bounds achieved in the paper cannot be used for many applications in practice, which is why manual control in the code is allowed. Hence, better estimators for these bounds would be desirable. Moreover, a further speedup or adaptive error bounds are interesting technical directions for future work.
\end{remark}

\end{document}